\tikzstyle{vecArrow} = [thick, decoration={markings,mark=at position
\tikzstyle{innerWhite} = [semithick, white,line width=1.4pt, shorten >= 4.5pt]
\newcommand\org@hypertarget{}
\let\org@hypertarget\hypertarget
\renewcommand\hypertarget[2]{%
  \Hy@raisedlink{\org@hypertarget{#1}{}}#2%
  }
\definecolor{darkgreen}{RGB}{50,190,50}
\definecolor{darkblue}{RGB}{0,0,190}
\definecolor{darkred}{RGB}{238,0,0}
\definecolor{quantum}{RGB}{83,37,127}
\definecolor{quantumlight}{RGB}{169,146,191}
\newcommand{\ket}[1]{\ensuremath{\left|\right.\!{#1}\!\left.\right\rangle}}
\newcommand{\braket}[2]{\ensuremath{\langle{#1}|{#2}\rangle}}
\newcommand{\ketbra}[2]{\ensuremath{|{#1}\rangle\!\langle{#2}|}}
\newcommand{\tr}{\textnormal{Tr}}
\declaretheorem[]{corollary,lemma,proposition,definition}
\newtheorem*{hypothesis*}{Past Hypothesis}
\newtheorem*{main*}{Equilibrium entropy}
\begin{document}


\title{Emergence of a second law of thermodynamics in isolated quantum systems}

\author{Florian Meier}
\email[]{florianmeier256@gmail.com}
\affiliation{Atominstitut, Technische Universit{\"a}t Wien, 1020 Vienna, Austria}

\author{Tom Rivlin}
\email{tom.rivlin@tuwien.ac.at}
\affiliation{Atominstitut, Technische Universit{\"a}t Wien, 1020 Vienna, Austria}

\author{Tiago Debarba}
\affiliation{Atominstitut, Technische Universit{\"a}t Wien, 1020 Vienna, Austria}
\affiliation{Departamento Acad{\^ e}mico de Ci{\^ e}ncias da Natureza, Universidade Tecnol{\'o}gica Federal do Paran{\'a} (UTFPR), Campus Corn{\'e}lio Proc{\'o}pio, Avenida Alberto Carazzai 1640, Corn{\'e}lio Proc{\'o}pio, Paran{\'a} 86300-000, Brazil.}

\author{Jake Xuereb}
\affiliation{Atominstitut, Technische Universit{\"a}t Wien, 1020 Vienna, Austria}

\author{Marcus Huber}
\email{marcus.huber@tuwien.ac.at} 
\affiliation{Atominstitut, Technische Universit{\"a}t Wien, 1020 Vienna, Austria}
\affiliation{Institute for Quantum Optics and Quantum Information - IQOQI Vienna, Austrian Academy of Sciences, Boltzmanngasse 3, 1090 Vienna, Austria}

\author{Maximilian P.~E. Lock}
\email{maximilian.paul.lock@tuwien.ac.at} 
\affiliation{Atominstitut, Technische Universit{\"a}t Wien, 1020 Vienna, Austria}
\affiliation{Institute for Quantum Optics and Quantum Information - IQOQI Vienna, Austrian Academy of Sciences, Boltzmanngasse 3, 1090 Vienna, Austria}

\date{\today}

\begin{abstract}
The second law of thermodynamics states that the entropy of an isolated system can only increase over time.
This appears to conflict with the reversible evolution of isolated quantum systems under the Schrödinger equation, which preserves the von Neumann entropy.
Nonetheless, one finds that with respect to many observables, expectation values approach a fixed value -- their equilibrium value.
This ultimately raises the question: \textit{in what sense does the entropy of an isolated quantum system increase over time}?
For classical systems, one introduces the assumption of a low entropy initial state along with the concept of ignorance about the microscopic details of the physical system, leading to a statistical interpretation of the second law.
By considering the observables through which we examine quantum systems, both these assumptions can be incorporated, building upon recent studies of the \textit{equilibration on average} of observables.
While the statistical behavior of observable expectation values is well-established, a quantitative connection to entropy increase has been lacking so far.
In deriving novel bounds for the equilibration of observables, and considering the entropy of the system relative to observables, we recover a variant of the second law: the entropy with respect to a given observable tends towards its equilibrium value in the course of the system's unitary evolution.
These results also support recent findings which question the necessity of non-integrability for equilibration in quantum systems.
We further illustrate our bounds using numerical results from the paradigmatic example of a quantum Ising model on a chain of spins. There, we observe entropy increasing up to equilibrium values, as well as fluctuations which expose the underlying reversible evolution in accordance with the derived bounds.
\end{abstract}

\maketitle

\raggedbottom

\section{\label{sec:introduction}Introduction}
Irreversible evolution observed in nature is characterized by the second law of thermodynamics~\cite{Lieb1999,Thess2011,Ng2018,Kammerlander2019}, and is thus quantitatively captured by the increase in entropy over time of an isolated system.
From the perspective of microscopic physics, however, all laws are time-reversal invariant and, thus, if a process is allowed by these laws, its reverse process must be, too. This means that entropy could, in principle, also decrease.
This apparent conflict between the reversibility of physics on the microscopic level and the irreversibility of macroscopic physics, sometimes called \textit{Loschmidt's paradox}, has been debated since the earliest days of classical statistical mechanics~\cite{zwanzig1970concept,steckline1983zermelo,brown2009boltzmann,Wehrl1978}.
As for the entropy, for classical systems, Liouville's theorem implies that the Gibbs entropy formula is constant under Hamiltonian evolution~\cite{callender1999reducing},
and also for quantum systems, the unitary evolution generated by the Schr\"{o}dinger equation preserves the von Neumann entropy.
In this article, we address this issue from the perspective of quantum systems, determining the sense in which entropy can increase over time in isolated quantum systems.

Historically, resolutions to this conflict for classical systems have typically involved two components. The first component is an assumption of a low-entropy initial state, the \textit{past hypothesis}~\cite{boltzmann1964,albert2000time}. 
The second component corresponds to a loss or lack of information~\cite{jaynes1957information,jaynes1957information2,zwanzig1970concept}, for example via the \textit{Stosszahlansatz} in Boltzmann's H-theorem~\cite{brown2009boltzmann}, or by a coarse-graining of the phase space~\cite{jaynes1957information,jaynes1957information2,isihara2013statistical}.
The modern consensus is that, while the microscopic physics is reversible, the second law of thermodynamics holds on average, meaning that for most of the time, an isolated system looks the same -- like the so-called  \textit{equilibrium average}.
Decreases in entropy occur with a probability which vanishes exponentially with the magnitude of this decrease, as captured by the celebrated fluctuation relations~\cite{evans1993probability,crooks1998nonequilibrium,wang2002experimental,penrose2005foundations,grandy2008entropy,Parrondo2015}.

With regards to quantum mechanics, despite substantial theoretical arguments~\cite{tasaki1998quantum,Polkovnikov2011,Ikeda2015,Reimann2012,Short_2012,gogolin2016equilibration,Lent2019,Strasberg2021,Teufel2023,Nagasawa2024} and experimental evidence~\cite{trotzky2012probing,langen2013local,Langen2015,trotzky2012probing} for the equilibration of isolated quantum systems, no consensus exists about \textit{how} isolated quantum systems equilibrate in the sense of something akin to the second law, as unitary evolution preserves the von Neumann entropy.
Progress has been made in showing that, for example, the diagonal entropy of closed, driven (i.e. energy non-conserving) quantum systems at some time $t>0$ is non-decreasing from its value at $t=0$~\cite{Polkovnikov2011,Ikeda2015}, or that the observational entropy of most states in the Hilbert space is close to the maximum~\cite{Nagasawa2024}.
However, a proof that the entropy of an isolated system tends to an equilibrium value over time, and a general exposition of the conditions under which this occurs, has been missing.
We fill this gap using the modern theory of equilibration of isolated quantum systems~\cite{Reimann2012,Short_2012} showing how the second law of thermodynamics is satisfied \textit{on average}.

In this framework, the system is examined relative to an arbitrary observable represented by a self-adjoint operator $O=\sum_{i=1}^r O_i \Pi_i^O$, where $r$ is the number of possible measurement outcomes.
Assuming finite Hilbert space dimension $d$ (with $r<d$) and identifying the measurement outcomes with macrostates introduces, in a sense which we will quantify, the necessary lack of information, and ensuing emergent irreversibility.

When we access the state $\rho$ via the observable $O$, the available information about the system is represented by the probability distribution $\vec p = (p_1,\dots,p_r),$ where $p_i = \tr\left[\Pi_i^O \rho\right]$ are the populations of $\rho$ with respect to the orthogonal projectors $\Pi_i$ of $O$.
In place of the usual von Neumann entropy $S_\mathrm{vN}[\rho]=-\tr[\rho \log \rho]$, which is invariant under the unitary evolution generated by the Schr\"{o}dinger equation $\dot \rho(t) = -i[H,\rho(t)]$ with time-independent Hamiltonian $H$, the information obtained about the system is then quantified by the Shannon entropy of the observable's probability distribution -- its \textit{Shannon observable entropy} 
\begin{equation} \label{eq:ShanObsEnt}
    S_{\mathrm{Sh}}^O[\rho] = -\sum_{i=1}^r p_i \log p_i,
\end{equation}
as in~\cite{Ingarden1976}.
Note that instead of the projective valued measure (PVM) associated with the observable $O$ as we use it in the main text, one can define the probability distribution $\vec p$, with respect to a positive operator valued measure (POVM) $(E_i)_{i=1}^r$ and all the following results follow analogously (see Appendix~\ref{appendix:replacingPOVMs}).

Unlike the von Neumann entropy, the Shannon observable entropy can change in time for isolated systems. Indeed, we find that over the course of the evolution of an isolated system, the Shannon observable entropy is close to that of the infinite-time-averaged state
\begin{align}
\label{eq:def_omega}
    \omega=\lim_{T\rightarrow\infty}\frac{1}{T}\int_0^T dt \rho(t),
\end{align}
also known as the equilibrium state in the context of equilibration on average:
\begin{restatable}[]{main*}{mainresult}
    With $\rho(t),$ $\omega$ and $O$ as defined on a finite dimensional Hilbert space, we have
    \begin{align}
    \label{eq:shannon_entropy_intro}
        \left\langle \left |S_{\mathrm{Sh}}^O[\rho(t)]-S_{\mathrm{Sh}}^O[\omega]\right|\right\rangle_\infty \leq \mathcal O\left(\log(r) \sqrt{\frac{r}{d_\mathrm{eff}}}\right),
    \end{align}
    where the average $\langle\cdot\rangle_\infty$ is with respect to uniformly random times $0\leq t<\infty$, and the right-hand side (expressed in terms of big-$\mathcal O$ notation) vanishes in the limit where the number of microstates $d_\mathrm{eff}=1/\tr[\omega^2]$ is much larger than that of macrostates $r.$
\end{restatable}
The small parameter on the right-hand side of~\eqref{eq:shannon_entropy_intro} measures how strongly the observable $O$ coarse-grains the system relative to the number of microstates $d_\mathrm{eff}$, as discussed in more detail later.
Here we combine improvements for bounds from the theory of equilibration on average~\cite{Reimann2012,Short_2012,gogolin2016equilibration}, with the so-called Past Hypothesis, formally stated as:
\begin{restatable}[]{hypothesis*}{pasthypothesis}
    The assumption that we initially start in a low entropy state with respect to $O$,
    \begin{align}
    \label{eq:past_hypothesis}
        S_\mathrm{Sh}^O[\rho(0)] \ll S_\mathrm{Sh}^O[\omega].
    \end{align}
\end{restatable}
This results in the following statement of the second law in unitary quantum mechanics: \textit{the Shannon observable entropy of an isolated system increases towards its equilibrium value}.

\section{\label{sec:formalism}Formalism}
\paragraph*{Equilibrating observables.}

We will work in the setting where the dynamics of our system of interest is described by a Hamiltonian with spectral decomposition,
\begin{align}
    \label{eq:H_decomposition}
    H = \sum_{\lambda\in\sigma(H)} \lambda \Pi_\lambda,
\end{align}
where the sum runs over the spectrum $\lambda\in\sigma(H)$ of the Hamiltonian and $\Pi_\lambda$ are the orthogonal projectors onto the eigenspaces corresponding to the eigenvalues $\lambda$ of $H$.
The dimension of the system's Hilbert space can be recovered via $d=\sum_{\lambda\in\sigma(H)}\tr\left[\Pi_\lambda\right]$ and is assumed to be finite in this work.
Isolated quantum systems evolve according to the von Neumann equation, $\dot\rho(t) = -i\left[H,\rho(t)\right]$, given the initial condition $\rho(0)$. Note that $\hbar = 1$ throughout this work.

While the state of the system at a given instant in time $t$ is of course $\rho(t)$, it was shown in~\cite{tasaki1998quantum,Reimann2012,Short_2012,gogolin2016equilibration} that, for large systems, the expectation value of many observables will be close to that of the equilibrium state $\omega$ for most times.
Such observables are therefore said to equilibrate \emph{on average}. 
To make this quantitative, we use the notation $\langle X(t) \rangle_T = T^{-1}\int_0^T dt X(t)$ to denote the average of an arbitrary quantity $X(t)$ over the finite time window $[0,T]$. 
We can thus write the equilibrium state as $\omega=\lim_{T\rightarrow\infty} \left\langle\rho(t)\right\rangle_{T}$ in agreement with Eq.~\eqref{eq:def_omega}.
Two questions must then be answered: how `large' must the system be and how `close' exactly are the expectation values of $\rho(t)$ to those of $\omega$?

The size of the system can be quantified by its so-called effective dimension~\cite{linden2009quantum}, defined via:
\begin{align}
\label{eq:d_eff}
    \frac{1}{(d_\mathrm{eff})^{2}} =  \sum_{\lambda\in\sigma(H)} \tr [ \Pi_{\lambda} \rho(t) ]^2.
\end{align}
The effective dimension is independent of time $t$, and we can equivalently write $d_\mathrm{eff} = 1/\tr[\omega^2]$. 
This is essentially a measure of how many non-degenerate energy eigenstates participate in the quantum state's evolution.
One can then compare the difference between the observable expectation values. For example, it has been shown~\cite{reimann2008foundation,Short_2012} that
\begin{align}
\label{eq:ReimannBound}
    \left\langle \left|\tr \left[O\rho(t)\right]- \tr\left[O\omega\right]\right|^2 \right\rangle_{T} \leq \frac{\|O\|^{2}}{d_\mathrm{eff}} f(\varepsilon,T),
\end{align} 
where $\varepsilon>0$ is an arbitrary energy parameter, $T$ is the time over which the time-average is taken, $\|\cdot\|$ denotes the operator norm (specifically the largest singular value of $O$), and $f(\varepsilon,T)$ is given by:
\begin{align}
\label{eq:f_eps_T}    f(\varepsilon,T)=N(\varepsilon)\left(1+\frac{8\log_2 |\sigma(H)|}{\varepsilon T}\right).
\end{align} 
In this function, $N(\varepsilon)$ is the maximum number of energy gaps that $H$ has in any interval of size $\varepsilon$, and $|\sigma(H)|$ is the number of distinct energy eigenvalues of $H$.
The bound~\eqref{eq:ReimannBound} shows that having a large effective dimension is sufficient for $\rho(t)$ to equilibrate relative to $O$ at long times.
Interestingly, equilibration in this context does not necessarily require non-integrability of the quantum system~\cite{Jensen1985,Reimann2013,gogolin2011absence,Yukalov2011}.
In the following, we explore how this observation can serve as the basis for the second law of thermodynamics.

\paragraph*{Entropies relative to observables.}
Given some observable $O$, we can define the Shannon observable entropy $S_\mathrm{Sh}^O$ as in Eq.~\eqref{eq:ShanObsEnt}.
Operationally, this can be understood as the information-theoretic uncertainty or \textit{surprisal} in the measurement outcome of $O$ on the state $\rho$.
For a generic observable, its Shannon entropy can be larger than the von Neumann entropy (e.g.\ for pure states) or smaller than the von Neumann entropy (e.g.\ for a classical mixture of degenerate eigenvectors of the observable).
Another common notion of an entropy with respect to an observable is the observational entropy~\cite{Safranek2019}, which further coarse grains over the eigenspaces of the observable. It is given by 
\begin{align}\label{eq:obs entropy}
    S_\mathrm{Obs}^O[\rho] = -\sum_{i=1}^r p_i \log \frac{p_i}{V_i},
\end{align}
where $V_i=\tr[\Pi_i^O]$ is the dimension of the $i$th eigenspace of the observable $O=\sum_{i=1}^r O_i \Pi_i^O$.
As before, the probabilities $p_i = \tr\left[\Pi_i^O \rho\right]$ are taken with respect to the eigenspaces of $O$.
This entropy has been shown~\cite{Buscemi2023} to be an upper estimate for the von Neumann entropy, satisfying the inequality $S_\mathrm{Obs}^O[\rho] - S_\mathrm{vN}[\rho] \geq D[\rho \| \rho^\mathrm{cg}]\geq0$, where $D[\cdot\|\cdot]$ is the relative entropy and $\rho^\mathrm{cg} = \sum_{i=1}^r \tr\left[\Pi_i^O \rho\right] \frac{\Pi_i^O}{V_i}$ is the coarse-grained state with respect to the observable.
The difference between the Shannon observable entropy and the observational entropy, $S_\mathrm{Obs}^O[\rho] - S_\mathrm{Sh}^O[\rho] = \sum_{i=1}^r p_i \log V_i\geq0$, is commonly referred to as the \textit{Boltzmann term}~\cite{Buscemi2023}.
While the number of eigenspaces $r$ can be understood as the number of macrostates of the observable $O$, the dimension $V_i$ can be understood as the number of microstates, i.e.\ orthogonal states, compatible with a given macrostate $i$.

A priori, it is unclear whether the inclusion of the Boltzmann term in the observational entropy with respect to an observable is necessary to recover a second law, as for example formulated in~\cite{Strasberg2021,Strasberg2024,Nagasawa2024}.
As we show in the following, for $d_\mathrm{eff}\gg r$ where the bound in Eq.~\eqref{eq:shannon_entropy_intro} is tight, the second law holds for both the Shannon observable entropy (Theorem~\ref{thm:shannonbound} in the following) as well as the observational entropy (Proposition~\ref{prop:obsentropybound} in the Appendix).

\section{\label{sec:results}Results}
\paragraph*{The second law on average.}
Starting with the von Neumann entropy, one common and sensible-seeming attempt to formulate a second law for increasing entropy is to look at the entropy of time-averaged states $\langle \rho(t)\rangle_T$, where we have that $S_\mathrm{vN}[\langle \rho(t)\rangle_T] < S_\mathrm{vN}[\omega]$ if we compare to the infinite time averaged state $\omega$ (as we discuss in some more detail in Appendix~\ref{appendix:second_law_vN_entropy}).
While on a formal level, the entropy increases and approaches that of the infinite time average, such an argument fails to capture that in reality, an observer accesses the instantaneous state $\rho(t)$ at a well-defined time $t$, not the time averaged state $\langle \rho(t)\rangle_T$.
It is this time parameter that progresses and with respect to which we observe equilibration, not the duration $T$ of the averaging window.
For these reasons we examine how the Shannon observable entropy of the instantaneous state $\rho(t)$ behaves and whether this quantity satisfies the second law.
By using Jensen's inequality (see Lemma~\ref{lemma:jensenbound} in Appendix~\ref{appendix:proofs_entropy_bounds})
we can arrive at a preliminary statement reminiscent of the second law:
\begin{align}
\label{eq:S_jensen}
    \left\langle S_\mathrm{Sh}^O[\rho(t)]\right\rangle_\infty \leq S_\mathrm{Sh}^O[\omega].
\end{align}
This inequality indicates that the entropy $S_\mathrm{Sh}^O[\rho(t)]$ is on average smaller than the entropy of the equilibrium state $\omega$, aligning with the second law of thermodynamics, which identifies the equilibrium state as having maximal entropy.
Notably, transient fluctuations may result in finite-time states with higher entropy, but the average is bounded above by that of the equilibrium state.

In the following, we expand on this inequality, proving that the entropy is not simply smaller on average than that of the equilibrium state, but that it is closely confined around the equilibrium value.
The key insight needed to reveal the second law on average for equilibrating observables (in the sense of Eq.~\eqref{eq:ReimannBound}) is that it is insufficient to just have the expectation values of the observables equilibrate -- it is also necessary for the probability vector with respect to the observable $O$, $\vec p(t)$, to be close to that of $\vec p_\omega$ for most times.
The distance between the two probability distributions can be characterized by the 1-norm distance $\left\| \vec p(t) - \vec p_\omega \right\|_1 = \frac{1}{2}\sum_{i=1}^r \left|p_i(t) - p_{\omega,i}\right|$. 
Averaging the distance over the time interval $[0,T]$, we find the following bound:
\begin{align}
\label{eq:avg_pt_pomega}
    \left\langle \left\|\vec p(t) - \vec p_\omega \right\|_1\right\rangle_T \leq \eta_{\varepsilon,T}:=\frac{1}{2}\sqrt{\frac{r}{d_\mathrm{eff}}f(\varepsilon,T)},
\end{align}
where $\varepsilon>0$ is an arbitrary energy parameter as in~\eqref{eq:f_eps_T} (see Lemma~\ref{lemma:populationbound} in Appendix~\ref{appendix:proofs_entropy_bounds} for details).
Inequality~\eqref{eq:avg_pt_pomega} improves previous bounds in Refs.~\cite{Short_2012,gogolin2016equilibration} quadratically in terms of their scaling with $r$, showing that in fact $r$ and $d_\mathrm{eff}$ appear with the same exponent.

In contrast to the bound in~\eqref{eq:ReimannBound}, the small parameter $\eta_{\varepsilon,T}$ does not simply vanish in the limit of large effective dimension.
Instead, it is determined by the ratio $r / d_\mathrm{eff}$; for the difference in populations to vanish on average, the effective dimension must be much larger than the number of orthogonal eigenspaces of the observable.
Physically, this is the limit where the number of microstates that participate in the evolution of $\rho(t)$ (captured by $d_\mathrm{eff}$) is much larger than the number of macrostates of the observable (captured by $r$), which is necessary for the population vector $\vec p(t)$ to be close to the equilibrium $\vec p_\omega$ on average.
On a formal level, this also distinguishes our consideration of the equilibration of observables on average from the second law on average, since the former only requires large $d_\mathrm{eff}$, whereas the latter requires large $d_\mathrm{eff}/r$, as we show in the following:

\begin{restatable}[Second law on average]{theorem}{shannonbound}
\label{thm:shannonbound}
Let $\rho(t),$ $\omega$ and $O$ be as defined so far. For arbitrary $\varepsilon>0$ and $T>0$, the following inequality holds:
\begin{align}
\label{eq:S_sh_bound}
    \left\langle \left |S_{\mathrm{Sh}}^O[\rho(t)]-S_{\mathrm{Sh}}^O[\omega]\right|\right\rangle_T \leq \log(r-1)\eta_{\varepsilon,T} + H_2[\eta_{\varepsilon,T}],
\end{align}
so long as $\eta_{\varepsilon,T}<1/2$. The function $H_2[x]:=-x\log x - (1-x)\log(1-x)$ is the binary entropy and $\eta_{\varepsilon,T}$ is defined as in Eq.~\eqref{eq:avg_pt_pomega}.
\end{restatable}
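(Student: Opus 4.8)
The plan is to reduce the theorem to a continuity estimate for the Shannon entropy and then feed in the population bound \eqref{eq:avg_pt_pomega}. The starting point is that $S_{\mathrm{Sh}}^O[\rho(t)]$ and $S_{\mathrm{Sh}}^O[\omega]$ are simply the Shannon entropies of the probability vectors $\vec p(t)$ and $\vec p_\omega$, so the left-hand side of \eqref{eq:S_sh_bound} is a time-average of an entropy difference between two distributions on $r$ outcomes, and the natural distance between those distributions is precisely the $1$-norm distance $\|\vec p(t)-\vec p_\omega\|_1$ already controlled by Lemma~\ref{lemma:populationbound}.

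First I would invoke a sharp continuity bound for the Shannon entropy (of Fannes--Audenaert type): for two distributions on $r$ outcomes at $1$-norm distance $\delta=\|\vec p(t)-\vec p_\omega\|_1$ one has, pointwise in $t$,
\begin{equation*}
\bigl|S_{\mathrm{Sh}}^O[\rho(t)]-S_{\mathrm{Sh}}^O[\omega]\bigr|\le g(\delta):=\delta\,\log(r-1)+H_2[\delta].
\end{equation*}
The key structural facts about $g$ are that it is \emph{concave} on $[0,1]$ and \emph{increasing} on $[0,1-1/r]$, attaining its maximum $\log r$ at $\delta=1-1/r$; extending $g$ by the constant $\log r$ beyond that point yields a concave majorant that bounds the entropy difference for every $\delta\in[0,1]$, so the pointwise inequality remains valid at all times, including those at which $\vec p(t)$ drifts far from $\vec p_\omega$.

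Next I would take the time-average over $[0,T]$ and apply Jensen's inequality to the concave function $g$, moving the average inside its argument:
\begin{equation*}
\bigl\langle \bigl|S_{\mathrm{Sh}}^O[\rho(t)]-S_{\mathrm{Sh}}^O[\omega]\bigr|\bigr\rangle_T \le \bigl\langle g(\delta)\bigr\rangle_T \le g\!\left(\bigl\langle \|\vec p(t)-\vec p_\omega\|_1\bigr\rangle_T\right).
\end{equation*}
By Lemma~\ref{lemma:populationbound} the averaged distance is at most $\eta_{\varepsilon,T}$, and the hypothesis $\eta_{\varepsilon,T}<1/2$ guarantees $\bigl\langle\|\vec p(t)-\vec p_\omega\|_1\bigr\rangle_T\le\eta_{\varepsilon,T}<1/2\le 1-1/r$ for every $r\ge2$, so we remain on the increasing branch of $g$ and may replace its argument by $\eta_{\varepsilon,T}$. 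This yields $g(\eta_{\varepsilon,T})=\log(r-1)\,\eta_{\varepsilon,T}+H_2[\eta_{\varepsilon,T}]$, exactly the right-hand side of \eqref{eq:S_sh_bound}. (Equivalently, one can average the two summands of $g(\delta)$ separately: the linear term gives $\log(r-1)\,\eta_{\varepsilon,T}$ directly, while the $H_2$ term requires the concavity-and-monotonicity argument just described.)

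The main obstacle is precisely the non-monotonicity of the binary entropy $H_2$, which rises on $[0,1/2]$ and falls on $[1/2,1]$: one cannot naively substitute the upper bound $\eta_{\varepsilon,T}$ for the averaged distance inside $H_2$, and one must also guard against violating the pointwise continuity bound at times when $\vec p(t)$ is far from equilibrium. Both difficulties are resolved by the single hypothesis $\eta_{\varepsilon,T}<1/2$: together with Jensen's inequality for the concave majorant of $g$, it simultaneously places the averaged distance on the increasing branch where substituting $\eta_{\varepsilon,T}$ is legitimate, and certifies that $1/2\le1-1/r$ keeps us within the regime where $g$ furnishes the tight continuity estimate.
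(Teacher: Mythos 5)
Your proof is correct and follows essentially the same route as the paper's: the pointwise Zhang/Fannes--Audenaert-type continuity bound for the Shannon entropy, then Jensen's inequality for the concave bound and monotonicity below $1/2$ to substitute the population estimate $\langle\|\vec p(t)-\vec p_\omega\|_1\rangle_T\le\eta_{\varepsilon,T}$ from Lemma~\ref{lemma:populationbound}. The only (minor) differences are that you apply Jensen to the full concave function $g(\delta)=\delta\log(r-1)+H_2[\delta]$ rather than to $H_2$ alone, and that you explicitly justify the pointwise bound at times when $\|\vec p(t)-\vec p_\omega\|_1$ exceeds $1-1/r$ via the concave majorant capped at $\log r$ --- a technical point the paper's proof passes over silently.
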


A statement akin to Theorem~\ref{thm:shannonbound} can be derived for the observational entropy, which we discuss in further detail in Appendix~\ref{appendix:generalization_observational_entropy}.
For the observational entropy, it holds that
\begin{align}
    \label{eq:S_obs_bound_main}
    \left\langle \left| S_\mathrm{Obs}^O[\rho(t)] - S_\mathrm{Obs}^O[\omega]\right| \right\rangle_T \leq \log(d)\eta_{\varepsilon,T} +  g\left(\eta_{\varepsilon,T}\right),
\end{align}
where $g(x) = -x\log x + (1+x)\log(1+x)$, and we recall that $d$ is the dimensionality of the Hilbert space. 
The main difference is the prefactor $\log d$ coming from the fact that the observational entropy is bounded above by $\log d$ independently of the observable, while the Shannon observable entropy associated with $O$ is bounded above by $\log r$.

\begin{proof}[Proof sketch.]
The proof consists of two steps:
(i) The first step relates a difference in Shannon entropies to a difference in state populations.
The key here is that if the average distance between $\vec p_\omega$ and $\vec p(t)$ is small, then the average distance between their Shannon entropies $S_\mathrm{Sh}^O[\omega]$ and $S_\mathrm{Sh}^O\left[\rho(t)\right]$ should also be small.
This connection can be made by combining the continuity bound for the Shannon entropy from~\cite{Zhang2007} for the non-averaged quantities together with the result from Eq.~\eqref{eq:avg_pt_pomega}.
(ii) Secondly, we show that under the conditions described above, the population differences are small on average. This is the result from Eq.~\eqref{eq:avg_pt_pomega} and we prove it separately in Lemma~\ref{lemma:populationbound} in Appendix~\ref{appendix:proofs_entropy_bounds}. Both steps together prove the Theorem, and are described in detail in Appendix~\ref{appendix:proofs_entropy_bounds}.
\end{proof}

Note that the theorem, as stated, holds only for ${\eta_{\varepsilon,T}<1/2}$ because the binary entropy is monotonically increasing only for arguments up to $1/2$. If $\eta_{\varepsilon,T}>1/2$, one can simply replace $H_2[\eta_{\varepsilon,T}]$ by $\log 2$, and the bound holds in this modified form (for all values of $\eta_{\varepsilon,T}$). However, in this case the observable does not equilibrate (c.f.\ Eq.~\eqref{eq:avg_pt_pomega}), and thus the bound does not strongly constrain the entropy.

In Fig.~\ref{fig:spinchain_panel} we visualize the behavior described by Theorem~\ref{thm:shannonbound} for a spin chain comprising up to $N=13$ spin-$\frac{1}{2}$ systems interacting under a quantum Ising model with nearest-neighbor $XYZ$ interactions (see Appendix~\ref{sec:details_numerics_model} for details). We work in the parameter regime from~\cite{13KimHuse}, for which the model is nonintegrable, which may not be necessary for equilibration but generally causes faster equilibration times and lower fluctuations~\cite{Jensen1985}.
The dynamics are calculated by exact diagonalization using the Python package QuTiP~\cite{johansson2012qutip}.
The observable in question is the bulk magnetization in the $z$-direction, given by summing all of the individual chain elements' spin-$z$ operators: $M_{z}=\frac{1}{N}\sum_{i=1}^N \sigma_z^{\left(i\right)}$.
For an initial state we choose an uncorrelated chain of spins all pointing downwards $\ket{\downarrow \cdots \downarrow}$.
This has zero entropy with respect to the bulk magentization, thus implementing the past hypothesis.
By combining the coarse-graining-induced information loss associated with accessing the system through a degenerate observable (the global magnetization) together with a low-entropy initial state as in~\eqref{eq:past_hypothesis}, the entropy of the expectation values increases towards its equilibrium value.

\begin{figure*}
    \centering
    \includegraphics[width=\textwidth]{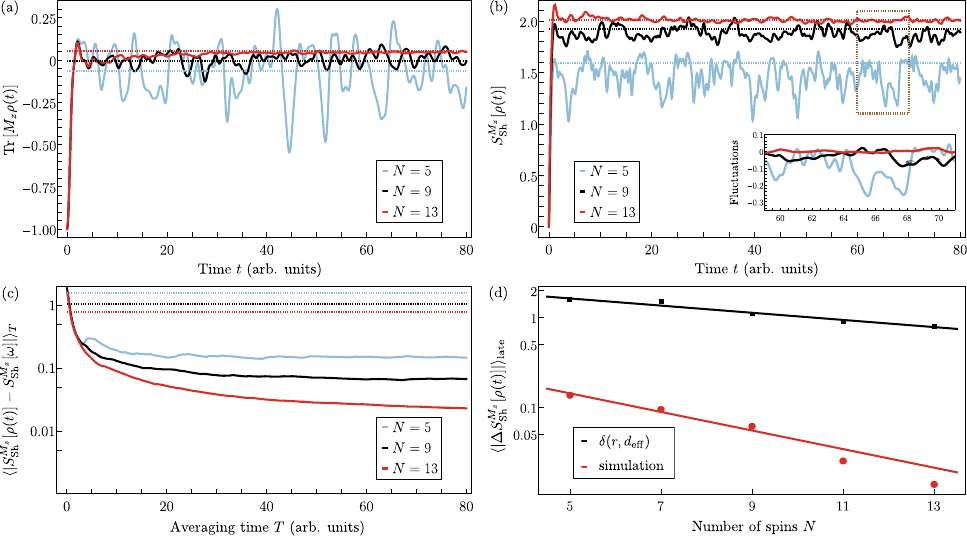}
    \caption{Exact numerical results showcasing the second law on average for isolated spin-chains of 5 to 13 spin-$\frac{1}{2}$ systems with nearest-neighbor interactions. The observable of choice is the global magnetization in the $z$-direction. 
    Panel (a) shows how the magnetization, ${\rm{Tr}}\left[M_z\rho\left(t\right)\right]$ (in units of $\hbar=1$), evolves as a function of time for three different lengths of spin chain, and how it approaches the equilibrium value of $\omega$ (represented by the dashed lines) in each case.
    In panel (b), we plot the Shannon observable entropy as a function of time, finding that entropy increases towards the equilibrium value and then continues to fluctuate around it. Here, the dashed lines represent the Shannon observable entropy of the equilibrium state $\omega$.
    An inset in panel (b) charts the normalized differences $\left(S_{\rm{Sh}}^{M_z}\left[\rho\left(t\right)\right]-S_{\rm{Sh}}^{M_z}\left[\omega\right]\right)/S_{\rm{Sh}}^{M_z}\left[\omega\right]$ at late times.
    Panel (c) shows on a semi-log scale the finite-time averaging from the left-hand side of Eq.~\eqref{eq:S_sh_bound}, as a function of the maximum time $T$ integrated over. The dashed lines show the bounds from the right-hand side of Eq.~\eqref{eq:S_sh_bound} for $T\rightarrow\infty$ (Theorem~\ref{thm:shannonbound}). 
    Lastly, panel (d) plots, on a semi-log scale as a function of the spin chain length $N$, two quantities: the $T\to\infty$ version of the Shannon entropy bound, $\delta\left(r,d_{\rm{eff}}\right)$ (black line and points, see Eq.~\eqref{eq:deltardeff}), and also the averaging of $|\Delta S_{\rm{Sh}}^{M_z}[\rho(t)]|=\left|S_{\rm{Sh}}^{M_z}\left[\rho\left(t\right)\right]-S_{\rm{Sh}}^{M_z}\left[\omega\right]\right|$ over the times $t=50$ to $t=80$ (red line and points).
    For both quantities, an exponential fit of the form $ae^{bx}$ was performed. The fitting paramters plotted here are $a=2.57$, $b=-0.0920$, and $a=0.467$, $b=-0.239$ respectively.} 
    \label{fig:spinchain_panel}
\end{figure*}

It is important to note that the \textit{increase} in entropy stated in the second law is a consequence of the past hypothesis, and not a fundamental feature.
We can see this in Fig.~\ref{fig:spinchain_panel}(b), where there are times where the entropy $S_\mathrm{Sh}^O[\rho(t)]$ is larger than that of the equilibrium $S_\mathrm{Sh}^O[\omega]$. In principle, if we had begun the dynamics at one of these points in time, entropy would decrease towards the equilibrium rather than increase. 
This is clearly illustrated in Fig.~\ref{fig:AppendixFigs2}(b) in the Appendix, where we assume the same system and initial state as in Fig.~\ref{fig:spinchain_panel}, but instead consider the Shannon entropy of the bulk magnetization in the $y$-direction (which does not satisfy the past hypothesis). We find that this entropy actually briefly \textit{decreases} slightly from an initial value close to equilibrium. Note, however, that in this case, the distance of the initial entropy from the equilibrium value is within the range of the late-time fluctuations shown in the figure.

This feature has been extensively discussed throughout the history of statistical mechanics (see e.g.~\cite{lebowitz1993macroscopic} and references therein), and appears naturally in classical settings where equilibration does not necessarily imply that the entropy relative to an observable increases. It is quite possible for the entropy of the equilibrium state to be lower, unless the past hypothesis is invoked.
A simple classical example of this is the separation of oil and water from an initially uniform mixture. If the observable regards the spatial distribution of the oil and water, it seems that the entropy decreases as the two constituents separate. However, the separation results in an increase in the average kinetic energy of the molecules, and a corresponding increase in the entropy associated with that observable.
Thus for one observable the equilibrium state has a lower Shannon entropy than the initial state, and for another observable the situation is reversed. 

\paragraph*{Fluctuation bounds.}
Even when a system has reached equilibrium, the Shannon entropy can fluctuate above and below the equilibrium value for finite times,
with fluctuations above being less likely than fluctuations below, as inequality~\eqref{eq:S_jensen} shows, and also evident in Fig.~\ref{fig:spinchain_panel}(b). Indeed, the Poincar\'{e} recurrence theorem for quantum systems states that any isolated, finite-dimensional quantum system will return arbitrarily often to a state that is arbitrarily close to its initial state~\cite{Bocchieri1957}.
Given that we assume a low-entropy initial state, such a recurrence will result in a reduction in entropy from the equilibrium value by an amount far larger than might be anticipated from the bounded average in~\eqref{eq:S_sh_bound}.
For large systems, however, such apparent violations of the bound occur with vanishing probability at a given time, i.e.\ for systems with large effective dimension compared to number of observable macrostates. This implies that for macroscopic systems, the expected time one would be required to wait to observe such a decrease in entropy occurring is unobservably large.

We now make this statement more quantitative by establishing a fluctuation bound.
Using a probability-theoretic technique similar to~\cite{Romero_2021} it is possible to gain insight into not only the extent of the fluctuations, but also the probability that these fluctuations are larger than some amount.
That is, we can bound the likelihood that the Shannon observable entropy of the state at time $t$ fluctuates away from that of the equilibrium state, depending on the magnitude of the fluctuation.

\begin{restatable}[Shannon entropy fluctuations]{corollary}{shannonfluctuations}
    \label{cor:shannonfluctuations}
    Let $\rho(t),$ $\omega$ and $O$ be as before.
    If $t\in \mathbb R_{\geq 0}$ is sampled uniformly at random, we have
    \begin{align}
    \label{eq:shannon_prob_cheby}
        P\left[\left|S_\mathrm{Sh}^O[\rho(t)] - S_\mathrm{Sh}^O[\omega] \right|\geq \sqrt{\delta}\right]\leq\sqrt{\delta},
    \end{align}
    where $\delta=\delta(r,d_\mathrm{eff})$ is given by
    \begin{align}
    \label{eq:deltardeff}
        \delta(r,d_\mathrm{eff}) = \frac{\log(r-1)}{2}\sqrt{\frac{r}{d_\mathrm{eff}}} + H_2\left[\frac{1}{2}\sqrt{\frac{r}{d_\mathrm{eff}}}\right],
    \end{align}
    with $H_2(\cdot)$ the binary entropy, as in Theorem~\ref{thm:shannonbound}.
\end{restatable}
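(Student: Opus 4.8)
The plan is to recognize that the quantity $\delta(r,d_\mathrm{eff})$ in Eq.~\eqref{eq:deltardeff} is precisely the infinite-time limit of the right-hand side of Theorem~\ref{thm:shannonbound}, and then to convert that time-averaged bound into a tail (large-deviation) statement by a first-moment argument. First I would introduce the nonnegative random variable $X(t):=\left|S_\mathrm{Sh}^O[\rho(t)]-S_\mathrm{Sh}^O[\omega]\right|$ and note that drawing $t$ uniformly at random identifies its expectation with the infinite-time average $\langle X(t)\rangle_\infty := \lim_{T\to\infty}\langle X(t)\rangle_T$, which plays the role of $\mathbb{E}[X]$.

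Next I would reduce the bound of Theorem~\ref{thm:shannonbound} to the form appearing in the corollary. Sending $T\to\infty$ sends $f(\varepsilon,T)\to N(\varepsilon)$, and choosing $\varepsilon$ small enough that no two distinct energy gaps fall within an interval of that width (the generic case of nondegenerate gaps) gives $N(\varepsilon)=1$, hence $f\to1$ and $\eta_{\varepsilon,T}\to\eta_\infty:=\tfrac12\sqrt{r/d_\mathrm{eff}}$. In the regime $d_\mathrm{eff}>r$ where $\eta_\infty<1/2$, so that the hypothesis of Theorem~\ref{thm:shannonbound} is met, its right-hand side becomes exactly $\log(r-1)\,\eta_\infty+H_2[\eta_\infty]=\delta(r,d_\mathrm{eff})$, yielding $\langle X(t)\rangle_\infty\leq\delta$.

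The tail bound then follows from Markov's inequality (a first-moment Chebyshev-type argument, as in~\cite{Romero_2021}). Since $X(t)\geq0$ with mean at most $\delta$ under uniform time sampling, for any threshold $a>0$ one has $P[X(t)\geq a]\leq\langle X(t)\rangle_\infty/a\leq\delta/a$. Choosing $a=\sqrt{\delta}$ gives $P[X(t)\geq\sqrt{\delta}]\leq\delta/\sqrt{\delta}=\sqrt{\delta}$, which is exactly the claimed inequality~\eqref{eq:shannon_prob_cheby}.

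The main subtlety to handle carefully is the probabilistic interpretation of ``uniformly random $t\in\mathbb{R}_{\geq0}$'': the uniform measure on $[0,\infty)$ is improper, so I would apply Markov's inequality at each finite horizon $T$, writing $\langle\mathbb{1}[X(t)\geq a]\rangle_T\leq\langle X(t)\rangle_T/a$, and only then take $T\to\infty$. This limit is legitimate because $X(t)$ is bounded above by $\log r$ and the finite-time averages converge, so the indicator average and the moment average pass to their infinite-time limits consistently. I expect this limiting argument, together with justifying the reduction $f\to1$ that turns the $\varepsilon,T$-dependent bound into the clean $\delta(r,d_\mathrm{eff})$, to be the only places requiring care; the Markov step itself is routine.
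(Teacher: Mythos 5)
Your proposal is correct and follows essentially the same route as the paper's proof: the paper likewise sets $X(t)=\left|S_\mathrm{Sh}^O[\rho(t)]-S_\mathrm{Sh}^O[\omega]\right|$, applies a generalized Chebyshev (i.e.\ Markov, first-moment) inequality $P[X\geq\kappa]\leq E[X]/\kappa$ with $\kappa=\sqrt{\delta}$, and identifies $E[X]\leq\delta$ as the $T\to\infty$ limit of Theorem~\ref{thm:shannonbound} under uniform time sampling. If anything, you are more explicit than the paper about the two technical points it glosses over, namely choosing $\varepsilon$ small enough that $N(\varepsilon)=1$ (nondegenerate gaps) so that $f(\varepsilon,T)\to 1$ and $\eta_{\varepsilon,T}\to\tfrac12\sqrt{r/d_\mathrm{eff}}$, and giving meaning to the improper uniform measure on $[0,\infty)$ by applying Markov at finite horizon $T$ before passing to the limit.
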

\begin{proof}[Proof sketch.]
This is a consequence of the generalized Chebyshev inequality $P[ X\geq \kappa] \leq {E[g(X)]}/{g(\kappa)}$,
where $X$ is a positive random variable and $g:\mathbb R\rightarrow \mathbb R$ a monotonously increasing, strictly positive function.
Setting $X=\left|S_\mathrm{sh}[\rho(t)] - S_\mathrm{sh}[\omega] \right|$ and $\kappa:=\sqrt{\delta}$ completes the proof upon applying Eq.~\eqref{eq:S_sh_bound} from Theorem~\ref{thm:shannonbound}. For details, we refer the reader to Appendix~\ref{appendix:proofs_fluctuations}.
\end{proof}

The fluctuation bound in Corollary~\ref{cor:shannonfluctuations} has an interpretation in the sense of typicality~\cite{Goldstein2006,Bartsch2009,reimann2018dynamical,Teufel2023}.
In information theory, when sampling outputs from a fixed source, the weak Law of Large Numbers ensures that for large enough samples, the \textit{likely} samples are those whose entropy is close to that of the source. This fluctuation bound can be interpreted from this perspective, offering a connection between information theory and thermodynamics. In this context, we are sampling at random from a \textit{fixed} source -- the expectation values $\vec{p}$ given by $\tr[O\rho(t)]$ -- and find that at all times it is likely that the entropy of this distribution of expectation values is close to the entropy of the distribution of expectation values of the equilibrium state. As such we may say that the equilibrium state $\omega$ is a `typical' state when sampling expectation values of a state relative to an observable $O$ which equilibrates, and by extension states close to the equilibrium state in $S_\mathrm{Sh}^O[\,\cdot\,]$ are also typical.

This connection emphasizes the ubiquity of information theory for thermodynamics~\cite{rovelli2013relative,Parrondo2015} as it reveals that up to $\sqrt{\delta}$-fluctuations, the measurement of the observable is given by that of the equilibrium average.
Up to a resolution dependent on the extent of fluctuations, and accounting for the finite equilibration time, the description of the state $\rho(t)$ examined through the observable $O$ can therefore be reduced from $d_\mathrm{eff}$ degrees of freedom to the relevant $r$ degrees of freedom of the equilibrium populations $\vec p_\omega$, offering a \textit{compressed} description of the state as a result of typicality.

\section{\label{sec:discussion}Discussion}
We have derived a second law of thermodynamics for isolated quantum systems, clarifying the sense in which its entropy increases over time.
A resolution to the conflict between reversibility at the level of the Schrödinger equation and the irreversibility of the second law comes from considering a measure of entropy defined relative to an observable.
Given that equilibrating observables are highly degenerate, meaning the number of macrostates it defines is much lower than the effective dimension of the system under observation, we have proved novel bounds to show that not only the average value, but in fact the entire probability distribution of observable outcomes tends on average to that of the equilibrium state, despite the overall unitary evolution. As a consequence, the Shannon observable entropy is on average close to the equilibrium value (Theorem~\ref{thm:shannonbound}).
If, furthermore, the system starts in a low entropy state (the past hypothesis), we recover the formulation of the second law that states that the entropy of an isolated system increases over time.

We have also examined how at any point in time in a quantum system's evolution its entropy can grow larger than, or move away from, the entropy of the equilibrium state, as captured in the fluctuation relation given in Corollary~\ref{cor:shannonfluctuations}.
There we bounded the likelihood of fluctuating away from the equilibrium state in terms of the Shannon observable entropy. This points at an interesting connection with information theory. Examining this connection and whether it can be used as a way to compress the description of a quantum system's evolution viewed through a specific observable is left as open question.

\paragraph*{Extension to different entropies.}
Though we discussed entropy defined relative to an observable which can be represented by a self-adjoint operator $O$, one can instead consider a more general positive operator valued measure (POVM) $M=(E_i)_{i=1,\dots,r}$ and denote the corresponding probabilities $p_i(t) = \tr[E_i\rho(t)]$.
In Appendix~\ref{appendix:replacingPOVMs}, we show that Theorem~\ref{thm:shannonbound} and Corollary~\ref{cor:shannonfluctuations} readily generalize to this case without modification.

While this work focused on the Shannon observable entropy, we have proved analogous statements for the observational entropy in Appendix~\ref{appendix:generalization_observational_entropy}, showing that the condition $r/d_\mathrm{eff}\ll1$ is not only sufficient for equilibration of the Shannon observable entropy but also for the observational entropy.
Still, one can construct examples where the Shannon observable entropy fluctuates on the order of $\log r$ and thus does not equilibrate, whereas the observational entropy's fluctuations are relatively small because of the additional Boltzmann term which is of order $\log d$.
This could lead one to conclude that the observable equilibrates while in fact it does not, as the Shannon observable entropy shows.
Regardless of its role in understanding equilibration, the observational entropy does quantify the ignorance resulting from multiple states being compatible with a given measurement outcome~\cite{Safranek2021}, which the Shannon observable entropy alone does not capture. As we have shown, this many-to-one relationship between states and outcomes instead dictates the dynamics of the Shannon entropy, depending on the dynamical participation of those states as encoded in $d_\mathrm{eff}$.

Given that the ratio of observable outcomes to effective dimension in the case of a small subsystem of a much larger total system is generally favorable to equilibration (the former being bounded by the dimensionality of the subsystem), our results can likely be used to constrain the behavior of the von Neumann entropy of a subsystem. Since our focus was on the overall behavior of isolated systems, we have not addressed this here. 
Additionally, one may ask how our analysis extends to other notions of entropy.
One such example is the diagonal entropy defined relative to the energy eigenbasis as in~\cite{Polkovnikov2011,Ikeda2015}, which is trivially constant for an isolated quantum system.
When defining a diagonal entropy in the context of observables akin to~\cite{Scarpa2023}, however, a non-trivial behavior is recovered (details in Appendix~\ref{appendix:diagonal_entropy}).
Other examples include the family of Tsallis or Rényi entropies~\cite{Tsallis1988,Hu2006,Vershynina2019} defined relative to an observable in analogy to how we define the Shannon observable entropy.
By combining entropy continuity bounds such as those of e.g.~\cite{Hu2006,Rastegin2013} together with Lemma~\ref{lemma:populationbound} for PVMs or Lemma~\ref{lemma:POVMpopbound} for POVMs (see Appendix), respective formulation of the second law might be obtained for such entropies.

\paragraph*{On the classical second law formulations.}
A natural question following from the results presented here is that of how our statement of the second law compares to the historical formulations of Clausius~\cite{Clausius1854} (``No process is possible, the sole result of which is the transfer of heat from a cold body to a hot body'') or Kelvin-Planck~\cite{Planck1897,Thess2011}  (``No process is possible, the sole result of which is that a body is cooled and work is done'').
A rigorous analysis of these statements would require a definition of work, heat and temperature for subsystems of isolated quantum systems, which would be relative to the observable the agent can use to examine and manipulate the system.
In certain special cases of equilibrating systems, the classical thermodynamic notions of entropy can be recovered -- in which case we have \textit{thermalization}.
One paradigmatic setting for thermalization comprises a subsystem $S$ weakly coupled to a larger environment $E$.
Under suitable assumptions as exemplarily outlined in the seminal works~\cite{Popescu2006,Goldstein2006,linden2009quantum}, the equilibrium state $\omega$ on the global system is typically well approximated by the thermal state $e^{-\beta H_S}/Z_S$ of $S$.
Here, $\beta$ is some temperature parameter associated with the energy and $Z_S$ the canonical partition function ensuring normalization of the thermal state.
When the observable under consideration is the Hamiltonian $O=H_S\otimes\mathds 1_E$ of the subsystem, the equilibrium Shannon observable entropy is given by the Gibbs thermal entropy $S_{\rm Obs}^{H_S}[\omega] \approx \log Z_S + \beta\langle H_S\rangle$ with Boltzmann constant $k_B=1$, up to deviations coming from the difference between the actual equilibrium state and the Gibbs state, as we outline in more detail in Appendix~\ref{appendix:thermodynamic_entropy}.
Theorem~\ref{thm:shannonbound} then states that as the subsystem's state evolves, its Shannon observable entropy (relative to the subsystem Hamiltonian) also converges to the thermal equilibrium value, in agreement with classical formulations of the second law.
How a Gibbs-like form of the entropy relative to an observable can also be recovered in a more general context beyond the system-environment paradigm was investigated in~\cite{Scarpa2023}, finding that such a form can encode equilibrium behaviour in various examples.

Finally, our work corroborates the findings of previous studies noting that non-integrability is not a necessary requirement for equilibration relative to observables~\cite{Jensen1985,Yukalov2011,gogolin2011absence,Reimann2013}.
As we showed in~\eqref{eq:avg_pt_pomega}, a large $d_\mathrm{eff}/r$ suffices for equilibration of the observable probability to the equilibrium distribution, and thus for our formulation of the second law to hold, and this does not a priori require non-integrability.
Whether the system is integrable or not may still affect some details like the speed of equilibration or tightness of the bounds as noted in~\cite{Jensen1985,gogolin2016equilibration}.
In other contexts as well, a connection between integrability and effective dimension has been hinted at~\cite{Iyoda2018,Prazeres2023}, which leads us to pose the question of whether $d_\mathrm{eff}$ could allow for a quantitative definition of integrability in relation to the equilibration of quantum systems.

\acknowledgments
The authors thank Pharnam Bakshinezhad, Felix Binder and Sandu Popescu for valuable discussions,
Joe Schindler for comments on the generalization of our bounds to POVMs,
and Dominik Safranek for remarks on the fluctuation bounds.
T.R.\ thanks Sophie Engineer for assistance with programming.
The authors acknowledge the TU Wien Bibliothek for financial support through its Open Access Funding Programme.
T.R., J.X.\ and M.H.\ would like to acknowledge funding from the European Research Council (Consolidator grant `Cocoquest’ 101043705). 
T.D.\ acknowledges support from {\"O}AW-JESH-Programme and the Brazilian agencies CNPq (Grant No.\ 441774/2023-7 and 200013/2024-6) as well as INCT-IQ through the project (465469/2014-0). 
This project is co-funded by the European Union (Quantum Flagship project ASPECTS, Grant Agreement No.\ 101080167). 
Views and opinions expressed are however those of the authors only and do not necessarily reflect those of the European Union, REA or UKRI. Neither the European Union nor UKRI can be held responsible for them. 
This publication was made possible through the support of Grant 62423 from the John Templeton Foundation. The opinions expressed in this publication are those of the author(s) and do not necessarily reflect the views of the John Templeton Foundation.

\hypertarget{sec:appendix}
\appendix


\section*{Appendices}

\section{\label{appendix:proofs_entropy_bounds}Proofs for entropy bounds}
In this section, we prove all our main results and provide additional background to the statements and the formalism we use.
In Sec.~\ref{appendix:proofs_average} we provide the proofs for the average entropy difference bounds, that is Theorem~\ref{thm:shannonbound} and Proposition~\ref{prop:obsentropybound}. Following this, Sec.~\ref{appendix:proofs_fluctuations} contains the proofs for the fluctuation bounds of Corollary~\ref{cor:shannonfluctuations} and Corollary~\ref{cor:observablefluctuations}.

\subsection{\label{appendix:proofs_average}Proofs for averaged entropy differences}
Before discussion of the proof of Theorem~\ref{thm:shannonbound}, more details regarding the inequality~\eqref{eq:S_jensen} are supplied, in particular regarding the applicability of Jensen's inequality:
\begin{restatable}{lemma}{jensenbound}
\label{lemma:jensenbound}
    Given $\rho(t)$, $\omega$ and $O$ as defined in the main text, we have in the infinite time-average,
    \begin{align}
    \label{eq:S_jensen_SM}
        \left\langle S_\mathrm{Sh}^O[\rho(t)]\right\rangle_\infty \leq S_\mathrm{Sh}^O[\omega].
    \end{align}
\end{restatable}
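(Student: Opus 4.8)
The plan is to exploit the fact that the Shannon observable entropy depends on the state only through the population vector $\vec p = (p_1,\dots,p_r)$ with $p_i = \tr[\Pi_i^O \rho]$, that the map $\rho \mapsto \vec p$ is linear, and that $H(\vec p) = -\sum_i p_i \log p_i$ is concave on the probability simplex. Since a linear map composed with a concave function is concave, $S_\mathrm{Sh}^O[\rho] = H(\vec p(\rho))$ is a concave functional of $\rho$. The inequality is then nothing but Jensen's inequality, once the time average is recognized as an expectation value against a probability measure.

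Concretely, I would first work at finite $T$, where the subtlety about the infinite average is absent. The finite-window average $\langle\cdot\rangle_T = T^{-1}\int_0^T dt\,(\cdot)$ is precisely the expectation with respect to the uniform probability measure on $[0,T]$. Applying Jensen's inequality to the concave functional $S_\mathrm{Sh}^O$ against this measure, and using the linearity of $\rho\mapsto\vec p$ to pull the average inside the populations, yields
\begin{align}
    \left\langle S_\mathrm{Sh}^O[\rho(t)]\right\rangle_T \leq S_\mathrm{Sh}^O\!\left[\left\langle \rho(t)\right\rangle_T\right],
\end{align}
where the right-hand side is evaluated on the genuine density matrix $\langle\rho(t)\rangle_T$. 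I would then pass to the limit $T\to\infty$ on both sides. On the right, $\langle\rho(t)\rangle_T\to\omega$ by the definition of the equilibrium state, and since $S_\mathrm{Sh}^O$ is continuous (it depends on $\rho$ only through the continuous populations, and $H$ is continuous up to the boundary of the simplex under the convention $0\log 0 = 0$), one obtains $S_\mathrm{Sh}^O[\langle\rho(t)\rangle_T]\to S_\mathrm{Sh}^O[\omega]$. On the left, in finite dimension each $p_i(t)$ is almost periodic (a finite sum of phases $e^{-i(E_m-E_n)t}$), so $S_\mathrm{Sh}^O[\rho(t)]$ is bounded with a well-defined mean equal to $\langle S_\mathrm{Sh}^O[\rho(t)]\rangle_\infty$, and the limit delivers the claim.

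The main obstacle is exactly what the phrase ``applicability of Jensen's inequality'' flags: making the probabilistic reading of the time average rigorous rather than invoking it formally. One must confirm that the average is a normalized positive linear functional so that Jensen genuinely applies; organize the argument as ``Jensen at finite $T$, then take the limit'' rather than naively interchanging the $T\to\infty$ limit with the nonlinear entropy; and verify that concavity and continuity of $H$ persist right up to the faces of the simplex where some $p_i$ vanish. None of these points is deep, but each should be recorded so that the passage to the infinite-time average is airtight and the inequality is not merely a heuristic application of Jensen.
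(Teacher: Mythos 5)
Your proposal is correct and follows essentially the same route as the paper's proof: both apply Jensen's inequality at finite $T$ to the Shannon entropy viewed as a concave function of the population vector $\vec p(t)$ (which is linear in $\rho$), then pass to the limit $T\to\infty$ using continuity of the entropy and linearity of the trace to identify $\langle \vec p(t)\rangle_\infty$ with $\vec p_\omega$. Your extra care about almost-periodicity and the existence of the infinite-time mean is a minor refinement that the paper leaves implicit, but it does not change the argument.
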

\begin{proof}
The proof of this statement is an application of the Jensen inequality which states that for a concave, real-valued function $g[X]$ the following inequality holds: $\left\langle g[X]\right\rangle \leq g\left[\left\langle X \right\rangle\right]$.
In this context, $X$ is a real-valued random variable that can in general be defined on a convex subset of $\mathbb R^r$, as shown for example in Theorem~7.11 of~\cite{Klenke2020}.
For our setting, we have that $g[\,\cdot\,]=S_\mathrm{Sh}^O[\,\cdot\,]$ is concave as a function of the probability vector $\vec p(t)$ defined by the observable outcomes.
Note that it is not concave as a function of $t$, though this is, of course, not needed here.
Formally, the Shannon observable entropy as defined in the main text is a function of density matrices, but since it only depends on the population vectors, we simplify the notation here to simply write $S_\mathrm{Sh}^O[\vec p]$ to actually mean $S_\mathrm{Sh}^O[\rho]$, where $\vec p$ is the population vector of $\rho$ relative to the observable $O$.
By identifying $X=\vec p(t)$, and taking the uniform measure over the space $t\in [0,T]$, Jensen's inequality reads
\begin{align}
    \frac{1}{T}\int_0^T dt  S_\mathrm{Sh}^O[\vec p(t)] \leq S_\mathrm{Sh}^O\left[\frac{1}{T}\int_0^T dt\vec p(t)\right].
\end{align}
Upon taking the limit $T\rightarrow\infty$, this yields for the left-hand side $\langle S_\mathrm{Sh}^O[\rho(t)]\rangle_\infty$ and for the right-hand side (by using continuity of the Shannon entropy), we can pull the limit $t\rightarrow\infty$ inside the entropy to find 
\begin{align}
    \lim_{t\rightarrow\infty} S_\mathrm{Sh}^O\left[\frac{1}{T}\int_0^T dt\vec p(t)\right] &= S_\mathrm{Sh}^O[\vec p_\omega].
\end{align}
Furthermore, we have used linearity of the trace, $\langle p_i(t)\rangle_\infty = \langle \tr\left[\Pi_i^O \rho(t) \right]\rangle_\infty = \tr\left[\Pi_i^O \langle \rho(t) \rangle_\infty\right] = p_{i,\omega}$ to conclude that the infinite time average of the population equals the population of the infinite time average, completing the proof of the Lemma.
\end{proof}

Moving on towards proving our main result, Theorem~\ref{thm:shannonbound}, we devise the following proof strategy:
\begin{itemize}
    \item First, we want to determine conditions under which the populations of $\rho(t)$ and $\omega$ with respect to the observable $O$ do not differ much on average.
    \item Second, once we know that the populations of the two states are close to each other for most times, we can show that the Shannon observational entropy (as well as the observational entropy) of $\rho(t)$ is also close to that of $\omega$ for most times.
\end{itemize}
We start with the first part. Let us recall that $\vec p(t) = (p_1,\dots,p_r)$ with $p_i=\tr[\Pi_i^O \rho]$ is the vector of populations of $\rho(t)$ with respect to the observable's eigenspaces.
Similarly, $\vec p_\omega$ is defined for the infinite time averaged state $\omega$.
Then, in Eq.~\eqref{eq:avg_pt_pomega} of the main text we claimed that $\langle\|\vec p(t) - \vec p_\omega\|_1\rangle_T$ is small, which we prove in detail as part of Lemma~\ref{lemma:populationbound}.
The key reason why the populations' differences are small on average is based on the results for observables from the theory of equilibration on average, which state that $\tr[O\rho(t)]$ is close to $\tr[O\omega]$ most of the time.
One subtle point here concerns the number of macrostates: while for the observables' expectation values it was sufficient to have that $d_\mathrm{eff}$ is large for equilibration, now, the effective dimension has to be large compared to $r$, the number of macrostates corresponding to the observable $O$.
We can therefore state that observable expectation values equilibrate under weaker conditions than the population vectors.

\begin{restatable}[Equilibrating populations]{lemma}{populationbound}
\label{lemma:populationbound}
    Let $\vec p(t)$ and $\vec p_\omega$ be as in the introduction preceding this Lemma.
    Furthermore, take $\varepsilon>0$ and $T>0$ to be arbitrary, then,
    \begin{align}
        \left\langle \left\|\vec p(t) - \vec p_\omega \right\|_1\right\rangle_T \leq \frac{1}{2}\sqrt{\frac{r}{d_\mathrm{eff}}f(\varepsilon,T)} =: \eta_{\varepsilon,T},
    \end{align}
    where the constant $\eta_{\varepsilon,T}$ is defined with $f(\varepsilon,T)$ like in Eq.~\eqref{eq:f_eps_T}.
\end{restatable}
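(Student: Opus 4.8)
The plan is to reduce the time-averaged $1$-norm to the Euclidean distance between the population vectors, and then to prove a \emph{collective} finite-time equilibration bound on the summed squared population deviations that avoids a spurious factor of $r$. Write $\Delta_i(t):=p_i(t)-p_{\omega,i}=\tr[\Pi_i^O(\rho(t)-\omega)]$, so that the quantity to bound is $\langle\|\vec p(t)-\vec p_\omega\|_1\rangle_T=\tfrac12\langle\sum_{i=1}^r|\Delta_i(t)|\rangle_T$. First I would apply the Cauchy--Schwarz inequality over the outcome index, $\sum_i|\Delta_i(t)|\le\sqrt r\,(\sum_i\Delta_i(t)^2)^{1/2}$, and then use concavity of the square root together with Jensen's inequality to move the finite-time average inside the root:
\begin{align}
\left\langle\|\vec p(t)-\vec p_\omega\|_1\right\rangle_T\le\tfrac12\sqrt r\,\Big(\textstyle\sum_{i=1}^r\langle\Delta_i(t)^2\rangle_T\Big)^{1/2}.
\end{align}
It then suffices to establish the collective bound $\sum_{i=1}^r\langle\Delta_i(t)^2\rangle_T\le f(\varepsilon,T)/d_\mathrm{eff}$, since substituting it yields exactly $\tfrac12\sqrt{(r/d_\mathrm{eff})f(\varepsilon,T)}=\eta_{\varepsilon,T}$.

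For the collective bound I would expand each $\Delta_i(t)$ in the energy eigenbasis, $\Delta_i(t)=\sum_{E_\alpha\neq E_\beta}\rho_{\alpha\beta}(\Pi_i^O)_{\beta\alpha}e^{-i(E_\alpha-E_\beta)t}$, where the energy-diagonal blocks cancel against $\omega$. Applying the finite-time averaging technique underlying Eq.~\eqref{eq:ReimannBound}~\cite{Short_2012}, the oscillating cross terms are controlled by grouping energy gaps into windows of width $\varepsilon$; this produces the universal, observable-independent prefactor $f(\varepsilon,T)$ and reduces the estimate to the infinite-time structure $\sum_i\sum_{E_\alpha\neq E_\beta}|\rho_{\alpha\beta}|^2|(\Pi_i^O)_{\alpha\beta}|^2$. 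The crucial algebraic point is that this double sum is bounded \emph{collectively}: using positivity of $\rho$ in the form $|\rho_{\alpha\beta}|^2\le\rho_{\alpha\alpha}\rho_{\beta\beta}$, a Cauchy--Schwarz step that symmetrizes $\rho_{\alpha\alpha}\rho_{\beta\beta}$ into $\rho_{\alpha\alpha}^2$, and the completeness relation $\sum_i\sum_\beta|(\Pi_i^O)_{\alpha\beta}|^2=\sum_i(\Pi_i^O)_{\alpha\alpha}=1$, one finds
\begin{align}
\sum_{i=1}^r\sum_{\alpha,\beta}|\rho_{\alpha\beta}|^2\,|(\Pi_i^O)_{\alpha\beta}|^2\le\sum_{\alpha}\rho_{\alpha\alpha}^2\le\tr[\omega^2]=\frac{1}{d_\mathrm{eff}}.
\end{align}
Combining this with the finite-time prefactor and the Cauchy--Schwarz/Jensen chain above completes the proof.

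The main obstacle, and indeed the entire content of the quadratic improvement over Refs.~\cite{Short_2012,gogolin2016equilibration}, is to resist bounding the $r$ terms $\langle\Delta_i(t)^2\rangle_T$ separately. Each individual projector satisfies the single-observable bound $\langle\Delta_i(t)^2\rangle_T\le f(\varepsilon,T)/d_\mathrm{eff}$ (as $\|\Pi_i^O\|=1$), but summing these $r$ estimates gives only $\sum_i\langle\Delta_i(t)^2\rangle_T\le r\,f(\varepsilon,T)/d_\mathrm{eff}$, which propagates to the weaker scaling $\tfrac12\,(r/\sqrt{d_\mathrm{eff}})\sqrt{f(\varepsilon,T)}$. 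The sharper result hinges on exploiting the completeness $\sum_i\Pi_i^O=\mathbb{I}$ to collapse the sum over outcomes \emph{before} bounding by the operator norm, so that $r$ enters only through the single factor $\sqrt r$ from the Cauchy--Schwarz passage between $1$- and $2$-norms.

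A secondary technical point, which I would import essentially unchanged from the finite-time analysis, is the careful bookkeeping of near-degenerate energy gaps encoded in $N(\varepsilon)$ and of the degenerate-energy blocks of $\omega$ (for which $\sum_\alpha\rho_{\alpha\alpha}^2\le\tr[\omega^2]$ still holds). Because the prefactor $f(\varepsilon,T)$ depends only on the Hamiltonian and not on the observable, it is common to every term in the sum over $i$ and therefore factors cleanly out of the collective estimate, so the finite-time generalization of the infinite-time inequality above requires no modification of the key algebraic step.
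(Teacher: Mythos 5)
Your proposal is correct and follows essentially the same route as the paper's proof: pass from the $1$-norm to the $2$-norm via Cauchy--Schwarz (paying the single factor $\sqrt{r}$), move the time average inside the square root by Jensen, and obtain the collective bound $\sum_{i}\langle\Delta_i(t)^2\rangle_T\leq f(\varepsilon,T)\,\tr[\omega^2]=f(\varepsilon,T)/d_\mathrm{eff}$ by exploiting $\sum_i\Pi_i^O=\mathds{1}$ together with hermiticity and idempotence of the projectors. The only presentational difference is that the paper invokes the intermediate Eq.~(9) of Ref.~\cite{Short_2012} as a per-projector bound $\langle\Delta_i(t)^2\rangle_T\leq f(\varepsilon,T)\tr[\Pi_i^O\omega^2]$ and then sums (which also silently handles the degenerate-gap bookkeeping you defer to the reference), whereas you re-derive that machinery at the level of energy-eigenbasis matrix elements before collapsing the sum over outcomes.
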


\begin{proof}
Note that since our result scales with $\sqrt{r}$, it is in fact a tighter bound than the one shown in Theorem~2 of Ref.~\cite{Short_2012}, where the upper bound scales with $r$.
To start the proof, let us recall the definition of $\|\cdot \|_1$ here, which is given by
\begin{align}
    \|\vec p(t) - \vec p_\omega \|_1 = \frac{1}{2}\sum_{i=1}^r \left|\tr\left[\Pi_i^O \rho(t)\right]-\tr\left[\Pi_i^O \omega\right]\right|.
    \label{eq:p_1norm}
\end{align}
We are interested in an upper bound on the time-average of this quantity.
To use the bounds on observable equilibration on average~\cite{reimann2008foundation,Short_2012,gogolin2016equilibration}, we need to consider time-averages of squares of trace differences and not only trace differences.
To be explicit, we want to use the intermediate Eq.~(9) from~\cite{Short_2012}, which states
\begin{align}
    &\left\langle \left|\tr\left[\Pi_i^O \rho(t)\right]-\tr\left[\Pi_i^O \omega\right]\right|^2\right\rangle_T \nonumber \\
    &\quad \leq f(\varepsilon,T) \sqrt{\tr\left[\Pi_i^{O\dagger} \Pi_i^O \omega^2\right]\tr\left[\Pi_i^O \Pi_i^{O\dagger} \omega^2\right]}.
    \label{eq:p_2norm}
\end{align}
Equation~\eqref{eq:p_1norm} can be converted into a suitable form to use Eq.~\eqref{eq:p_2norm} by making the following transformations,
\begin{align}
    \left\langle \left\|\vec p(t)- \vec p_\omega \right\|_1\right\rangle_T &\leq \frac{\sqrt{r}}{2}\left\langle \left\|\vec p(t) -\vec p_\omega \right\|_2\right\rangle_T \label{eq:12normineq}\\
    &\leq \frac{\sqrt{r}}{2}\sqrt{\left\langle \left\|\vec p(t)- \vec p_\omega \right\|_2^{\, 2}\right\rangle_T}. \label{eq:jensenSQRT}
\end{align}
In the first step for Ineq.~\eqref{eq:12normineq} we made use of the norm-inequality $2\|\cdot\|_1 \leq {\sqrt{r}}\|\cdot \|_2$ to switch from the $1$-norm to the $2$-norm, and in the second step for Ineq.~\eqref{eq:jensenSQRT} we apply the Jensen inequality $\langle \sqrt{X}\rangle \leq \sqrt{\langle X\rangle}$ using concavity of the square root function $\sqrt{\cdot}$.
Now, we can use Eq.~\eqref{eq:p_2norm} together with the fact that orthogonal projectors are Hermitian, $\Pi_i^O = \Pi_i^{O\dagger}$, and satisfy $(\Pi_i^O)^2=\Pi_i^O$, to write
\begin{align}
    \left\langle\left\|\vec p(t) -\vec p_\omega\right\|_2^{\,2}\right\rangle_T &\leq f(\varepsilon,T)\sum_{i=1}^r \tr\left[\Pi_i^O \omega^2\right] \label{eq:Pi_omega2} \\
    & = f(\varepsilon,T) \tr\left[\omega^2\right] \label{eq:Pi_iO_completeness} \\
    &= \frac{f(\varepsilon,T)}{d_\mathrm{eff}}.\label{eq:omega2_deff}
\end{align}
For Eq.~\eqref{eq:Pi_iO_completeness} we have used the completeness relation $\sum_{i=1}^r \Pi_i^O = \mathds 1$ and for Eq.~\eqref{eq:omega2_deff} we have used the definition of the effective dimension $d_\mathrm{eff}$ also stated in the main text.
Combining this together with Eq.~\eqref{eq:jensenSQRT} yields the desired result.
\end{proof}

Now that we have checked the first item of our proof strategy outlined in the beginning of Sec.~\ref{appendix:proofs_entropy_bounds}, we are ready to show Theorem~\ref{thm:shannonbound}, which we restate in the following for completeness.

\shannonbound*

\begin{proof}
We subdivide the proof of this statement into two steps: (1) We express the left-hand side of Eq.~\eqref{eq:S_sh_bound} in terms of population differences, giving us a bound on the entropy difference on average expressed as a function of the average difference in populations between $\rho(t)$ and the infinite-time averaged state $\omega$. (2) Then, we use the result from Lemma~\ref{lemma:populationbound}, which we can then insert into the bound found in the first step, finalizing the proof of this Theorem.

\paragraph*{Step (1).} We start by using the continuity bound for the Shannon entropy derived in \cite{Zhang2007} which states that
\begin{align}
\label{eq:entropy_continuity_bound}
    |S_\mathrm{sh}[\vec p] - S_\mathrm{sh}[\vec q]| \leq \log(r-1)\|\vec \vartheta\|_1 + H\left[\|\vec \vartheta\|_1\right],
\end{align}
for all $r$-dimensional probability vectors $\vec p$ and $\vec q,$ with $\vec\vartheta = \vec p - \vec q,$ and $S_\mathrm{Sh}[\vec p] = -\sum_{i=1}^r p_i\log p_i$ the Shannon entropy. Here, the 1-norm is defined as in the main text,
\begin{align}
\label{eq:1-norm_vectors}
    \|\vec \vartheta\|_1 = \frac{1}{2}\sum_{i=1}^r |\vartheta_i|,
\end{align}
where we highlight the factor $\frac{1}{2}$, and $H$ is again the binary entropy.
By taking $\vec p(t)$ as the first vector and $\vec q := \vec p_\omega$ as the second one, we may now apply the continuity bound from Eq.~\eqref{eq:entropy_continuity_bound} to the Shannon entropies from the Lemma, with $\vartheta(t) = \vec p(t) - \vec p_\omega.$
Averaging preserves inequalities, and consequentially, we find that
\begin{align}
    &\left\langle \left |S_{\mathrm{Sh}}^O[\rho(t)]-S_{\mathrm{Sh}}^O[\omega]\right|\right\rangle_T \nonumber \\
    &\qquad = \left\langle \left |S_{\mathrm{Sh}}[\vec p(t)]-S_{\mathrm{Sh}}[\vec p_\omega]\right|\right\rangle_T \\
    &\qquad \leq \log(r-1)\langle\|\vec \vartheta\|_1 \rangle_T + \langle H_2[\|\vec \vartheta\|_1]\rangle_T.
\end{align}
Now, we use that the binary entropy $H$ is concave, which allows us to apply Jensen's inequality once again,
\begin{align}
    \langle H_2[\|\vec \vartheta\|_1]\rangle_T \leq H_2[\langle \|\vec \vartheta\|_1\rangle_T],
\end{align}
finding an upper bound of the average binary entropy.
Thereby, we conclude the first step which relates the average difference of Shannon entropies to an average of the differences in the populations,
\begin{align}
    &\left\langle \left |S_{\mathrm{Sh}}^O[\rho(t)]-S_{\mathrm{Sh}}^O[\omega]\right|\right\rangle_T \nonumber \\
    &\qquad\leq \log(r-1)\langle\|\vec \vartheta\|_1 \rangle_T + H_2[\langle \|\vec \vartheta\|_1\rangle_T].
    \label{eq:S_diff_step1}
\end{align}

\paragraph*{Step (2).}
Recalling that the binary entropy $H_2[x]$ is monotonously growing for $x\in[0,1/2]$ allows us to insert the inequality from Lemma~\ref{lemma:populationbound}, $\langle\|\vec\vartheta\|_1\rangle_T \leq \eta_{\varepsilon,T}$ into the result from the first step~\eqref{eq:S_diff_step1}. We find,
\begin{align}
    \left\langle \left |S_{\mathrm{Sh}}^O[\rho(t)]-S_{\mathrm{Sh}}^O[\omega]\right|\right\rangle_T \leq \log(r-1)\eta_{\varepsilon,T} + H\left[\eta_{\varepsilon,T}\right].
\end{align}
One subtle point is that the binary entropy grows monotonously only for $x\in[0,1/2]$, hence, the result only holds for values $0<\eta_{\varepsilon,T} \leq 1/2$.
This is not restrictive for two reasons: first, by replacing $H(x)$ with $\log 2$ for arguments greater than $1/2$ the result holds for all values of $\eta_{\varepsilon,T}$ and secondly, if $\eta_{\varepsilon,T}>1/2$, we are anyways in a regime where the bound is not tight -- i.e.\ $\log r$ is the maximum Shannon observable entropy attainable, and in the regime of relatively large $\eta_{\varepsilon,T}>1/2$, the difference between $S_\mathrm{Sh}^O[\rho(t)]$ and $S_\mathrm{Sh}^O[\omega]$ is of the order of the entropies themselves.
Since this technicality does not affect the qualitative result in the relevant limit of $d_\mathrm{eff} \gg r$ of equilibration, we have forgone detailing it in the main text.
This concludes the proof of the Theorem.
\end{proof}

\begin{figure*}
    \centering
    \includegraphics[width=\textwidth]{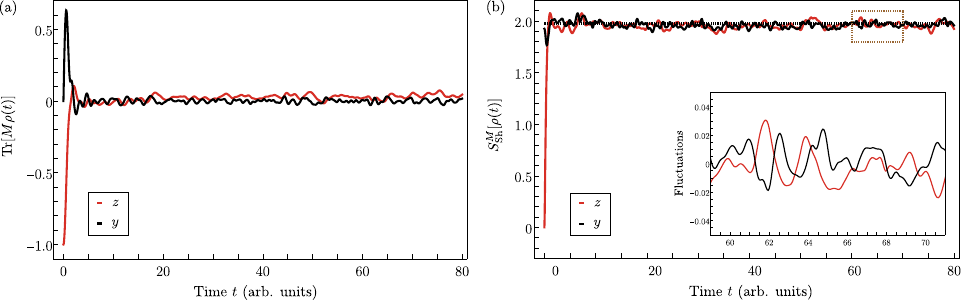}
    \caption{For two observables, the bulk magnetization $M_z$ in the $z$-direction and the bulk magnetization $M_y$ in the $y$-direction, the expectation values (panel (a)) and the Shannon entropy (panel (b)) are plotted as functions of time, for a spin chain of length 11, for the same initial state and Hamiltonian as in the main text.
    The inset in panel (b) shows the relative fluctuations at late times, as the equivalent inset in Fig.~\ref{fig:spinchain_panel}(b) does.
    We see that being initially in a low-entropy state with respect to the $z$-magnetization here means to be in a high-entropy state with respect to the $y$-magnetization, and in the transient regime the $y$-magnetization first fluctuates strongly compared to the late time fluctuations (see panel (a)) and as a result, the Shannon entropy also initially fluctuates to a lower-than-equilibrium value, as seen for short times in panel (b).}
    \label{fig:AppendixFigs2}
\end{figure*}

\subsection{\label{appendix:proofs_fluctuations}Proofs for fluctuation theorems}

Here we provide a proof for the fluctuation bounds for the Shannon observable entropy (Corollary~\ref{cor:shannonfluctuations}), which can also by straightforward analogy serve as the proof of Corollary~\ref{cor:observablefluctuations} concerning the fluctuations of the observational entropy.

\shannonfluctuations*

\begin{proof}
    We provide additional details for the proof of Corollary~\ref{cor:shannonfluctuations} by showing a simple proof for the generalized Chebyshev inequality, which we recall here for completeness:
    \begin{align}
        P[ X\geq \kappa] \leq \frac{E[g(X)]}{g(\kappa)}.
    \end{align}
    We can re-write the probability as an integral over the corresponding measure-space $\Omega$ with measure $\mu$,
    \begin{align}
    \label{eq:generalized_chebyshev}
        P[ X\geq \kappa] = \int_\Omega \mathds 1_{\{X\geq \kappa\}} d\mu,
    \end{align}
    with $\mathds 1_{\{X\geq \kappa\}}$ being the characteristic function of the pre-image of ${\{X\geq \kappa\}}\subseteq \Omega$.
    Because $g$ is monotonously increasing and $g(\kappa)>0$ by assumption, we have $\mathds 1_{\{X\geq \kappa\}} \leq g(X)/g(\kappa)$ everywhere. From this, it follows that
    \begin{align}
        \int_\Omega \mathds 1_{\{X\geq \kappa\}} d\mu\leq \frac{1}{g(\kappa)}\int_\Omega g(X) d\mu = \frac{E[g(X)]}{g(\kappa)},
    \end{align}
    which is the Chebyshev inequality as claimed.
    For our purposes, the measure space is $\Omega =\mathbb R_{\geq 0}$ and the random variable $X$ is given by the mapping
    \begin{align}
        t\mapsto X:= \left|S_\mathrm{Sh}^O[\rho(t)] - S_\mathrm{Sh}^O[\omega] \right|. 
    \end{align}
    The measure $\mu$ can be taken to be a uniform distribution over the interval $[0,T]$ for which we can take the limit $T\rightarrow\infty$. This results in the expectation value
    \begin{align}
        E[X] = \frac{{\log(r-1)}}{2}\sqrt{\frac{r}{d_\mathrm{eff}}} + H_2\left[\frac{1}{2}\sqrt{\frac{r}{d_\mathrm{eff}}}\right] =: \delta,
    \end{align}
    as a direct consequence of Theorem~\ref{thm:shannonbound}. Setting $\kappa:=\sqrt{\delta}$ yields
    \begin{align}
        P[X\geq \sqrt{\delta}] \leq \frac{\delta}{\sqrt{\delta}}=\sqrt{\delta},
    \end{align}
    which completes the proof as claimed in the main text.
\end{proof}

Instead of using the symmetric fluctuation bounds in Corollary~\ref{cor:shannonfluctuations} where both the magnitude of the fluctuations as well as the probability of them is compared to $\sqrt{\delta}$, one can use a more general re-scaled form.
Starting with the Chebyshev inequality~\eqref{eq:generalized_chebyshev} we could instead write a more general bound
\begin{align}
\label{eq:generalized_fluctuations_shannon}
        P\left[\left|S_\mathrm{Sh}^O[\rho(t)] - S_\mathrm{Sh}^O[\omega] \right|\geq \mu \right]\leq \frac{\delta}{\mu},
\end{align}
where on the left-hand side we consider fluctuations of the Shannon entropy of the order of $\mu$.
At the same time, on the right-hand side, the inverse is multiplied giving $\delta/\mu$.
This generalized bound may be useful in the setting where one considers entropy fluctuations of a specific magnitude $\mu>\sqrt{\delta}$ and therefore the bound in Corollary~\ref{cor:shannonfluctuations} does not directly apply anymore.
The Chebyshev inequality is in general not tight, though, and therefore it is expected that for some choices of parameters, the probability of fluctuations is overestimated by this method.

\section{Additional entropies}
As we discussed in the main text, there is no unique entropy with respect to which one can examine irreversibility in the context of isolated quantum system's evolution, and given different measures of entropy than the ones we used, one would have to adapt the results.
For the main part of this work, we have focused on the Shannon entropy of an observable.
The von Neumann entropy, however, is arguably the most commonly used entropy measure for quantum states, and thus, we discuss in Sec.~\ref{appendix:second_law_vN_entropy} how to understand the von Neumann entropy in the context of equilibration.
In the context of observables the observational entropy is another common notion of entropy, and in Sec.~\ref{appendix:generalization_observational_entropy} we provide the results from our main text adapted to the observational entropy.

\subsection{\label{appendix:second_law_vN_entropy}A second law for the von Neumann entropy?}
Give some state $\rho,$ its von Neumann entropy is constant under arbitrary unitary transformations $S_\mathrm{vN}[\rho]=S_\mathrm{vN}[U\rho U^\dagger]$~\cite{Nielsen2010}.
As a consequence, as already stated in the main text, if a state $\rho(t)$ evolves unitarily according to some Hamiltonian $H$, we have that $S_\text{vN}[\rho(0)]=S_\text{vN}[\rho(t)]$.
Going towards time-averaged states,
\begin{align}
\label{eq:timeavg}
    \langle\rho(t)\rangle_T = \frac{1}{T}\int_0^T dt \rho(t),
\end{align}
with $\omega = \lim_{T\rightarrow\infty}\langle \rho(t)\rangle_T,$ it is given that $S_\mathrm{vN}[\rho(t)]<S_\mathrm{vN}[\omega].$
Looking at the entropy $S_\mathrm{vN}[\langle\rho(t)\rangle_T]$ of the time-averaged state,
this function does in general not monotonously increase with $T$.
A precessing spin $\ket{\psi(t)}=\sin(gt)\ket{\uparrow} + \cos(gt)\ket{\downarrow}$ already provides a counterexample where the entropy of the time-averaged state is not monotonously growing in $T$.
Due to the continuity of the von Neumann entropy~\cite{Fannes1973}, the entropy of the finite-time averaged state tends to that of the infinite-time averaged state as we increase the averaging time $T$, i.e., 
$S_\text{vN}[\langle\rho(t)\rangle_{T}] \to S_\text{vN}[\omega]$ as $T\to\infty$.
Quantitatively, the convergence of $S_\mathrm{vN}[\langle\rho(t)\rangle_T]$ to $S_\mathrm{vN}[\omega]$ is captured by the following result:

\begin{restatable}[Von Neumann entropy of time-averaged states]{proposition}{vNtimeAvg}
\label{prop:vNtimeAvg}
    Let $\rho(t)$ be the time-evolution of some initial state $\rho_0$ according to the Hamiltonian $H$ with spectrum $\{\lambda_i\}_i$, all defined on a $d$-dimensional Hilbert space, and let $\langle \rho(t)\rangle_T$ be the time-averaged state as in Eq.~\eqref{eq:timeavg}.
    Then,
    \begin{align}
    \label{eq:S_vN_diff}
        \left|S_\mathrm{vN}\left[\langle\rho(t)\rangle_T\right] - S_\mathrm{vN}\left[\omega\right]\right| \leq \log(d) \vartheta + H_2[\vartheta],
    \end{align}
    with $\vartheta = {2\sqrt{d}}/{(\omega_\mathrm{min}T)}$ and $\omega_\mathrm{min} := \min\{|\lambda_i-\lambda_j| \neq 0\}$ is the smallest non-zero magnitude gap of $H$.
\end{restatable}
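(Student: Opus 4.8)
The plan is to reduce the entropy difference to a distance between the two time-averaged states and then control that distance explicitly. The right-hand side of~\eqref{eq:S_vN_diff} has exactly the shape of the Fannes--Audenaert continuity bound for the von Neumann entropy, namely $|S_\mathrm{vN}[\rho]-S_\mathrm{vN}[\sigma]|\leq \log(d-1)\,\tau + H_2[\tau]$ with $\tau=\tfrac12\|\rho-\sigma\|_1$ the trace distance. So the first step is to apply this bound to $\rho=\langle\rho(t)\rangle_T$ and $\sigma=\omega$, which leaves only the task of showing that the trace distance between them is at most $\vartheta=2\sqrt d/(\omega_\mathrm{min}T)$. Since $x\mapsto \log(d-1)x+H_2[x]$ is monotonically increasing on $[0,1-1/d]$ and $\log(d-1)\leq\log d$, any such bound on the distance upgrades directly to the stated inequality.

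First I would write both averaged states in the energy eigenbasis. Expanding $\rho(t)=\sum_{\lambda,\lambda'}e^{-i(\lambda-\lambda')t}\Pi_\lambda\rho(0)\Pi_{\lambda'}$ and performing the time integral, the diagonal blocks ($\lambda=\lambda'$) pick up unit weight while the off-diagonal blocks acquire the coefficient $c_{\lambda\lambda'}=(1-e^{-i(\lambda-\lambda')T})/(i(\lambda-\lambda')T)$. In the limit $T\to\infty$ all off-diagonal weights vanish, so that $\omega=\sum_\lambda\Pi_\lambda\rho(0)\Pi_\lambda$, and the difference is purely off-diagonal,
\begin{align}
\langle\rho(t)\rangle_T-\omega=\sum_{\lambda\neq\lambda'}c_{\lambda\lambda'}\,\Pi_\lambda\rho(0)\Pi_{\lambda'}.
\end{align}
Each coefficient is uniformly small, $|c_{\lambda\lambda'}|=|1-e^{-i(\lambda-\lambda')T}|/(|\lambda-\lambda'|T)\leq 2/(|\lambda-\lambda'|T)\leq 2/(\omega_\mathrm{min}T)$, using that every nonzero gap is at least $\omega_\mathrm{min}$.

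The main obstacle is that the trace norm of this difference is not directly computable from the block decomposition, so I would pass to the Hilbert--Schmidt norm via $\|A\|_1\leq\sqrt d\,\|A\|_2$ --- this is precisely the step that introduces the factor $\sqrt d$ appearing in $\vartheta$. The Hilbert--Schmidt norm is then straightforward, because the blocks $\Pi_\lambda\rho(0)\Pi_{\lambda'}$ are mutually orthogonal in the Hilbert--Schmidt inner product, giving
\begin{align}
\|\langle\rho(t)\rangle_T-\omega\|_2^2=\sum_{\lambda\neq\lambda'}|c_{\lambda\lambda'}|^2\,\|\Pi_\lambda\rho(0)\Pi_{\lambda'}\|_2^2\leq\frac{4}{(\omega_\mathrm{min}T)^2}\,\tr[\rho(0)^2]\leq\frac{4}{(\omega_\mathrm{min}T)^2},
\end{align}
where I used $\sum_{\lambda,\lambda'}\|\Pi_\lambda\rho(0)\Pi_{\lambda'}\|_2^2=\tr[\rho(0)^2]\leq 1$. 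Hence $\|\langle\rho(t)\rangle_T-\omega\|_2\leq 2/(\omega_\mathrm{min}T)$, and combining with $\|\cdot\|_1\leq\sqrt d\,\|\cdot\|_2$ yields $\|\langle\rho(t)\rangle_T-\omega\|_1\leq 2\sqrt d/(\omega_\mathrm{min}T)=\vartheta$, so the trace distance is at most $\tfrac12\vartheta\leq\vartheta$. Feeding this into the continuity bound of the first step completes the argument. The only points requiring care are the monotonicity regime of the continuity function (the bound is informative only while $\vartheta\lesssim 1/2$, as in the binary-entropy reasoning used for Theorem~\ref{thm:shannonbound}) and the bookkeeping of the factor-of-two conventions between the $1$-norm and the trace distance.
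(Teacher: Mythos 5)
Your proof is correct and takes essentially the same route as the paper's: apply a Fannes--Audenaert continuity bound, convert the trace norm to the Hilbert--Schmidt norm via $\|\cdot\|_1\leq\sqrt{d}\,\|\cdot\|_2$ (the source of the $\sqrt d$ in $\vartheta$), expand in the energy eigenbasis, and bound each time-averaged off-diagonal phase by $2/(\omega_\mathrm{min}T)$ together with $\tr[\rho(0)^2]\leq 1$. The only cosmetic differences are that you organize the expansion into spectral-projector blocks $\Pi_\lambda\rho(0)\Pi_{\lambda'}$ with coefficients $c_{\lambda\lambda'}$ where the paper uses matrix elements and a sinc function (these are the same decomposition), and you both flag the same $\vartheta\leq 1/2$ monotonicity caveat.
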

\begin{proof}
The statement follows from using the tight Fannes-Audeneart continuity bound for the von Neumann entropy~\cite{Fannes1973,Audenaert2007} which states
\begin{align}
\label{eq:S_vN_FannesAudeneart}
    &\left|S_\mathrm{vN}\left[\langle\rho(t)\rangle_T\right] - S_\mathrm{vN}\left[\omega\right]\right| \\
    &\quad \leq \frac{1}{2}\log(d) \|\langle\rho(t)\rangle_T - \omega\|_1 + H_2\left[\frac{1}{2}\|\langle\rho(t)\rangle_T - \omega\|_1\right]. \nonumber
\end{align}
To determine the desired upper bound in Eq.~\eqref{eq:S_vN_diff} we must find how quickly the $1$-norm distance between $\langle\rho(t)\rangle_T$ and $\omega$ converges to zero.
The $1$-norm is given by
\begin{align}
\label{eq:1-norm-quantum}
    \left\|\sigma\right\|_1 := \tr\left[\sqrt{\sigma^\dagger\sigma}\right],
\end{align}
and can be upper bounded by the $2$-norm using the Cauchy-Schwartz inequality~\cite{Coles2019},
\begin{align}
    \left\|\sigma \right\|_1 &= \tr\left[\mathds 1 \sqrt{\sigma^\dagger\sigma}\right] \\
    &\leq \sqrt{\tr\left[\mathds 1^\dagger \mathds 1\right]\tr\left[\sqrt{\sigma^\dagger\sigma}^\dagger \sqrt{\sigma^\dagger\sigma}\right]} \\
    &=\sqrt{d}\left\|\sigma\right\|_2.\label{eq:1_2_norm_equiv}
\end{align}
With the $2$-norm at hand, we can make analytical progress.
To that end, let us consider $\rho(t)$ written in the energy-eigenbasis decomposition   with respect to the Hamiltonian $H$ generating the evolution, which allows us to write
\begin{align}
    \left\langle\rho(t)\right\rangle_T &= \sum_{ij} \rho_{ij}\left\langle e^{-i\omega_{ij}t} \right\rangle_T \ketbra{i}{j},
\end{align}
where $\ket{i}$ is an energy eigenstate indexed by the label $i$ and $\omega_{ij}=\lambda_i-\lambda_j$ is the difference of energies between $i$th and $j$th energy eigenstate. Note that for degenerate Hamiltonians there is the possibility that $\omega_{ij}=0$ for $i\neq j.$
We can write
\begin{align}
    \left\langle e^{-i\omega_{ij}t} \right\rangle_T &= e^{-i\omega_{ij}T/2}\mathrm{sinc}\left(\frac{\omega_{ij}T}{2}\right),
\end{align}
with $\mathrm{sinc}(x) = \sin(x)/x,$ with the continuous continuation $\mathrm{sinc}(0)\equiv 1$ to capture the limit $T\rightarrow 0$ or the case $\omega_{ij}=0.$
Then, the $2$-norm distance between the two states can be written exactly as
\begin{align}
    \left\|\left\langle\rho(t)\right\rangle_T-\omega\right\|_2^{\, 2} &=\tr\left[\left(\left\langle\rho(t)\right\rangle_T-\omega\right)^\dagger \left(\left\langle\rho(t)\right\rangle_T-\omega\right)\right] \nonumber \\
    &=\sum_{ij} \left|\rho_{ij}\right|^2 \left|\mathrm{sinc}\left(\frac{\omega_{ij}T}{2}\right) - \delta_{\omega_{ij}0}\right|^2 \nonumber \\
    &= \sum_{ij\,:\,\omega_{ij}\neq 0} \left|\rho_{ij}\right|^2 \left|\mathrm{sinc}\left(\frac{\omega_{ij}T}{2}\right)\right|^2 \nonumber \\
    &\leq \left(\frac{2}{\omega_\mathrm{min} T}\right)^2,\label{eq:2_norm_res}
\end{align}
where we have set $\omega_\mathrm{min}=\min \{|\omega_{ij}|\neq 0\}$ to be the energy gap of $H$ which is of smallest non-zero magnitude.
Combining the results from Eq.~\eqref{eq:1_2_norm_equiv} with Eq.~\eqref{eq:2_norm_res}, we find
\begin{align}
    \|\langle\rho(t)\rangle_T - \omega\|_1 &\leq \frac{2\sqrt d}{\omega_\mathrm{min}T},
\end{align}
which we can directly insert into the entropy inequality~\eqref{eq:S_vN_FannesAudeneart}, to find the upper bound with $\vartheta = \frac{2\sqrt{d}}{\omega_\mathrm{min}T}$, as stated in Proposition~\ref{prop:vNtimeAvg}.
Note that the statement in the given form only holds for values $\vartheta\leq 1/2$, again because $H_2$ is monotonously growing only for arguments $\leq 1/2$.
To remedy this, one could replace $H_2$ by $\log 2$ for arguments $\vartheta > 1/2$, to arrive at a more general statement.
\end{proof}

A comment is in order at this point regarding the convergence rate of the von Neumann entropy of the time-averaged state to that of the equilibrium state.
The bound states that in general $T\geq\sqrt{d}/\omega_\mathrm{min}$ is required for the entropies to be reasonably close; for macroscopically large systems where $d$ is exponential in the system size, such a convergence is of course not of practical relevance. Furthermore, this does not match the observation that equilibration, in practice, occurs in finite time.
The origin of the factor $\sqrt{d}$ comes from the norm-equivalence $\|\cdot\|_1\leq \sqrt{d}\|\cdot\|_2,$ and this raises the question of how tight the given bound is.
When arriving at the final inequality~\eqref{eq:p_2norm}, the inequality we used is in general not very tight, i.e.\ we have simply taken the slowest of all the time-scales to appear in the bound, whereas in practice a tighter bound can be found by taking a weighted average over the time-scales.
There, in practice, $\|\langle\rho(t)\rangle_T-\omega\|_2$ may scale as $1/\sqrt{d}$, if the Hamiltonian is sufficiently gapped and the initial state has support on the entire spectrum (like, for, example an equal superposition over all energy eigenstates of a capped harmonic oscillator). In this case the actual bound will in many instances be tighter than the one stated in Eq.~\eqref{eq:S_vN_diff}, and in fact $T\geq 1/\omega_\mathrm{min}$ is already sufficient for convergence of the entropies. Moreover, for finite dimensional systems with dense spectral energy distribution, the time average state converges slowly to equilibrium.

\subsection{\label{appendix:generalization_observational_entropy}Extension for the observational entropy}

\begin{figure*}
    \centering
    \includegraphics[width=\textwidth]{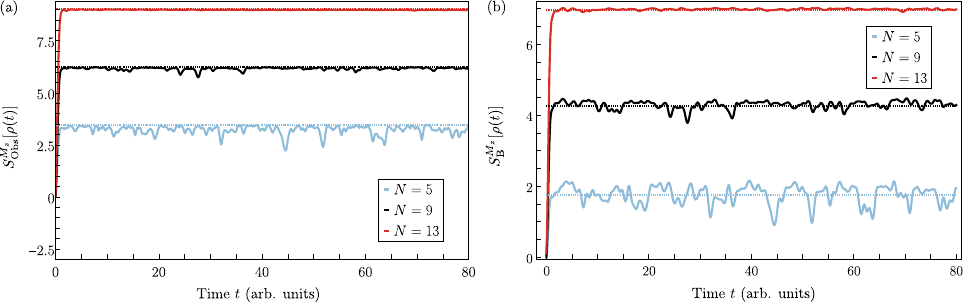}
    \caption{Here, we present exact numerical results showing two more kinds of entropy: the observational entropy in panel (a) and the Boltzmann entropy in panel (b), both defined in Eq.~\eqref{eq:BoltzEntObsEnt}. In both cases, the entropies are those associated with the system, dynamics, and operator discussed in the main text, for the same three spin-chain lengths. In panel (a), the dot-dashed lines represent the \textit{maximum possible} observational entropy for \textit{any} observable in a Hilbert space of that size:~$\log d$, where $d=\mathrm{dim}\,\mathcal{H}$. This is in contrast to panel (b) here (and Fig.~\ref{fig:spinchain_panel}(b)), where the dashed lines represent the entropies of $\omega$. As with the Shannon entropy, the observational and Boltzmann entropies also increase towards their equilibrium values for this choice of observable and initial configuration.}
    \label{fig:AppendixFigs1}
\end{figure*}

Besides the Shannon observable entropy defined in the main text, another notion of entropy relative to an observable is the observational entropy~\cite{Safranek2021,Buscemi2023}.
Given the observable with spectral decomposition $O=\sum_{i=1}^r O_i \Pi_i^O$, the observational entropy can be written as (also see the main text)
\begin{align}
    \label{eq:S_obs_appendix}
    S_\mathrm{Obs}^O[\rho] = -\sum_{i=1}^r p_i \log \frac{p_i}{V_i},
\end{align}
where again $p_i= \tr[\Pi_i^O\rho]$ is the population of $\rho$ in the $i$th eigenspace of $O$ and $V_i=\tr[\Pi_i^O]$ is the dimension of that eigenspace. The dimension $V_i$ can also be understood as the number of microstates that make up the $i$th macrostate of the observable $O$.
The observational entropy can also be understood as the entropy of the coarse-grained state $\rho^{\rm cg} = \sum_{i=1}^r p_i {\Pi_i^O}/{V_i}$ (note that this does not directly generalize to POVMs, as opposed to the remaining results)~\cite{Wehrl1978,Safranek2021},
\begin{align}
    S_{\rm vN}[\rho^{\rm cg}] &= -\sum_{i=1}^r\tr\left[p_i \frac{\Pi_i^O}{V_i} \log \frac{p_i}{V_i}\right] \\
    &= S_{\rm Obs}^O[\rho].\label{eq:Obs_entropy_CG}
\end{align}
Contrary to the Shannon observable entropy that captures the knowledge attainable purely from the measurement outcome, we can therefore understand the observational entropy as the entropy associated with the system, given access knowledge of only the coarse-grained state.

By splitting up the sum into two parts using $\log\frac{p_i}{V_i}=\log p_i - \log V_i,$ we can understand the observational entropy as a sum of the Shannon observable entropy and an averaged Boltzmann entropy,
\begin{align}
\label{eq:BoltzEntObsEnt}
    S_\mathrm{Obs}^O[\rho] = \underbrace{-\sum_{i=1}^r p_i\log p_i}_{=S_\mathrm{Sh}^O[\rho]} + \underbrace{\sum_{i=1}^r p_i \log V_i}_{=S_\mathrm{B}^O[\rho]}.
\end{align}
The first term, the Shannon observable entropy is what we have already discussed in the main text, corresponding to the entropy of the probabilistic distribution of the measurement outcomes.
The second term, the averaged Boltzmann entropy, physically corresponds to an average of the Boltzmann entropy ``$S=k_B\log W$'' over the $r$ possible macroscopic outcomes of the observable with $V_i$ the number of microscopic states compatible with the $i$th outcome of the measurement of $O$.
In Fig.~\ref{fig:AppendixFigs1} we provide a numerical example displaying how the observational entropy splits into the Shannon and Boltzmann terms.
With the additional Boltzmann term, the observational entropy further coarse grains the information available in the state $\rho(t)$, and one may therefore ask whether this additional step is fundamentally necessary to recover the second law relative to an observable.

We answer this question negatively, by first of all noting that the ratio $r/d_{\rm eff}\ll 1$ is sufficient for equilibration of both the Shannon observable entropy as well as the observational entropy (Proposition~\ref{prop:obsentropybound}).
Secondly, there are some pathological cases where the observational entropy even wrongly suggest that a system equilibrates with respect to an observable, when in fact, it does not.
One example where this happens is in in subsystem equilibration. 
For illustration, we may consider a joint spin-$\frac{1}{2}$ and bath system where the bath is assumed to have dimension $d_B$.
Suppose the spin is governed by the Hamiltonian $H_S = g\sigma_x$, it initially starts in the state $\ket{0}$, and the observable of interest is the spin polarization in $z$-direction, $O=\sigma_z\otimes \mathds 1_B$.
If the bath and the spin do not interact, the population vector for the measurement outcome evolves as $\vec p(t)=(\cos(gt)^2,\sin(gt)^2),$ and therefore $S_\mathrm{Sh}^O[\rho(t)]$ periodically fluctuates between $0$ and $\log 2$.
The Boltzmann entropy term, on the other hand, is constant, $S_B^O[\rho(t)]=\log d_B$, and in the limit of a large bath, the relative fluctuations of the Boltzmann entropy vanish as $1/\log d_B$.
Looking solely at the observational entropy, this suggests that the system equilibrates with respect to $O$, whereas physically the contrary is the case: the spin-$\frac{1}{2}$ subsystem under observation is isolated from the environment and does not equilibrate, as correctly captured by the Shannon observable entropy.

In summary, this shows that the additional coarse graining of the observational entropy can result in cases where the Shannon entropy does not equilibrate while the observational entropy does, because of the inclusion of the Boltzmann term.
Cases such as the one discussed before, however, are not physically self-consistent, because if one ignores fluctuations of order $\log 2$ from the observable, the operational way to model this is to use another observable that does not distinguish between those states, for example using a POVM instead of a PVM to model the fact that one can not differentiate between the spin orientations perfectly (see generalization in SM Ref.~\ref{appendix:replacingPOVMs}).
Observable equilibration in an operational sense therefore requires the Shannon observable entropy to equilibrate, not necessarily the observational entropy.
The reason for this apparent difference is that the observational entropy captures the entropy of the system given the information accessible via the observable (as per Eq.~\eqref{eq:Obs_entropy_CG}), whereas the Shannon observable entropy captures the information of the observable.
In the following we show that the condition $d_\mathrm{eff}\gg r$ under which the Shannon observable entropy equilibrates
is also sufficient for the observational entropy to equilibrate in a similar sense:

\begin{restatable}[Observational entropy bound]{proposition}{obsentropybound}
\label{prop:obsentropybound}
    Consider $\rho(t),$ $\omega$ and $O$ be as in Theorem~\ref{thm:shannonbound}. For any $\varepsilon > 0$ and $T> 0$, the observational entropy with respect to $O$ satisfies
    \begin{equation}
        \left\langle \left| S_\mathrm{Obs}^O[\rho(t)] - S_\mathrm{Obs}^O[\omega]\right| \right\rangle_T \leq \log(d)\eta_{\varepsilon,T} +  g\left(\eta_{\varepsilon,T}\right),
    \end{equation}
    where $g(x)=-x\log(x)+ (1+x)\log(1+x)$, with $\eta_{\varepsilon,T}$ as defined for Eq.~\eqref{eq:avg_pt_pomega}.
\end{restatable}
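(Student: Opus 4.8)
The strategy parallels the proof of Theorem~\ref{thm:shannonbound}: I would reduce the averaged entropy difference to the population $1$-norm distance $\langle\|\vec p(t)-\vec p_\omega\|_1\rangle_T$ and then insert Lemma~\ref{lemma:populationbound} after a Jensen step. The one ingredient specific to the observational entropy is to exploit the identity from Eq.~\eqref{eq:Obs_entropy_CG}, namely that $S_\mathrm{Obs}^O[\rho]=S_\mathrm{vN}[\rho^{\rm cg}]$ with coarse-grained state $\rho^{\rm cg}=\sum_{i=1}^r p_i\,\Pi_i^O/V_i$, and likewise $S_\mathrm{Obs}^O[\omega]=S_\mathrm{vN}[\omega^{\rm cg}]$. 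Since $\rho^{\rm cg}(t)$ and $\omega^{\rm cg}$ are genuine density operators on the full $d$-dimensional Hilbert space, the left-hand side is a difference of von Neumann entropies of two $d$-dimensional states, which invites a uniform continuity bound (in the spirit of Proposition~\ref{prop:vNtimeAvg}, but with $\log d$ and $g$ in place of $\tfrac12\log d$ and $H_2$).

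The key computation is the trace distance between the two coarse-grained states. Because the $\Pi_i^O$ are mutually orthogonal projectors onto subspaces of dimension $V_i$, the operator $\rho^{\rm cg}(t)-\omega^{\rm cg}=\sum_{i=1}^r\bigl(p_i(t)-p_{\omega,i}\bigr)\Pi_i^O/V_i$ is already in spectral form: it has singular values $|p_i(t)-p_{\omega,i}|/V_i$, each with multiplicity $V_i$. Hence $\|\rho^{\rm cg}(t)-\omega^{\rm cg}\|_1=\sum_{i=1}^r|p_i(t)-p_{\omega,i}|=2\|\vec p(t)-\vec p_\omega\|_1$, so that the trace distance $\epsilon:=\tfrac12\|\rho^{\rm cg}(t)-\omega^{\rm cg}\|_1$ equals $\|\vec p(t)-\vec p_\omega\|_1$ exactly, which is precisely the quantity controlled by Lemma~\ref{lemma:populationbound}. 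I would then invoke a continuity bound for the von Neumann entropy of the form $|S_\mathrm{vN}[\sigma_1]-S_\mathrm{vN}[\sigma_2]|\le\log(d)\,\epsilon+g(\epsilon)$; crucially, in contrast with the Fannes--Audenaert form used in Proposition~\ref{prop:vNtimeAvg}, this $g$-type bound is valid for all $\epsilon\in[0,1]$, which is what eliminates the $\eta<1/2$ caveat present in Theorem~\ref{thm:shannonbound}.

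Finally I would take the time average and treat the two terms separately. The linear term yields $\log(d)\,\langle\|\vec\vartheta\|_1\rangle_T\le\log(d)\,\eta_{\varepsilon,T}$ directly from Lemma~\ref{lemma:populationbound}, where $\vec\vartheta=\vec p(t)-\vec p_\omega$. For the $g$ term, one checks that $g$ is concave, since $g''(x)=-1/[x(x+1)]<0$, so Jensen's inequality gives $\langle g(\|\vec\vartheta\|_1)\rangle_T\le g\bigl(\langle\|\vec\vartheta\|_1\rangle_T\bigr)$, and that $g$ is monotonically increasing, since $g'(x)=\log(1+1/x)>0$, so that $\langle\|\vec\vartheta\|_1\rangle_T\le\eta_{\varepsilon,T}$ gives $g\bigl(\langle\|\vec\vartheta\|_1\rangle_T\bigr)\le g(\eta_{\varepsilon,T})$. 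Collecting both terms produces the claimed bound $\langle|S_\mathrm{Obs}^O[\rho(t)]-S_\mathrm{Obs}^O[\omega]|\rangle_T\le\log(d)\,\eta_{\varepsilon,T}+g(\eta_{\varepsilon,T})$.

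I expect the main obstacle to be the second step: correctly collapsing the trace distance of the coarse-grained states onto the population $1$-norm (getting the $1/V_i$ weighting and the $V_i$-fold multiplicities to cancel) and securing the continuity bound in the precise $g$-form valid for all $\epsilon$. Everything downstream (Jensen, monotonicity, insertion of Lemma~\ref{lemma:populationbound}) is then routine and mirrors the Shannon case, with the only structural differences being the $\log d$ prefactor—reflecting that $S_\mathrm{Obs}^O$ is bounded by $\log d$ rather than $\log r$—and the replacement of $H_2$ by the everywhere-monotone $g$.
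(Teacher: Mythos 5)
Your proof is correct, and everything downstream of the continuity estimate (time-averaging, Jensen's inequality on the concave $g$, monotonicity of $g$, insertion of Lemma~\ref{lemma:populationbound}) coincides with the paper's own proof of Proposition~\ref{prop:obsentropybound}; your computations $g'(x)=\log(1+1/x)>0$ and $g''(x)=-1/[x(1+x)]<0$ are right, as is the observation that the everywhere-monotone $g$ is what removes the $\eta_{\varepsilon,T}<1/2$ caveat of Theorem~\ref{thm:shannonbound}. Where you genuinely differ is in how the key continuity inequality is obtained. The paper proves a dedicated bound (Lemma~\ref{lemma:obsentropycontinuity}): it identifies the population vectors with diagonal states on an auxiliary $r$-dimensional space, defines $Z[\tau]=-\sum_i \tau_{ii}\log(\tau_{ii}/V_i)$, verifies that $Z$ is bounded concave, and invokes the Schindler--Winter inequality for bounded-concave functionals with $\kappa=\log d$. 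You instead work on the full $d$-dimensional space: using the identity $S_\mathrm{Obs}^O[\rho]=S_\mathrm{vN}[\rho^{\mathrm{cg}}]$ of eq.~\eqref{eq:Obs_entropy_CG}, you show by an exact spectral computation (the $1/V_i$ weights against the $V_i$-fold multiplicities cancel, as you say) that the trace distance of the coarse-grained states equals the population $1$-norm distance, and then apply the Alicki--Fannes--Winter-type bound $|S_\mathrm{vN}[\sigma_1]-S_\mathrm{vN}[\sigma_2]|\leq\epsilon\log d+g(\epsilon)$, which is the same bounded-concavity machinery applied to $S_\mathrm{vN}$ itself rather than to the custom $Z$; the two routes yield literally the same intermediate inequality, so yours is a valid alternative proof of Lemma~\ref{lemma:obsentropycontinuity}. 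The trade-off: your route is more economical, since the von Neumann continuity bound can be cited as a standard result and no auxiliary construction is needed; but the paper's construction buys generality --- as noted right after eq.~\eqref{eq:Obs_entropy_CG}, the coarse-grained-state representation does \emph{not} carry over to POVMs, so your argument breaks precisely where the paper's auxiliary-diagonal-state argument survives and is reused for the POVM generalization in Appendix~\ref{appendix:replacingPOVMs}.
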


To prove the desired inequality for the observational entropy, we want to proceed similarly to the proof of Theorem~\ref{thm:shannonbound}.
In contrast to the Shannon entropy, however, we can not directly use inequality~\eqref{eq:entropy_continuity_bound} to relate the difference in observational entropies to differences in the populations relative to the observable $O$.
The following Lemma, which is a tightening of the continuity bound in Theorem~6 from Ref.~\cite{schindlerwinter2023}, provides the desired statement.
\begin{restatable}[Observational entropy continuity bound]{lemma}{obsentropycontinuity}\label{lemma:obsentropycontinuity}
    Let $\rho$ and $\sigma$ be two quantum states defined on a $d$-dimensional Hilbert space, and have $O$ be an observable on the same space.
    Define $\vec p_\rho$ and $\vec p_\sigma$ to be the population vectors of $\rho$ and $\sigma$, respectively, relative to $O$.
    Then,
    \begin{align}
    \label{eq:obsentropy_continuity}
        |S_\mathrm{Obs}^O[\rho]-S_\mathrm{Obs}^O[\sigma]|\leq g(\|\vec p_\rho - \vec p_\sigma \|_1) + \log d \, \|\vec p_\rho - \vec p_\sigma \|_1,
    \end{align}
    where $g(x)=-x\log(x) + (1+x)\log(1+x)$ as before, and the 1-norm $\|\cdot\|_1$ for vectors is defined as in the main text (or equivalently Eq.~\eqref{eq:1-norm_vectors} of the Appendix).
\end{restatable}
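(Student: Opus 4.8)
The plan is to exploit the fact, recorded in Eq.~\eqref{eq:Obs_entropy_CG}, that the observational entropy is the von Neumann entropy of the coarse-grained state, $S_\mathrm{Obs}^O[\rho]=S_\mathrm{vN}[\rho^\mathrm{cg}]$ with $\rho^\mathrm{cg}=\sum_{i=1}^r p_{\rho,i}\,\Pi_i^O/V_i$, and likewise for $\sigma$. This reduces the desired continuity statement for $S_\mathrm{Obs}^O$ to a continuity statement for the von Neumann entropy applied to the pair $\rho^\mathrm{cg},\sigma^\mathrm{cg}$. Crucially, both coarse-grained states are block-diagonal with respect to the mutually orthogonal eigenprojectors $\Pi_i^O$ and are proportional to the identity within each block; hence they commute and can be treated as classical distributions on $d$ points, the $i$th block contributing $V_i$ equal eigenvalues $p_{\rho,i}/V_i$ (resp.\ $p_{\sigma,i}/V_i$).

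First I would compute the trace distance between the two coarse-grained states. Since $\rho^\mathrm{cg}-\sigma^\mathrm{cg}=\sum_i (p_{\rho,i}-p_{\sigma,i})\Pi_i^O/V_i$ carries eigenvalue $(p_{\rho,i}-p_{\sigma,i})/V_i$ with multiplicity $V_i$ on the $i$th block, its trace norm is $\sum_i V_i\,|p_{\rho,i}-p_{\sigma,i}|/V_i=\sum_i|p_{\rho,i}-p_{\sigma,i}|$, so the trace distance equals $\tfrac12\sum_i|p_{\rho,i}-p_{\sigma,i}|=\|\vec p_\rho-\vec p_\sigma\|_1$ exactly, in the factor-$\tfrac12$ convention of Eq.~\eqref{eq:1-norm_vectors}. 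This identity is the heart of the tightening over Theorem~6 of Ref.~\cite{schindlerwinter2023}: the relevant distance is the population distance $\|\vec p_\rho-\vec p_\sigma\|_1$, not the full-state trace distance $\tfrac12\|\rho-\sigma\|_1$. The two are related by data processing under the measurement channel, $\|\vec p_\rho-\vec p_\sigma\|_1\le\tfrac12\|\rho-\sigma\|_1$, and can differ substantially; working with the former is precisely what makes the bound usable for equilibration, where only the population distance is controlled through Lemma~\ref{lemma:populationbound}.

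With $T:=\|\vec p_\rho-\vec p_\sigma\|_1$ identified as the trace distance of the $d$-dimensional coarse-grained states, I would close the argument by invoking a continuity bound for the von Neumann entropy. The Fannes–Audenaert inequality~\cite{Fannes1973,Audenaert2007} immediately gives $|S_\mathrm{Obs}^O[\rho]-S_\mathrm{Obs}^O[\sigma]|\le T\log(d-1)+H_2[T]$; since $\log(d-1)\le\log d$ and one checks that $H_2[T]\le g(T)$ on $[0,1)$ (their difference $(1+T)\log(1+T)+(1-T)\log(1-T)$ vanishes at $T=0$ and has nonnegative derivative $\log\tfrac{1+T}{1-T}$), the stated bound follows. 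Alternatively, one can reproduce the exact $g$/$\log d$ form directly using subadditivity and concavity of $\eta(x)=-x\log x$: bounding the entropy of the excess mass of $\rho^\mathrm{cg}$ over the overlap $\min(\rho^\mathrm{cg},\sigma^\mathrm{cg})$ yields the $-T\log T$ and $T\log d$ terms together with a lower-order correction that is bounded by $(1+T)\log(1+T)$. The main obstacle is not the continuity estimate itself but the clean trace-distance computation of the second paragraph together with matching the precise functional form: the former is where the genuine improvement lies, while the latter is a matter of selecting (or reproducing) the appropriate von Neumann continuity bound.
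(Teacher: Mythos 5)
Your proof is correct, and it takes a genuinely different route from the paper's. The paper embeds the population vectors as diagonal states on an auxiliary $r$-dimensional space, shows that $Z[\tau]=-\sum_i \tau_{ii}\log(\tau_{ii}/V_i)$ is \emph{bounded concave}, and then invokes the general continuity theorem for bounded-concave functions from Ref.~\cite{schindlerwinter2023} with $\kappa=\log d$, which produces the $g$/$\log d$ form directly. You instead use the identity $S_\mathrm{Obs}^O[\rho]=S_\mathrm{vN}[\rho^\mathrm{cg}]$ of Eq.~\eqref{eq:Obs_entropy_CG}, observe that the trace distance between the two coarse-grained states equals \emph{exactly} the population distance $\|\vec p_\rho-\vec p_\sigma\|_1$ (your block-eigenvalue computation is right), and then apply the textbook Fannes--Audenaert bound; your monotonicity check that $H_2[T]\leq g(T)$ and $\log(d-1)\leq\log d$ correctly recovers the stated form. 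Both arguments hinge on the same key tightening -- the distance that enters is the population distance, not the full-state trace distance of Theorem~6 in Ref.~\cite{schindlerwinter2023} -- but your route is more elementary, avoids the bounded-concavity machinery entirely, and in fact yields the slightly sharper intermediate bound $T\log(d-1)+H_2[T]$. The trade-off is generality: the paper's abstract embedding extends verbatim to POVMs (Appendix~\ref{appendix:replacingPOVMs}) by simply redefining the diagonal embedding, whereas your argument relies on the coarse-graining identity $S_\mathrm{Obs}^O[\rho]=S_\mathrm{vN}[\rho^\mathrm{cg}]$, which the paper explicitly notes fails for general POVMs; so the POVM generalization would need a separate argument under your approach.
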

\begin{proof}
    Let us start by considering a function $Z[\rho]$ defined on finite dimensional quantum states $\rho$ which is bounded concave.
    Bounded convacity means that for any convex combination $\rho=\sum_k \lambda_k \rho_k,$ with $\lambda_k\geq 0,$ and $\sum_k\lambda_k=1$ and $\rho_k$ a well-defined quantum state, we have~\cite{schindlerwinter2023}
    \begin{align}
        0\leq Z\left[\rho\right] - \sum_k\lambda_k Z[\rho_k] \leq S_{\rm Sh}[\{\lambda_k\}_k],
    \end{align}
    where $S_{\rm Sh}[\{\lambda_k\}_k]=-\sum_k\lambda_k\log \lambda_k$ is the Shannon entropy of the distribution $\{\lambda_k\}_k$.
    It has been shown in~\cite{schindlerwinter2023}, that any such function satisfies the inequality
    \begin{align}
    \label{eq:Z_inequality}
        |Z[\rho] - Z[\sigma]|\leq \kappa \|\rho-\sigma\|_1 + g(\|\rho-\sigma\|_1),
    \end{align}
    where $\kappa := \sup_{\mu,\nu}|Z[\mu]-Z[\nu]|$.
    Note the difference in normalization of the trace norm in our work Eq.~\eqref{eq:1-norm-quantum} to that of~\cite{schindlerwinter2023}.
    For our purposes let us consider an $r$-dimensional state-space where we identify the population vector $\vec p_\rho$ with a quantum state
    \begin{align}
    \label{eq:quantified_vector}
        \vec p_\rho = \sum_{i=1}^r \tr\left[\Pi_i^O \rho \right] \ketbra{i}{i},
    \end{align}
    and similarly for $\vec p_\sigma$.
    We now define $Z[\,\cdot\,]$ on this $r$-dimensional state space by setting
    \begin{align}
    \label{eq:Z_def}
        Z[\tau] := -\sum_{i=1}^r \tau_{ii} \log \frac{\tau_{ii}}{V_i},
    \end{align}
    where $\tau_{ii}$ is the diagonal entry of the $r\times r$ quantum state $\tau$.
    Inserting the diagonal state $\vec p_\rho,$ we recover the observational entropy $Z[\vec p_\rho] = S_{\rm Obs}^O[\rho]$.
    We note that $Z$ is bounded concave: let $\tau = \sum_k\lambda_k \tau_k$ be a finite convex combination, then
    \begin{widetext}
    \begin{align}
        Z[\tau] - \sum_k \lambda_k Z[\tau_k] &= S_{\rm Sh}[\{\tau_{ii}\}_i] + \sum_{i=1}^r \tau_i \log V_i - \sum_k \lambda_k S_{\rm Sh}[\{\tau_{k,ii}\}_i] - \sum_k \sum_{i=1}^r\lambda_k \tau_{k,ii} \log V_i \\
        &= S_{\rm Sh}[\{\tau_{ii}\}_i] - \sum_k \lambda_k S_{\rm Sh}[\{\tau_{k,ii}\}_i],
    \end{align}
    \end{widetext}
    is equal the difference of the diagonal Shannon entropies. The Shannon entropy is bounded concave by Lemma~1 of Ref.~\cite{schindlerwinter2023}, and thus, also $Z$ is.
    Therefore, our definition~\eqref{eq:Z_def} satisfies all the necessary assumptions for the inequality~\eqref{eq:Z_inequality} to hold.
    By applying this inequality to the diagonal states $\vec p_\sigma$ and $\vec p_\rho$ we find that
    \begin{align}
        |Z[\vec p_\rho] - Z[\vec p_\sigma]|\leq \kappa \|\vec p_\rho- \vec p_\sigma\|_1 + g(\|\vec p_\rho- \vec p_\sigma\|_1).
    \end{align}
    Note that because $\vec p_\rho$ and $\vec p_\sigma$ are diagonal in the same basis, the 1-norm for vectors defined in Eq.~\eqref{eq:1-norm_vectors} and the 1-norm for states defined in Eq.~\eqref{eq:1-norm-quantum} agree. Using that $\kappa = \log d$ for the observational entropy, we conclude the proof.
    
\end{proof}
Note that this result, heavily based on the work in Ref.~\cite{schindlerwinter2023}, is a tightening of Theorem~6 from the same reference.
The underlying reason is that the trace-distance between two quantum states $\rho$ and $\sigma$ is always lower bounded by the 1-norm distance of their populations relative to any observable~\cite{Nielsen2010},
\begin{align}
    \|\rho-\sigma\|_1 \geq \|\vec p_\rho - \vec p_\omega\|_1.
\end{align}
Thus, the righthand side of~\eqref{eq:obsentropy_continuity} is always smaller that of Theorem~6 from~\cite{schindlerwinter2023}.
With this bound at hand, we are ready to prove Proposition~\ref{prop:obsentropybound}.

\begin{proof}[Proof of Proposition~\ref{prop:obsentropybound}.]
    Applying Lemma~\ref{lemma:obsentropycontinuity} to this setting, we find that
    \begin{align}
        &\Big\langle \big| S_\mathrm{Obs}^O[\rho(t)] - S_\mathrm{Obs}^O[\omega]\big| \Big\rangle_T \\
        &\quad \leq \Big\langle \log d\, \|\vec p(t)- \vec p_\omega\|_1 + g\left(\|\vec p(t) - \vec p_\omega\|_1\right)\Big\rangle_T \\
        &\quad \leq \log d\, \eta_{\varepsilon,T} +  g\left(\eta_{\varepsilon,T}\right),
 \end{align}
where in the second line, we use the concavity and monotonicity of $g(x) = -x\log x + (1+x)\log(1+x)$, proving the statement.
\end{proof}

The origin for the $\log d$ scaling comes from the Boltzmann term which scales at worst with $\log d$, the maximum number of microstates in the system. However, due to the additional factor of $\eta_{\varepsilon,T}\propto \sqrt{{r}/{d_\mathrm{eff}}}$, if the effective dimension is of the order of the dimension of the system $d_\mathrm{eff} \sim d$, this additional correction will also vanish in the limit of large systems $d_\mathrm{eff}\gg r$.
In a very close analogy to Corollary~\ref{cor:shannonfluctuations}, the observational entropy also satisfies a fluctuation theorem, bounding the probability of small fluctuations of $S_\mathrm{Obs}^O[\rho(t)]$ away from the equilibrium value $S_\mathrm{Obs}^O[\omega]$ as we show below:

\begin{restatable}[Observable entropy fluctuations]{corollary}{observablefluctuations}
    \label{cor:observablefluctuations}
    Let $\rho(t), \rho^\infty$ and $\hat O$ be as in Lemma~\ref{lemma:populationbound}.
    If $t\in \mathbb R_{\geq 0}$ is sampled uniformly randomly, we have
    \begin{align}
        P\left[\left| S_\mathrm{Obs}^O[\rho(t)] - S_\mathrm{Obs}^O[\omega]\right|\geq \sqrt{\nu}\right]\leq\sqrt{\nu},
    \end{align}
    where $\nu$ is given by
    \begin{align}
        \nu = \frac{\log(d)}{2}\sqrt{\frac{r}{d_\mathrm{eff}}} + g\left[\frac{1}{2}\sqrt{\frac{r}{d_\mathrm{eff}}}\right],
    \end{align}
    with $g$ the function as in Proposition~\ref{prop:obsentropybound}.
\end{restatable}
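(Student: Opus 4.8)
The plan is to prove Corollary~\ref{cor:observablefluctuations} by mirroring the argument for Corollary~\ref{cor:shannonfluctuations}, with the only substantive change being that the averaged-entropy input comes from Proposition~\ref{prop:obsentropybound} rather than Theorem~\ref{thm:shannonbound}. The engine is once more the generalized Chebyshev inequality $P[X\geq\kappa]\leq E[\phi(X)]/\phi(\kappa)$, which holds for any nonnegative random variable $X$ and any strictly positive, monotonically increasing test function $\phi$. As in the Shannon case I would choose $\phi$ to be the identity, reducing this to Markov's inequality $P[X\geq\kappa]\leq E[X]/\kappa$. (I deliberately write $\phi$ rather than $g$ to avoid clashing with the function $g(x)=-x\log x+(1+x)\log(1+x)$ that appears in the definition of $\nu$.)

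First I would fix the probabilistic setup: sample $t$ uniformly from $[0,T]$ and define the nonnegative random variable $X:=|S_\mathrm{Obs}^O[\rho(t)]-S_\mathrm{Obs}^O[\omega]|$. Under this measure its expectation is exactly the finite-time average $\langle|S_\mathrm{Obs}^O[\rho(t)]-S_\mathrm{Obs}^O[\omega]|\rangle_T$ that Proposition~\ref{prop:obsentropybound} bounds. Second, I would pass to the infinite-time limit: letting $T\to\infty$ and then $\varepsilon\to0$ sends $f(\varepsilon,T)\to1$, so that $\eta_{\varepsilon,T}\to\tfrac12\sqrt{r/d_\mathrm{eff}}$. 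Substituting this into Proposition~\ref{prop:obsentropybound} gives $E[X]\leq\log(d)\,\tfrac12\sqrt{r/d_\mathrm{eff}}+g(\tfrac12\sqrt{r/d_\mathrm{eff}})=:\nu$, which is precisely the quantity in the statement.

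The proof then closes by applying Markov's inequality with the choice $\kappa:=\sqrt{\nu}$, yielding $P[X\geq\sqrt{\nu}]\leq E[X]/\sqrt{\nu}\leq\nu/\sqrt{\nu}=\sqrt{\nu}$, as claimed. I would stress that the continuity machinery doing the real work -- the observational-entropy continuity bound of Lemma~\ref{lemma:obsentropycontinuity} and its composition with Lemma~\ref{lemma:populationbound} -- is already packaged inside Proposition~\ref{prop:obsentropybound}, so no new estimates are needed here; the corollary is a genuine consequence rather than a fresh derivation.

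The one place that warrants care, and which I would flag as the main (if modest) obstacle, is the limiting step $f(\varepsilon,T)\to1$. One must justify taking $T\to\infty$ before $\varepsilon\to0$ and argue that $N(\varepsilon)\to1$, which holds when the nonzero energy gaps of $H$ are nondegenerate; this is the same tacit assumption used to fix the constant $\delta$ in Corollary~\ref{cor:shannonfluctuations}. If one instead keeps $\varepsilon$ finite, the same argument still goes through, but with $\nu$ replaced by its $N(\varepsilon)$-dependent counterpart, and the clean form stated here is recovered only in the limit.
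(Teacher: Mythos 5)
Your proposal is correct and takes essentially the same route as the paper: the paper's proof of Corollary~\ref{cor:observablefluctuations} simply repeats the proof of Corollary~\ref{cor:shannonfluctuations} with $X=\left|S_\mathrm{Obs}^O[\rho(t)]-S_\mathrm{Obs}^O[\omega]\right|$, i.e.\ the generalized Chebyshev inequality with the identity test function (Markov's inequality) applied to the uniform measure on $[0,T]$ with $T\to\infty$, bounding $E[X]$ by $\nu$ via Proposition~\ref{prop:obsentropybound} and setting $\kappa=\sqrt{\nu}$. Your explicit treatment of the limiting step $f(\varepsilon,T)\to 1$ (taking $T\to\infty$ before $\varepsilon\to 0$, with $N(\varepsilon)\to 1$ under non-degenerate energy gaps) is in fact more careful than the paper, which passes over this point silently.
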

\begin{proof}[Proof sketch.]
    The proof goes exactly as that of Corollary~\ref{cor:observablefluctuations} with the difference that we define $X=\left| S_\mathrm{Obs}^O[\rho(t)] - S_\mathrm{Obs}^O[\omega]\right|$ for Chebyshev's inequality. The reader is referred to the proof of Corollary~\ref{cor:shannonfluctuations} in Sec.~\ref{appendix:proofs_entropy_bounds} for details.
\end{proof}
Akin to the generalization in Eq.~\eqref{eq:generalized_fluctuations_shannon} of the fluctuation bounds for the Shannon observable entropy from Corollary~\ref{cor:shannonfluctuations}, we can also generalize Corollary~\ref{cor:observablefluctuations} for the case of the observational entropy,
\begin{align}
    P\left[\left|S_\mathrm{Obs}^O[\rho(t)] - S_\mathrm{Obs}^O[\omega] \right|\geq \mu \right]\leq \frac{\nu}{\mu}.
\end{align}
As before, $\mu$ is the small parameter relative to which we compare the observational entropy fluctuations on the left-hand side.

\subsection{\label{appendix:diagonal_entropy}On the diagonal entropy}
In the context of the diagonal entropy as for example used in the Refs.~\cite{Polkovnikov2011,Ikeda2015}, a partial extension of our results is also possible.
The mentioned references define the diagonal entropy relative to the orthonormal energy eigenbasis $\{\ket{\varepsilon_n}\}_{1\leq n\leq d}$ of the system's Hamiltonian $H$ assumed to be non-degenerate.
For a quantum state $\rho$ the diagonal entropy is then given by 
\begin{align}
\label{eq:S_d}
    S_{\rm d}[\rho]=-\sum_{n=1}^d \rho_{nn}\log\rho_{nn},
\end{align}
where $\rho_{nn}=\braket{\varepsilon_n}{\rho|\varepsilon_n}$ are the diagonal entries of $\rho$ relative to the energy eigenbasis.
In case of an isolated quantum system, however, the Hamiltonian is time-independent, and consequently also the diagonals $\rho_{nn}(t)$ of the state $\rho(t)=e^{-iHt}\rho(0)e^{iHt}$.
The diagonal entropy relative to the Hamiltonian $S_{\rm d}[\rho(t)]$ thus trivializes to a constant and does not quantify whether and how an isolated quantum system equilibrates.

When defining a diagonal entropy more generally relative to some orthonormal basis $\mathcal B = \{\ket{n}\}_{1\leq n\leq d}$ of the Hilbert space of interest allows for more informative conclusions.
The diagonal elements $\rho_{nn}=\braket{n}{\rho|n}$ of $\rho$ can be taken in the basis representation of the states $\ket{n}$ instead of $\ket{\varepsilon_n}$.
A diagonal entropy $S_{\rm d}^{\mathcal B}$ relative to $\mathcal B$ can then be defined akin to the diagonal entropy in Eq.~\eqref{eq:S_d}.
Since the diagonal vector of $\rho$ is in general $d$-dimensional, Lemma~\ref{lemma:populationbound} does not directly constrain the diagonals because in this case $r=d$ does not satisfy $r\ll d_{\rm eff}$.
However, when considering an observable $O$ with $r$ outcomes satisfying $r\ll d_{\rm eff}$ as in the main text, a preferred basis can be defined as in~\cite{Scarpa2023}.
Relative to the observable basis, the diagonal entropy can satisfy a second law in a sense similar to Theorem~\ref{thm:shannonbound}.
Decomposing the eigenprojectors of the observable,
\begin{align}
    \Pi_i^O = \sum_{k=1}^{V_i} \ketbra{i,k}{i,k},
\end{align}
into a sum of $V_i=\tr[\Pi_i^O]$ 1-dimensional orthogonal projectors $\ketbra{i,k}{i,k}$ defines a basis $\mathcal B_O = \{\ket{i,k}\}_{i,k}$.
Within each subspace $1\leq i \leq r$ the choice of basis is not uniquely defined by $O$, and arbitrary unitary transformations within the subspace $\{\ket{i,k}\}_k$ for fixed $i$ map the basis to another valid orthonormal one.
Chosing the basis such that the diagonal entropy $S_{\rm d}[\rho]$ is maximized yields matrix elements,
\begin{align}
    \braket{i,k}{\rho |i,k} = \braket{i,k'}{\rho |i,k'}=\frac{p_i}{V_i},
\end{align}
which are equal within each subspace defined by the observable's eigenspace projectors $\Pi_i^O$.
In this case, the diagonal entropy equals the observational entropy defined by Eq.~\eqref{eq:obs entropy}, as shown in~\cite{Scarpa2023},
\begin{align}
    \max_{\mathcal B_O}S_{\rm d}[\rho] &= -\min_{\mathcal B_O}\sum_{i=1}^r \sum_{k=1}^{V_i} \frac{p_i}{V_i}\log \frac{p_i}{V_i}\\ 
    &= S_{\rm Obs}^O[\rho].
\end{align}
As we have shown in Appendix~\ref{appendix:generalization_observational_entropy}, this entropy also satisfies a second law bound like the Shannon observable entropy and thus also the diagonal entropy defined in the context of sufficiently coarse-graining observable.

\subsection{\label{appendix:thermodynamic_entropy}Special case of thermodynamic entropy}
When comparing our formulation of the second law based on Theorem~\ref{thm:shannonbound} with the historical ones by Clausius~\cite{Clausius1854} or Kelvin-Planck~\cite{Planck1897} certain additional assumptions have to be imposed on the setting under consideration.
These classical versions of the second law are framed through the concept of thermalization.
By contrast, the present work adopts the broader notion of equilibration where the focus is on how observables evolve toward stable equilibrium values.
To recover the standard thermodynamic notions of entropy from our results, two separate concepts have to be introduced: (a) `General canonical typicality'~\cite{Popescu2006,linden2009quantum} stating that most states of interest are close to a specific canonical state $\Omega_S$ in a sense which we specify in the following, and (b) the `thermal canonical principle' stating that the canonical state $\Omega_S$ is close to the thermal state under appropriate assumptions~\cite{Landau1980,Goldstein2006}.

In this appendix, we argue how results on thermalization from previous works like Refs.~\cite{Goldstein2006,Popescu2006,linden2009quantum} and standard techniques from statistical mechanics can be combined with our main Theorem~\ref{thm:shannonbound} to recover the thermodynamic entropy as a special case of the Shannon observable entropy.
To comprehensively arrive at this goal, we use the canonical ensemble where the isolated system is made up of two subsystems, one of which we call the system $S$ and the other the environment $E$.
In the non-trivial case where $S$ and $E$ interact, $S$ is not isolated anymore, only the joint global system is.

Ultimately, we would wish to find that under reasonable assumptions which are to be specified, an initial state $\ket{\psi}$ on the global system yields a thermal equilibrium state on the system $S$. 
Being more precise, we wish to find that the equilibrium state $\omega_\psi = \langle \ketbra{\psi}{\psi}\rangle_\infty$ coming from $\ket{\psi}$, reduced on the system $S$ is close to the thermal Gibbs state at some inverse temperature $\beta$, meaning
\begin{align}
\label{eq:tau_S_approx}
    \tr_E[\omega_\psi] \approx \tau_S = \frac{e^{-\beta H_S}}{Z_S},
\end{align}
where $H_S$ is the system Hamiltonian and $Z_S=\tr[e^{-\beta H_S}]$ is the partition function.
Let us suppose for now that the relation~\eqref{eq:tau_S_approx} is satisfied; we return to this assumption later.
Furthermore, we consider the system's Hamiltonian $H_S$ as the observable of interest, $O=H_S\otimes \mathds 1_E$.
The Shannon entropy relative to $O$ evaluated at the equilibrium state $\omega_\psi$ then yields the usual thermodynamic entropy.
By explicitly calculating 
\begin{align}
    p_i =\tr\left[\left(\Pi_i^{H_S}\otimes\mathds 1_E\right) \omega_\psi\right]
        =\frac{e^{-\beta \varepsilon_i}}{Z_S},
\end{align}
where $\Pi_i^{H_S}$ is the projector on the $i$th energy eigenstate of $H_S$, the Shannon observable entropy can be determined to be
\begin{align}
    S_{\rm Sh}^{H_S}[\omega_\psi] &= -\sum_{i=1}^{d_S} p_i \log p_i \\
    &= \log Z_S + \beta\langle H_S\rangle.
\end{align}
Thus, assuming equality for~\eqref{eq:tau_S_approx}, the Shannon observable entropy of the equilibrium state relative to the system's Hamiltonian as the observable is in agreement with the thermodynamic entropy.

In practice, however, the relation~\eqref{eq:tau_S_approx} is only true approximately. 
Moreover, even though the deviations $\kappa:=\|\tr_E[\omega_\psi]-\tau_S\|_1$ are small for most states, there are some where $\kappa$ significantly deviates from zero.
It is also known that not all systems thermalize and thus quantifying these deviations has been the subject of extensive studies.
In the remainder of this appendix, we provide pertinent references for how and under which assumptions the Gibbs state can be recovered.
In our setting of interest, the total Hamiltonian is then given by
\begin{align}
    H = H_S\otimes \mathds 1_E + \mathds 1_S\otimes H_E + H_{SE},
\end{align}
where $H_S$ is the Hamiltonian of $S$ as before, and $H_E$ is the environment Hamiltonian $H_E$ and $H_{SE}$ are the interactions, assumed to be negligible compared to the other two terms.
Conventionally, the initial state $\ket{\psi}$ is assumed to be a uniformly randomly picked state from a macroscopic slice $\mathcal H_R\subseteq \mathcal H_S\otimes\mathcal H_E$ from the full Hilbert space~\cite{Popescu2006}.
To recover the Gibbs state as in e.g.~\cite{Goldstein2006}, the microcanonical slice is considered, which is spanned by all energy eigenstates of $H$ with energy eigenvalue within the window $[E,E+ \Delta)$.
Conventionally, it is assumed that $\Delta$ is small (on the scale of $E$ minus the ground state energy) but large enough that there is a macroscopic number of energy eigenstates within the window $E\pm \Delta$.

Starting with the first point (a), the canonical state $\Omega_S$ of the system $S$ is the one obtained by uniformly randomly picking a state from the macroscopic slice $\mathcal H_R$ and then reducing to $S$,
\begin{align}
\label{eq:Omega_S}
    \Omega_S = \tr_E \left[\frac{\mathds 1_R}{d_R}\right].
\end{align}
The `general canonical principle' as in Popescu~et~al.~\cite{Popescu2006} then states that most states $\ket{\psi}\in\mathcal H_R$ will yield a reduced state on the system very close to the canonical state if the environment $E$ is sufficiently larger than the system $S$.
Formally, the general canonical principle quantifies the average distance $\langle\|\tr_E[\ketbra{\psi}{\psi}_E]-\Omega_S\|_1\rangle_\psi$ where the average $\langle\cdot\rangle_\psi$ is the Haar average over the subspace $\mathcal H_R$.
This is however, not yet enough, as we wish to have that also the equilibrium state $\tr_E[\omega_\psi]$ reduced on the subsystem is close to the canonical state.
In the reference by Linden~et~al.~\cite{linden2009quantum}, such a statement is shown, providing a bound for the average distance $\langle \|\tr_E[\omega_\psi]-\Omega_S\|_1\rangle_\psi$.
Note that there is a technical difference in how Popescu~et~al.\ and Linden~et~al.\ define the canonical state; while Popescu~et~al.\ define $\Omega_S$ according to eq.~\eqref{eq:Omega_S}, Linden~et~al.\ define it as $\tilde\Omega_S = \langle \tr_E[\omega_\psi] \rangle_\psi$.
Due to linearity of the trace and the map $\ketbra{\psi}{\psi}\mapsto\omega_\psi$, both definitions agree.

The second point (b) is about the `thermal canonical principle' and a statement about when canonical state $\Omega_S$ equals the Gibbs state $e^{-\beta H_S}/Z_S$ when we consider the microcanonical slice $\mathcal H_R$.
This problem can be addressed using textbook methods from e.g.\ Ref.~\cite{Landau1980}, where a standard assumption is that the interactions $H_{SE}$ are weak compared to the rest of the Hamiltonian.
Such an assumption not needed for the more general canonical principle~\cite{Popescu2006,linden2009quantum}.
In the reference~\cite{Goldstein2006}, this special case is considered more rigorously by providing an exemplary derivation for how the microcanonical slice $\mathcal H_R$ yields the Gibbs state, given that the Hamiltonian spectrum is sufficiently dense (yielding a macroscopic slice $\mathcal H_R$).

In summary, we find that by combining results from the literature on typicality~\cite{Popescu2006,linden2009quantum} with thermal canonical typicality~\cite{Landau1980,Goldstein2006} from statistical mechanics, the Shannon observable entropy of a suitably chosen (sub)system coincides with the standard thermodynamic notion of entropy from the canonical ensemble.

\section{\label{appendix:replacingPOVMs}Replacing projective observables with POVMs}
So far, we have discussed observable-relative notions of entropy.
Instead of looking at an observable corresponding to the operator $O=\sum_{i=1}^r O_i \Pi_i^O $ and defining the measurement outcomes relative to $O$, we can instead work with the more general notion of a positive operator-valued measure (POVM)~\cite{Nielsen2010}.
A POVM with $r$ outcomes is defined by a set of bounded operators $M = (E_i)_{i=1}^r$ on the Hilbert space of our system. Collectively, the operators in the set must satisfy  $\sum_{i=1}^r E_i = \mathds 1$.
The outcome of a generalized measurement over $\rho$ is $i$ with the probability of obtaining outome $i$ given by $p_i=\tr[E_i\rho]$. Therefore, we can generalize the Shannon observable entropy to a POVM using the following expression
\begin{align}
    S_\mathrm{Sh}^M[\rho] = -\sum_{i=1}^r p_i \log p_i,\quad p_i :=\tr[E_i\rho],
\end{align}
defined analogously to the Shannon entropy of the probability vector $\vec p = (p_1,\dots,p_r)$ relative to the POVM $M$.
The physical interpretation remains the same for this entropy as for the notion defined relative to an observable; the entropy $S_\mathrm{Sh}^M[\rho]$ is a measure of the information relative to $M$ that one gains by performing a measurement.
It is important to note, however, that some of the entropy may when using POVMs may not be due to an intrinsic uncertainty coming from the underlying quantum state as is the case for PVMs, but some of the uncertainty comes from the coarse graining of the POVM. An examplary case is if $M=(E_1,E_2)$ and $E_1=E_2=\mathds 1/2,$ the Shannon observable entropy is $\log 2$, not due to the underlying state but due to the randomness coming from the POVM.
Similarly, the observational entropy can also be generalized using
\begin{align}
    S_\mathrm{Obs}^M[\rho] = -\sum_{i=1}^r p_i\log \frac{p_i}{V_i},\quad p_i :=\tr[E_i\rho],\, V_i = \tr[E_i],
\end{align}
where $V_i$ is the generalized number of microstates compatible with the outcome $i$ of the measurement $M$. The condition $\sum_{i=1}^r E_i = \mathds 1$ ensures that $\sum_{i=1}^r V_i=\mathrm{dim}\,\mathcal H$ equals the total number of microstates of the system, i.e.\ the dimension of the Hilbert space on which $\rho$ is defined.

In the main text we looked at how the entropy of $\rho(t)$ relative to an observable $O$ compares to the entropy of $\omega$ (relative to the same observable $O$).
We wish to generalize these statements to the case where the entropy is now defined relative to the POVM $M$ in question.
Since our main results Theorem~\ref{thm:shannonbound}, Proposition~\ref{prop:obsentropybound} and Corollary~\ref{cor:shannonfluctuations} all follow from the asymptotic closeness of the population vector $\vec p(t)$ to the equilibrium populations $\vec p_\infty$, what must be shown is that for a more general POVM $M$, the two population vectors are also close for most times.
It turns out that Lemma~\ref{lemma:populationbound} also generalizes to the case of POVMs:

\begin{restatable}[POVM population equilibration]{lemma}{POVMpopbound}
\label{lemma:POVMpopbound}
    Let $M=(E_i)_{i=1}^r$ be a POVM with $r$ outcomes, and let $\rho(t)$ be the unitary time evolution of some initial state $\rho_0$ with respect to the Hamiltonian $H$.
    Let $\vec p(t)$ and $\vec p_\omega$ be defined with respect to $M$ with entries $p_i = \tr[E_i\rho]$.
    Furthermore, take $\varepsilon>0$ and $T>0$ to be arbitrary, then,
    \begin{align}
        \left\langle \left\|\vec p(t) - \vec p_\infty \right\|_1\right\rangle_T \leq \frac{1}{2}\sqrt{\frac{r}{d_\mathrm{eff}}f(\varepsilon,T)} =: \eta_{\varepsilon,T},
    \end{align}
    where the constant $\eta_{\varepsilon,T}$ is defined with $f(\varepsilon,T)$ as in Eq.~\eqref{eq:f_eps_T} of the main text.
\end{restatable}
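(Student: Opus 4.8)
The plan is to mirror the proof of Lemma~\ref{lemma:populationbound} almost line by line, isolating the single place where the projector identity $(\Pi_i^O)^2 = \Pi_i^O$ was used and replacing it with an operator inequality that holds for any POVM element. First I would write the $1$-norm as $\|\vec p(t) - \vec p_\omega\|_1 = \tfrac12\sum_{i=1}^r |\tr[E_i\rho(t)] - \tr[E_i\omega]|$ and convert it to the $2$-norm exactly as before: apply $2\|\cdot\|_1 \leq \sqrt r\,\|\cdot\|_2$ and then the Jensen inequality $\langle\sqrt X\rangle \leq \sqrt{\langle X\rangle}$, so that it suffices to bound $\langle\|\vec p(t) - \vec p_\omega\|_2^{\,2}\rangle_T$. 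None of these steps reference any property of the measurement operators, so they carry over verbatim.

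Next I would invoke the intermediate bound Eq.~(9) of Ref.~\cite{Short_2012}, which holds with an arbitrary operator in place of $\Pi_i^O$, giving $\langle|\tr[E_i\rho(t)] - \tr[E_i\omega]|^2\rangle_T \leq f(\varepsilon,T)\sqrt{\tr[E_i^\dagger E_i\omega^2]\,\tr[E_iE_i^\dagger\omega^2]}$. Since POVM elements are Hermitian, $E_i^\dagger = E_i$, so the geometric mean collapses to $\tr[E_i^2\omega^2]$. This plays the role of the step in the projective proof where both $\Pi_i^O = \Pi_i^{O\dagger}$ and idempotency were used; here only Hermiticity is available.

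The key new ingredient, and the \emph{only} genuine obstacle, is bounding $\sum_i \tr[E_i^2\omega^2]$, since one can no longer simply replace $E_i^2$ by $E_i$. I would instead observe that each element satisfies $0 \leq E_i \leq \mathds 1$: positivity is part of the definition of a POVM, and $\mathds 1 - E_i = \sum_{j\neq i} E_j \geq 0$ follows from the completeness relation $\sum_{j=1}^r E_j = \mathds 1$. A Hermitian operator with spectrum in $[0,1]$ obeys the operator inequality $E_i^2 \leq E_i$ (equivalently $E_i(\mathds 1 - E_i) \geq 0$). Because $\omega^2 \geq 0$, monotonicity of the trace with respect to the operator order gives $\tr[E_i^2\omega^2] \leq \tr[E_i\omega^2]$.

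Summing over $i$ and using completeness then yields $\sum_{i=1}^r \tr[E_i^2\omega^2] \leq \tr\!\big[(\sum_i E_i)\,\omega^2\big] = \tr[\omega^2] = 1/d_\mathrm{eff}$, exactly reproducing the projective case. Collecting the prefactor $f(\varepsilon,T)$ and taking the square root recovers the constant $\eta_{\varepsilon,T}$, completing the proof. I expect the main conceptual point to be precisely the replacement of idempotency by the operator inequality $E_i^2 \leq E_i$; everything else is a direct transcription of the proof of Lemma~\ref{lemma:populationbound}.
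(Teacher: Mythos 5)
Your proposal is correct and follows essentially the same route as the paper: the same reduction from the time-averaged $1$-norm to the time-averaged squared $2$-norm, the same invocation of Eq.~(9) of Ref.~\cite{Short_2012} together with Hermiticity $E_i^\dagger = E_i$, and the same final chain $\sum_{i=1}^r \tr[E_i^2\omega^2] \leq \sum_{i=1}^r\tr[E_i\omega^2] = \tr[\omega^2] = 1/d_\mathrm{eff}$ via completeness. The only difference is presentational: you justify the key inequality with the operator bound $E_i^2 \leq E_i$ (which follows from $0 \leq E_i \leq \mathds{1}$), whereas the paper obtains the same inequality by factoring $\tr[E_i^2\omega^2] = \tr[E_i\sigma_i]\,\tr[E_i\omega^2]$ through the post-measurement state $\sigma_i = (M_i\omega^2 M_i)/\tr[E_i\omega^2]$ and noting that $\tr[E_i\sigma_i]\leq 1$ --- both arguments rest on the same fact that POVM elements are dominated by the identity.
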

\begin{proof}
The proof is similar to that of Lemma~\ref{lemma:populationbound}, with the difference that while for projectors we can use the fact that $\Pi_i^2 = \Pi_i$, this is generally not the case for a POVM: $E_i^2 \neq E_i$.
Therefore, we have to replace Eq.~\eqref{eq:Pi_omega2} with
\begin{align}
    \left\langle\left\|\vec p(t) - \vec p_\omega\right\|_2^{\,2}\right\rangle_T \leq f(\varepsilon,T)\sum_{i=1}^r \tr\left[E_i^2 \omega^2\right],
\end{align}
where we have already used the hermiticity of the POVM elements $E_i^\dagger = E_i$.
Furthermore, we can split
\begin{align}
    \tr\left[E_i^2 \omega^2\right] = \tr\left[E_i\sigma_i\right]\tr\left[E_i\omega^2\right],
\end{align}
where we defined $\sigma_i = (M_i\omega^2 M_i)/\tr\left[E_i\omega^2\right]$ as the post-measurement state, with $M_i=\sqrt{E_i}$ the measurement operator associated to the POVM element $E_i$.
Then, we can use that $\tr[E_i\sigma_i]\leq 1$ (since they are probabilities) to arrive the desired result, as in Eq.~\eqref{eq:omega2_deff},
\begin{align}
    \left\langle\left\|\vec p(t) - \vec p_\omega\right\|_2^{\,2}\right\rangle_T \leq f(\varepsilon,T)\sum_{i=1}^r \tr\left[E_i \omega^2\right] = \frac{f(\varepsilon,T)}{d_\mathrm{eff}},
\end{align}
again using the completeness relation $\sum_{i=1}^r E_i = \mathds 1.$
\end{proof}

As a consequence, the resulting inequalities from Theorem~\ref{thm:shannonbound} for the Shannon entropy $S_\mathrm{Sh}^M$ and from Proposition~\ref{prop:obsentropybound} for the observational entropy $S_\mathrm{Obs}^M$ also follow for the POVM $M$.
For the Shannon entropy bound, one can directly use Lemma~\ref{lemma:POVMpopbound} instead of Lemma~\ref{lemma:populationbound} to prove the statement.
On the other hand, for the observational entropy, the proof of Lemma~\ref{lemma:obsentropycontinuity} has to be modified to also hold for a general POVM $M$, instead of using the definition with respect to the orthonormal projectors that we have for an observable $O$.
The proof goes through completely analogously simply by replacing eq.~\eqref{eq:quantified_vector} with 
\begin{align}
    \vec p_\rho = \sum_{i=1}^r \tr[E_i \rho]\ketbra{i}{i},
\end{align}
thus generalizing both Lemma~\ref{lemma:obsentropycontinuity} and Proposition~\ref{prop:obsentropybound} to the case of POVMs.
A useful consequence of this generalization is that with a POVM one can model measurements which are coarse-grained in the sense that they can be used to model measurements which do not perfectly distinguish some states.

\section{\label{sec:details_numerics_model}Details of the numerical model used}
Here, we describe in more detail the model used to generate the numerical results in Figs.~\ref{fig:spinchain_panel}, \ref{fig:AppendixFigs1} and~\ref{fig:AppendixFigs2}.
The model used is given in Ref.~\cite{13KimHuse} with the Hamiltonian
\begin{align}
    H = \sum_{i=1}^N g \sigma_x^{(i)} &+ \sum_{i=2}^{N-1}h\sigma_z^{(i)} + (h-J)\left(\sigma_z^{(1)}+\sigma_z^{(N)}\right) \nonumber \\
    &+ \sum_{i=1}^{N-1} J \sigma_z^{(i)}\sigma_z^{(i+1)},
\end{align}
where $\sigma_x^{(i)}$ and $\sigma_z^{(i)}$ are the Pauli operators acting on the spin on site $i$.
The number $N$ is the total number of spins in the chain, and the remaining constants have been chosen as in Ref.~\cite{13KimHuse} to be $h=(\sqrt{5}+1)/4=0.8090...$, $g=(\sqrt{5}+5)/8=0.9045...$ and $J=1$ (recalling that $\hbar=1$ as always). The initial state used in the generation of all plots is $\ket{\downarrow\dots\downarrow}$, with each spin assigned a random phase. The observable $M_y$ is defined analogously to $M_z$: $M_y=\frac{1}{N}\sum_i^N \sigma_y^{\left(i\right)}$. The time evolution of the initial state was calculated by exact diagonalization of the Hamiltonian to construct a time-evolution unitary separately at each time step. The Python code used to generate the data is available upon request. The time-averaging in Fig.~\ref{fig:spinchain_panel}(c) was performed numerically using Mathematica's Interpolation and NIntegrate functions, and the exponential fitting in Fig.~\ref{fig:spinchain_panel}(d) was performed using Mathematica's NonLinearModelFit function.

\bibliography{bibfile.bib}

\begin{thebibliography}{71}%
\makeatletter
\providecommand \@ifxundefined [1]{%
 \@ifx{#1\undefined}
}%
\providecommand \@ifnum [1]{%
 \ifnum #1\expandafter \@firstoftwo
 \else \expandafter \@secondoftwo
 \fi
}%
\providecommand \@ifx [1]{%
 \ifx #1\expandafter \@firstoftwo
 \else \expandafter \@secondoftwo
 \fi
}%
\providecommand \natexlab [1]{#1}%
\providecommand \enquote  [1]{``#1''}%
\providecommand \bibnamefont  [1]{#1}%
\providecommand \bibfnamefont [1]{#1}%
\providecommand \citenamefont [1]{#1}%
\providecommand \href@noop [0]{\@secondoftwo}%
\providecommand \href [0]{\begingroup \@sanitize@url \@href}%
\providecommand \@href[1]{\@@startlink{#1}\@@href}%
\providecommand \@@href[1]{\endgroup#1\@@endlink}%
\providecommand \@sanitize@url [0]{\catcode `\\12\catcode `\$12\catcode
  `\&12\catcode `\#12\catcode `\^12\catcode `\_12\catcode `\%12\relax}%
\providecommand \@@startlink[1]{}%
\providecommand \@@endlink[0]{}%
\providecommand \url  [0]{\begingroup\@sanitize@url \@url }%
\providecommand \@url [1]{\endgroup\@href {#1}{\urlprefix }}%
\providecommand \urlprefix  [0]{URL }%
\providecommand \Eprint [0]{\href }%
\providecommand \doibase [0]{https://doi.org/}%
\providecommand \selectlanguage [0]{\@gobble}%
\providecommand \bibinfo  [0]{\@secondoftwo}%
\providecommand \bibfield  [0]{\@secondoftwo}%
\providecommand \translation [1]{[#1]}%
\providecommand \BibitemOpen [0]{}%
\providecommand \bibitemStop [0]{}%
\providecommand \bibitemNoStop [0]{.\EOS\space}%
\providecommand \EOS [0]{\spacefactor3000\relax}%
\providecommand \BibitemShut  [1]{\csname bibitem#1\endcsname}%
\let\auto@bib@innerbib\@empty
\bibitem [{\citenamefont {Lieb}\ and\ \citenamefont
  {Yngvason}(1999)}]{Lieb1999}%
  \BibitemOpen
  \bibfield  {author} {\bibinfo {author} {\bibfnamefont {E.~H.}\ \bibnamefont
  {Lieb}}\ and\ \bibinfo {author} {\bibfnamefont {J.}~\bibnamefont
  {Yngvason}},\ }\bibfield  {title} {\bibinfo {title} {The physics and
  mathematics of the second law of thermodynamics},\ }\href
  {https://doi.org/https://doi.org/10.1016/S0370-1573(98)00082-9} {\bibfield
  {journal} {\bibinfo  {journal} {Physics Reports}\ }\textbf {\bibinfo {volume}
  {310}},\ \bibinfo {pages} {1} (\bibinfo {year} {1999})}\BibitemShut {NoStop}%
\bibitem [{\citenamefont {Thess}(2011)}]{Thess2011}%
  \BibitemOpen
  \bibfield  {author} {\bibinfo {author} {\bibfnamefont {A.}~\bibnamefont
  {Thess}},\ }\href {https://doi.org/10.1007/978-3-642-13349-7} {\emph
  {\bibinfo {title} {The Entropy Principle}}}\ (\bibinfo  {publisher} {Springer
  Berlin Heidelberg},\ \bibinfo {year} {2011})\BibitemShut {NoStop}%
\bibitem [{\citenamefont {Ng}\ and\ \citenamefont {Woods}(2018)}]{Ng2018}%
  \BibitemOpen
  \bibfield  {author} {\bibinfo {author} {\bibfnamefont {N.~H.~Y.}\
  \bibnamefont {Ng}}\ and\ \bibinfo {author} {\bibfnamefont {M.~P.}\
  \bibnamefont {Woods}},\ }\bibinfo {title} {Resource theory of quantum
  thermodynamics: Thermal operations and second laws},\ in\ \href
  {https://doi.org/10.1007/978-3-319-99046-0_26} {\emph {\bibinfo {booktitle}
  {Thermodynamics in the Quantum Regime: Fundamental Aspects and New
  Directions}}},\ \bibinfo {editor} {edited by\ \bibinfo {editor}
  {\bibfnamefont {F.}~\bibnamefont {Binder}}, \bibinfo {editor} {\bibfnamefont
  {L.~A.}\ \bibnamefont {Correa}}, \bibinfo {editor} {\bibfnamefont
  {C.}~\bibnamefont {Gogolin}}, \bibinfo {editor} {\bibfnamefont
  {J.}~\bibnamefont {Anders}},\ and\ \bibinfo {editor} {\bibfnamefont
  {G.}~\bibnamefont {Adesso}}}\ (\bibinfo  {publisher} {Springer International
  Publishing},\ \bibinfo {address} {Cham},\ \bibinfo {year} {2018})\ pp.\
  \bibinfo {pages} {625--650}\BibitemShut {NoStop}%
\bibitem [{\citenamefont {{Kammerlander, Philipp}}(2019)}]{Kammerlander2019}%
  \BibitemOpen
  \bibfield  {author} {\bibinfo {author} {\bibnamefont {{Kammerlander,
  Philipp}}},\ }\emph {\bibinfo {title} {{T}angible {P}henomenological
  {T}hermodynamics}},\ \href {https://doi.org/10.3929/ETHZ-B-000413414} {Ph.D.
  thesis},\ \bibinfo  {school} {ETH Zurich} (\bibinfo {year}
  {2019})\BibitemShut {NoStop}%
\bibitem [{\citenamefont {Zwanzig}(1970)}]{zwanzig1970concept}%
  \BibitemOpen
  \bibfield  {author} {\bibinfo {author} {\bibfnamefont {R.}~\bibnamefont
  {Zwanzig}},\ }\bibfield  {title} {\bibinfo {title} {The concept of
  irreversibility in statistical mechanics},\ }\href
  {https://doi.org/10.1351/pac197022030371} {\bibfield  {journal} {\bibinfo
  {journal} {Pure and Applied Chemistry}\ }\textbf {\bibinfo {volume} {22}},\
  \bibinfo {pages} {371} (\bibinfo {year} {1970})}\BibitemShut {NoStop}%
\bibitem [{\citenamefont {Steckline}(1983)}]{steckline1983zermelo}%
  \BibitemOpen
  \bibfield  {author} {\bibinfo {author} {\bibfnamefont {V.~S.}\ \bibnamefont
  {Steckline}},\ }\bibfield  {title} {\bibinfo {title} {Zermelo, boltzmann, and
  the recurrence paradox},\ }\href {https://doi.org/10.1119/1.13373} {\bibfield
   {journal} {\bibinfo  {journal} {American Journal of Physics}\ }\textbf
  {\bibinfo {volume} {51}},\ \bibinfo {pages} {894} (\bibinfo {year}
  {1983})}\BibitemShut {NoStop}%
\bibitem [{\citenamefont {Brown}\ \emph {et~al.}(2009)\citenamefont {Brown},
  \citenamefont {Myrvold},\ and\ \citenamefont {Uffink}}]{brown2009boltzmann}%
  \BibitemOpen
  \bibfield  {author} {\bibinfo {author} {\bibfnamefont {H.~R.}\ \bibnamefont
  {Brown}}, \bibinfo {author} {\bibfnamefont {W.}~\bibnamefont {Myrvold}},\
  and\ \bibinfo {author} {\bibfnamefont {J.}~\bibnamefont {Uffink}},\
  }\bibfield  {title} {\bibinfo {title} {Boltzmann's h-theorem, its
  discontents, and the birth of statistical mechanics},\ }\href
  {https://doi.org/10.1016/j.shpsb.2009.03.003} {\bibfield  {journal} {\bibinfo
   {journal} {Studies in History and Philosophy of Science Part B: Studies in
  History and Philosophy of Modern Physics}\ }\textbf {\bibinfo {volume}
  {40}},\ \bibinfo {pages} {174} (\bibinfo {year} {2009})}\BibitemShut
  {NoStop}%
\bibitem [{\citenamefont {Wehrl}(1978)}]{Wehrl1978}%
  \BibitemOpen
  \bibfield  {author} {\bibinfo {author} {\bibfnamefont {A.}~\bibnamefont
  {Wehrl}},\ }\bibfield  {title} {\bibinfo {title} {General properties of
  entropy},\ }\href {https://doi.org/10.1103/RevModPhys.50.221} {\bibfield
  {journal} {\bibinfo  {journal} {Rev. Mod. Phys.}\ }\textbf {\bibinfo {volume}
  {50}},\ \bibinfo {pages} {221} (\bibinfo {year} {1978})}\BibitemShut
  {NoStop}%
\bibitem [{\citenamefont {Callender}(1999)}]{callender1999reducing}%
  \BibitemOpen
  \bibfield  {author} {\bibinfo {author} {\bibfnamefont {C.}~\bibnamefont
  {Callender}},\ }\bibfield  {title} {\bibinfo {title} {Reducing thermodynamics
  to statistical mechanics: The case of entropy},\ }\href
  {https://doi.org/10.2307/2564602} {\bibfield  {journal} {\bibinfo  {journal}
  {The Journal of Philosophy}\ }\textbf {\bibinfo {volume} {96}},\ \bibinfo
  {pages} {348} (\bibinfo {year} {1999})}\BibitemShut {NoStop}%
\bibitem [{\citenamefont {Boltzmann}(1964)}]{boltzmann1964}%
  \BibitemOpen
  \bibfield  {author} {\bibinfo {author} {\bibfnamefont {L.}~\bibnamefont
  {Boltzmann}},\ }\href@noop {} {\emph {\bibinfo {title} {Lectures on Gas
  Theory}}}\ (\bibinfo  {publisher} {University of California Press},\ \bibinfo
  {address} {Berkeley},\ \bibinfo {year} {1964})\ \bibinfo {note} {tranlation
  of the 1896--1898 Edition}\BibitemShut {NoStop}%
\bibitem [{\citenamefont {Albert}(2000)}]{albert2000time}%
  \BibitemOpen
  \bibfield  {author} {\bibinfo {author} {\bibfnamefont {D.~Z.}\ \bibnamefont
  {Albert}},\ }\href@noop {} {\emph {\bibinfo {title} {Time and Chance}}}\
  (\bibinfo  {publisher} {Harvard University Press},\ \bibinfo {address}
  {Cambridge, MA},\ \bibinfo {year} {2000})\BibitemShut {NoStop}%
\bibitem [{\citenamefont {Jaynes}(1957{\natexlab{a}})}]{jaynes1957information}%
  \BibitemOpen
  \bibfield  {author} {\bibinfo {author} {\bibfnamefont {E.~T.}\ \bibnamefont
  {Jaynes}},\ }\bibfield  {title} {\bibinfo {title} {Information theory and
  statistical mechanics},\ }\href {https://doi.org/10.1103/PhysRev.106.620}
  {\bibfield  {journal} {\bibinfo  {journal} {Phys. Rev.}\ }\textbf {\bibinfo
  {volume} {106}},\ \bibinfo {pages} {620} (\bibinfo {year}
  {1957}{\natexlab{a}})}\BibitemShut {NoStop}%
\bibitem [{\citenamefont
  {Jaynes}(1957{\natexlab{b}})}]{jaynes1957information2}%
  \BibitemOpen
  \bibfield  {author} {\bibinfo {author} {\bibfnamefont {E.~T.}\ \bibnamefont
  {Jaynes}},\ }\bibfield  {title} {\bibinfo {title} {Information theory and
  statistical mechanics. ii},\ }\href {https://doi.org/10.1103/PhysRev.108.171}
  {\bibfield  {journal} {\bibinfo  {journal} {Phys. Rev.}\ }\textbf {\bibinfo
  {volume} {108}},\ \bibinfo {pages} {171} (\bibinfo {year}
  {1957}{\natexlab{b}})}\BibitemShut {NoStop}%
\bibitem [{\citenamefont {Isihara}(2013)}]{isihara2013statistical}%
  \BibitemOpen
  \bibfield  {author} {\bibinfo {author} {\bibfnamefont {A.}~\bibnamefont
  {Isihara}},\ }\href {https://doi.org/10.1016/C2013-0-10905-1} {\emph
  {\bibinfo {title} {Statistical physics}}}\ (\bibinfo  {publisher} {Academic
  Press},\ \bibinfo {year} {2013})\BibitemShut {NoStop}%
\bibitem [{\citenamefont {Evans}\ \emph {et~al.}(1993)\citenamefont {Evans},
  \citenamefont {Cohen},\ and\ \citenamefont {Morriss}}]{evans1993probability}%
  \BibitemOpen
  \bibfield  {author} {\bibinfo {author} {\bibfnamefont {D.~J.}\ \bibnamefont
  {Evans}}, \bibinfo {author} {\bibfnamefont {E.~G.~D.}\ \bibnamefont
  {Cohen}},\ and\ \bibinfo {author} {\bibfnamefont {G.~P.}\ \bibnamefont
  {Morriss}},\ }\bibfield  {title} {\bibinfo {title} {Probability of second law
  violations in shearing steady states},\ }\href
  {https://doi.org/10.1103/PhysRevLett.71.2401} {\bibfield  {journal} {\bibinfo
   {journal} {Phys. Rev. Lett.}\ }\textbf {\bibinfo {volume} {71}},\ \bibinfo
  {pages} {2401} (\bibinfo {year} {1993})}\BibitemShut {NoStop}%
\bibitem [{\citenamefont {Crooks}(1998)}]{crooks1998nonequilibrium}%
  \BibitemOpen
  \bibfield  {author} {\bibinfo {author} {\bibfnamefont {G.~E.}\ \bibnamefont
  {Crooks}},\ }\bibfield  {title} {\bibinfo {title} {Nonequilibrium
  measurements of free energy differences for microscopically reversible
  markovian systems},\ }\href {https://doi.org/10.1023/A:1023208217925}
  {\bibfield  {journal} {\bibinfo  {journal} {Journal of Statistical Physics}\
  }\textbf {\bibinfo {volume} {90}},\ \bibinfo {pages} {1481} (\bibinfo {year}
  {1998})}\BibitemShut {NoStop}%
\bibitem [{\citenamefont {Wang}\ \emph {et~al.}(2002)\citenamefont {Wang},
  \citenamefont {Sevick}, \citenamefont {Mittag}, \citenamefont {Searles},\
  and\ \citenamefont {Evans}}]{wang2002experimental}%
  \BibitemOpen
  \bibfield  {author} {\bibinfo {author} {\bibfnamefont {G.~M.}\ \bibnamefont
  {Wang}}, \bibinfo {author} {\bibfnamefont {E.~M.}\ \bibnamefont {Sevick}},
  \bibinfo {author} {\bibfnamefont {E.}~\bibnamefont {Mittag}}, \bibinfo
  {author} {\bibfnamefont {D.~J.}\ \bibnamefont {Searles}},\ and\ \bibinfo
  {author} {\bibfnamefont {D.~J.}\ \bibnamefont {Evans}},\ }\bibfield  {title}
  {\bibinfo {title} {Experimental demonstration of violations of the second law
  of thermodynamics for small systems and short time scales},\ }\href
  {https://doi.org/10.1103/PhysRevLett.89.050601} {\bibfield  {journal}
  {\bibinfo  {journal} {Phys. Rev. Lett.}\ }\textbf {\bibinfo {volume} {89}},\
  \bibinfo {pages} {050601} (\bibinfo {year} {2002})}\BibitemShut {NoStop}%
\bibitem [{\citenamefont {Penrose}(2005)}]{penrose2005foundations}%
  \BibitemOpen
  \bibfield  {author} {\bibinfo {author} {\bibfnamefont {O.}~\bibnamefont
  {Penrose}},\ }\href@noop {} {\emph {\bibinfo {title} {Foundations of
  statistical mechanics: a deductive treatment}}}\ (\bibinfo  {publisher}
  {Courier Corporation},\ \bibinfo {year} {2005})\BibitemShut {NoStop}%
\bibitem [{\citenamefont {Grandy}(2008)}]{grandy2008entropy}%
  \BibitemOpen
  \bibfield  {author} {\bibinfo {author} {\bibfnamefont {W.~T.}\ \bibnamefont
  {Grandy}, \bibfnamefont {Jr.}},\ }\href
  {https://doi.org/10.1093/acprof:oso/9780199546176.001.0001} {\emph {\bibinfo
  {title} {{Entropy and the Time Evolution of Macroscopic Systems}}}}\
  (\bibinfo  {publisher} {Oxford University Press},\ \bibinfo {year}
  {2008})\BibitemShut {NoStop}%
\bibitem [{\citenamefont {Parrondo}\ \emph {et~al.}(2015)\citenamefont
  {Parrondo}, \citenamefont {Horowitz},\ and\ \citenamefont
  {Sagawa}}]{Parrondo2015}%
  \BibitemOpen
  \bibfield  {author} {\bibinfo {author} {\bibfnamefont {J.~M.~R.}\
  \bibnamefont {Parrondo}}, \bibinfo {author} {\bibfnamefont {J.~M.}\
  \bibnamefont {Horowitz}},\ and\ \bibinfo {author} {\bibfnamefont
  {T.}~\bibnamefont {Sagawa}},\ }\bibfield  {title} {\bibinfo {title}
  {Thermodynamics of information},\ }\href {https://doi.org/10.1038/nphys3230}
  {\bibfield  {journal} {\bibinfo  {journal} {Nature Physics}\ }\textbf
  {\bibinfo {volume} {11}},\ \bibinfo {pages} {131} (\bibinfo {year}
  {2015})}\BibitemShut {NoStop}%
\bibitem [{\citenamefont {Tasaki}(1998)}]{tasaki1998quantum}%
  \BibitemOpen
  \bibfield  {author} {\bibinfo {author} {\bibfnamefont {H.}~\bibnamefont
  {Tasaki}},\ }\bibfield  {title} {\bibinfo {title} {From quantum dynamics to
  the canonical distribution: General picture and a rigorous example},\ }\href
  {https://doi.org/10.1103/PhysRevLett.80.1373} {\bibfield  {journal} {\bibinfo
   {journal} {Phys. Rev. Lett.}\ }\textbf {\bibinfo {volume} {80}},\ \bibinfo
  {pages} {1373} (\bibinfo {year} {1998})}\BibitemShut {NoStop}%
\bibitem [{\citenamefont {Polkovnikov}(2011)}]{Polkovnikov2011}%
  \BibitemOpen
  \bibfield  {author} {\bibinfo {author} {\bibfnamefont {A.}~\bibnamefont
  {Polkovnikov}},\ }\bibfield  {title} {\bibinfo {title} {Microscopic diagonal
  entropy and its connection to basic thermodynamic relations},\ }\href
  {https://doi.org/https://doi.org/10.1016/j.aop.2010.08.004} {\bibfield
  {journal} {\bibinfo  {journal} {Annals of Physics}\ }\textbf {\bibinfo
  {volume} {326}},\ \bibinfo {pages} {486} (\bibinfo {year}
  {2011})}\BibitemShut {NoStop}%
\bibitem [{\citenamefont {Ikeda}\ \emph {et~al.}(2015)\citenamefont {Ikeda},
  \citenamefont {Sakumichi}, \citenamefont {Polkovnikov},\ and\ \citenamefont
  {Ueda}}]{Ikeda2015}%
  \BibitemOpen
  \bibfield  {author} {\bibinfo {author} {\bibfnamefont {T.~N.}\ \bibnamefont
  {Ikeda}}, \bibinfo {author} {\bibfnamefont {N.}~\bibnamefont {Sakumichi}},
  \bibinfo {author} {\bibfnamefont {A.}~\bibnamefont {Polkovnikov}},\ and\
  \bibinfo {author} {\bibfnamefont {M.}~\bibnamefont {Ueda}},\ }\bibfield
  {title} {\bibinfo {title} {The second law of thermodynamics under unitary
  evolution and external operations},\ }\href
  {https://doi.org/https://doi.org/10.1016/j.aop.2015.01.003} {\bibfield
  {journal} {\bibinfo  {journal} {Annals of Physics}\ }\textbf {\bibinfo
  {volume} {354}},\ \bibinfo {pages} {338} (\bibinfo {year}
  {2015})}\BibitemShut {NoStop}%
\bibitem [{\citenamefont {Reimann}\ and\ \citenamefont
  {Kastner}(2012)}]{Reimann2012}%
  \BibitemOpen
  \bibfield  {author} {\bibinfo {author} {\bibfnamefont {P.}~\bibnamefont
  {Reimann}}\ and\ \bibinfo {author} {\bibfnamefont {M.}~\bibnamefont
  {Kastner}},\ }\bibfield  {title} {\bibinfo {title} {Equilibration of isolated
  macroscopic quantum systems},\ }\href
  {https://doi.org/10.1088/1367-2630/14/4/043020} {\bibfield  {journal}
  {\bibinfo  {journal} {New Journal of Physics}\ }\textbf {\bibinfo {volume}
  {14}},\ \bibinfo {pages} {043020} (\bibinfo {year} {2012})}\BibitemShut
  {NoStop}%
\bibitem [{\citenamefont {Short}\ and\ \citenamefont
  {Farrelly}(2012)}]{Short_2012}%
  \BibitemOpen
  \bibfield  {author} {\bibinfo {author} {\bibfnamefont {A.~J.}\ \bibnamefont
  {Short}}\ and\ \bibinfo {author} {\bibfnamefont {T.~C.}\ \bibnamefont
  {Farrelly}},\ }\bibfield  {title} {\bibinfo {title} {Quantum equilibration in
  finite time},\ }\href {https://doi.org/10.1088/1367-2630/14/1/013063}
  {\bibfield  {journal} {\bibinfo  {journal} {New Journal of Physics}\ }\textbf
  {\bibinfo {volume} {14}},\ \bibinfo {pages} {013063} (\bibinfo {year}
  {2012})}\BibitemShut {NoStop}%
\bibitem [{\citenamefont {Gogolin}\ and\ \citenamefont
  {Eisert}(2016)}]{gogolin2016equilibration}%
  \BibitemOpen
  \bibfield  {author} {\bibinfo {author} {\bibfnamefont {C.}~\bibnamefont
  {Gogolin}}\ and\ \bibinfo {author} {\bibfnamefont {J.}~\bibnamefont
  {Eisert}},\ }\bibfield  {title} {\bibinfo {title} {Equilibration,
  thermalisation, and the emergence of statistical mechanics in closed quantum
  systems},\ }\href {https://doi.org/10.1088/0034-4885/79/5/056001} {\bibfield
  {journal} {\bibinfo  {journal} {Reports on Progress in Physics}\ }\textbf
  {\bibinfo {volume} {79}},\ \bibinfo {pages} {056001} (\bibinfo {year}
  {2016})}\BibitemShut {NoStop}%
\bibitem [{\citenamefont {Lent}(2019)}]{Lent2019}%
  \BibitemOpen
  \bibfield  {author} {\bibinfo {author} {\bibfnamefont {C.~S.}\ \bibnamefont
  {Lent}},\ }\bibfield  {title} {\bibinfo {title} {Quantum operator entropies
  under unitary evolution},\ }\href
  {https://doi.org/10.1103/PhysRevE.100.012101} {\bibfield  {journal} {\bibinfo
   {journal} {Phys. Rev. E}\ }\textbf {\bibinfo {volume} {100}},\ \bibinfo
  {pages} {012101} (\bibinfo {year} {2019})}\BibitemShut {NoStop}%
\bibitem [{\citenamefont {Strasberg}\ and\ \citenamefont
  {Winter}(2021)}]{Strasberg2021}%
  \BibitemOpen
  \bibfield  {author} {\bibinfo {author} {\bibfnamefont {P.}~\bibnamefont
  {Strasberg}}\ and\ \bibinfo {author} {\bibfnamefont {A.}~\bibnamefont
  {Winter}},\ }\bibfield  {title} {\bibinfo {title} {First and second law of
  quantum thermodynamics: A consistent derivation based on a microscopic
  definition of entropy},\ }\href {https://doi.org/10.1103/PRXQuantum.2.030202}
  {\bibfield  {journal} {\bibinfo  {journal} {PRX Quantum}\ }\textbf {\bibinfo
  {volume} {2}},\ \bibinfo {pages} {030202} (\bibinfo {year}
  {2021})}\BibitemShut {NoStop}%
\bibitem [{\citenamefont {Teufel}\ \emph {et~al.}(2023)\citenamefont {Teufel},
  \citenamefont {Tumulka},\ and\ \citenamefont {Vogel}}]{Teufel2023}%
  \BibitemOpen
  \bibfield  {author} {\bibinfo {author} {\bibfnamefont {S.}~\bibnamefont
  {Teufel}}, \bibinfo {author} {\bibfnamefont {R.}~\bibnamefont {Tumulka}},\
  and\ \bibinfo {author} {\bibfnamefont {C.}~\bibnamefont {Vogel}},\ }\bibfield
   {title} {\bibinfo {title} {Time evolution of typical pure states from a
  macroscopic hilbert subspace},\ }\href
  {https://doi.org/10.1007/s10955-023-03074-x} {\bibfield  {journal} {\bibinfo
  {journal} {Journal of Statistical Physics}\ }\textbf {\bibinfo {volume}
  {190}},\ \bibinfo {pages} {69} (\bibinfo {year} {2023})}\BibitemShut
  {NoStop}%
\bibitem [{\citenamefont {Nagasawa}\ \emph {et~al.}(2024)\citenamefont
  {Nagasawa}, \citenamefont {Kato}, \citenamefont {Wakakuwa},\ and\
  \citenamefont {Buscemi}}]{Nagasawa2024}%
  \BibitemOpen
  \bibfield  {author} {\bibinfo {author} {\bibfnamefont {T.}~\bibnamefont
  {Nagasawa}}, \bibinfo {author} {\bibfnamefont {K.}~\bibnamefont {Kato}},
  \bibinfo {author} {\bibfnamefont {E.}~\bibnamefont {Wakakuwa}},\ and\
  \bibinfo {author} {\bibfnamefont {F.}~\bibnamefont {Buscemi}},\ }\bibfield
  {title} {\bibinfo {title} {Generic increase of observational entropy in
  isolated systems},\ }\href {https://doi.org/10.1103/PhysRevResearch.6.043327}
  {\bibfield  {journal} {\bibinfo  {journal} {Phys. Rev. Res.}\ }\textbf
  {\bibinfo {volume} {6}},\ \bibinfo {pages} {043327} (\bibinfo {year}
  {2024})}\BibitemShut {NoStop}%
\bibitem [{\citenamefont {Trotzky}\ \emph {et~al.}(2012)\citenamefont
  {Trotzky}, \citenamefont {Chen}, \citenamefont {Flesch}, \citenamefont
  {McCulloch}, \citenamefont {Schollwöck}, \citenamefont {Eisert},\ and\
  \citenamefont {Bloch}}]{trotzky2012probing}%
  \BibitemOpen
  \bibfield  {author} {\bibinfo {author} {\bibfnamefont {S.}~\bibnamefont
  {Trotzky}}, \bibinfo {author} {\bibfnamefont {Y.-A.}\ \bibnamefont {Chen}},
  \bibinfo {author} {\bibfnamefont {A.}~\bibnamefont {Flesch}}, \bibinfo
  {author} {\bibfnamefont {I.~P.}\ \bibnamefont {McCulloch}}, \bibinfo {author}
  {\bibfnamefont {U.}~\bibnamefont {Schollwöck}}, \bibinfo {author}
  {\bibfnamefont {J.}~\bibnamefont {Eisert}},\ and\ \bibinfo {author}
  {\bibfnamefont {I.}~\bibnamefont {Bloch}},\ }\bibfield  {title} {\bibinfo
  {title} {Probing the relaxation towards equilibrium in an isolated strongly
  correlated one-dimensional bose gas},\ }\href
  {https://doi.org/10.1038/nphys2232} {\bibfield  {journal} {\bibinfo
  {journal} {Nature Physics}\ }\textbf {\bibinfo {volume} {8}},\ \bibinfo
  {pages} {325} (\bibinfo {year} {2012})}\BibitemShut {NoStop}%
\bibitem [{\citenamefont {Langen}\ \emph {et~al.}(2013)\citenamefont {Langen},
  \citenamefont {Geiger}, \citenamefont {Kuhnert}, \citenamefont {Rauer},\ and\
  \citenamefont {Schmiedmayer}}]{langen2013local}%
  \BibitemOpen
  \bibfield  {author} {\bibinfo {author} {\bibfnamefont {T.}~\bibnamefont
  {Langen}}, \bibinfo {author} {\bibfnamefont {R.}~\bibnamefont {Geiger}},
  \bibinfo {author} {\bibfnamefont {M.}~\bibnamefont {Kuhnert}}, \bibinfo
  {author} {\bibfnamefont {B.}~\bibnamefont {Rauer}},\ and\ \bibinfo {author}
  {\bibfnamefont {J.}~\bibnamefont {Schmiedmayer}},\ }\bibfield  {title}
  {\bibinfo {title} {Local emergence of thermal correlations in an isolated
  quantum many-body system},\ }\href {https://doi.org/10.1038/nphys2739}
  {\bibfield  {journal} {\bibinfo  {journal} {Nature Physics}\ }\textbf
  {\bibinfo {volume} {9}},\ \bibinfo {pages} {640} (\bibinfo {year}
  {2013})}\BibitemShut {NoStop}%
\bibitem [{\citenamefont {Langen}\ \emph {et~al.}(2015)\citenamefont {Langen},
  \citenamefont {Geiger},\ and\ \citenamefont {Schmiedmayer}}]{Langen2015}%
  \BibitemOpen
  \bibfield  {author} {\bibinfo {author} {\bibfnamefont {T.}~\bibnamefont
  {Langen}}, \bibinfo {author} {\bibfnamefont {R.}~\bibnamefont {Geiger}},\
  and\ \bibinfo {author} {\bibfnamefont {J.}~\bibnamefont {Schmiedmayer}},\
  }\bibfield  {title} {\bibinfo {title} {Ultracold atoms out of equilibrium},\
  }\href {https://doi.org/10.1146/annurev-conmatphys-031214-014548} {\bibfield
  {journal} {\bibinfo  {journal} {Annu. Rev. Condens. Matter Phys.}\ }\textbf
  {\bibinfo {volume} {6}},\ \bibinfo {pages} {201} (\bibinfo {year}
  {2015})}\BibitemShut {NoStop}%
\bibitem [{\citenamefont {Ingarden}(1976)}]{Ingarden1976}%
  \BibitemOpen
  \bibfield  {author} {\bibinfo {author} {\bibfnamefont {R.~S.}\ \bibnamefont
  {Ingarden}},\ }\bibfield  {title} {\bibinfo {title} {Quantum information
  theory},\ }\href
  {https://doi.org/https://doi.org/10.1016/0034-4877(76)90005-7} {\bibfield
  {journal} {\bibinfo  {journal} {Reports on Mathematical Physics}\ }\textbf
  {\bibinfo {volume} {10}},\ \bibinfo {pages} {43} (\bibinfo {year}
  {1976})}\BibitemShut {NoStop}%
\bibitem [{\citenamefont {Linden}\ \emph {et~al.}(2009)\citenamefont {Linden},
  \citenamefont {Popescu}, \citenamefont {Short},\ and\ \citenamefont
  {Winter}}]{linden2009quantum}%
  \BibitemOpen
  \bibfield  {author} {\bibinfo {author} {\bibfnamefont {N.}~\bibnamefont
  {Linden}}, \bibinfo {author} {\bibfnamefont {S.}~\bibnamefont {Popescu}},
  \bibinfo {author} {\bibfnamefont {A.~J.}\ \bibnamefont {Short}},\ and\
  \bibinfo {author} {\bibfnamefont {A.}~\bibnamefont {Winter}},\ }\bibfield
  {title} {\bibinfo {title} {Quantum mechanical evolution towards thermal
  equilibrium},\ }\href {https://doi.org/10.1103/PhysRevE.79.061103} {\bibfield
   {journal} {\bibinfo  {journal} {Phys. Rev. E}\ }\textbf {\bibinfo {volume}
  {79}},\ \bibinfo {pages} {061103} (\bibinfo {year} {2009})}\BibitemShut
  {NoStop}%
\bibitem [{\citenamefont {Reimann}(2008)}]{reimann2008foundation}%
  \BibitemOpen
  \bibfield  {author} {\bibinfo {author} {\bibfnamefont {P.}~\bibnamefont
  {Reimann}},\ }\bibfield  {title} {\bibinfo {title} {Foundation of statistical
  mechanics under experimentally realistic conditions},\ }\href
  {https://doi.org/10.1103/PhysRevLett.101.190403} {\bibfield  {journal}
  {\bibinfo  {journal} {Phys. Rev. Lett.}\ }\textbf {\bibinfo {volume} {101}},\
  \bibinfo {pages} {190403} (\bibinfo {year} {2008})}\BibitemShut {NoStop}%
\bibitem [{\citenamefont {Jensen}\ and\ \citenamefont
  {Shankar}(1985)}]{Jensen1985}%
  \BibitemOpen
  \bibfield  {author} {\bibinfo {author} {\bibfnamefont {R.~V.}\ \bibnamefont
  {Jensen}}\ and\ \bibinfo {author} {\bibfnamefont {R.}~\bibnamefont
  {Shankar}},\ }\bibfield  {title} {\bibinfo {title} {Statistical behavior in
  deterministic quantum systems with few degrees of freedom},\ }\href
  {https://doi.org/10.1103/PhysRevLett.54.1879} {\bibfield  {journal} {\bibinfo
   {journal} {Phys. Rev. Lett.}\ }\textbf {\bibinfo {volume} {54}},\ \bibinfo
  {pages} {1879} (\bibinfo {year} {1985})}\BibitemShut {NoStop}%
\bibitem [{\citenamefont {Reimann}\ and\ \citenamefont
  {Evstigneev}(2013)}]{Reimann2013}%
  \BibitemOpen
  \bibfield  {author} {\bibinfo {author} {\bibfnamefont {P.}~\bibnamefont
  {Reimann}}\ and\ \bibinfo {author} {\bibfnamefont {M.}~\bibnamefont
  {Evstigneev}},\ }\bibfield  {title} {\bibinfo {title} {Quantum versus
  classical foundation of statistical mechanics under experimentally realistic
  conditions},\ }\href {https://doi.org/10.1103/PhysRevE.88.052114} {\bibfield
  {journal} {\bibinfo  {journal} {Phys. Rev. E}\ }\textbf {\bibinfo {volume}
  {88}},\ \bibinfo {pages} {052114} (\bibinfo {year} {2013})}\BibitemShut
  {NoStop}%
\bibitem [{\citenamefont {Gogolin}\ \emph {et~al.}(2011)\citenamefont
  {Gogolin}, \citenamefont {M\"uller},\ and\ \citenamefont
  {Eisert}}]{gogolin2011absence}%
  \BibitemOpen
  \bibfield  {author} {\bibinfo {author} {\bibfnamefont {C.}~\bibnamefont
  {Gogolin}}, \bibinfo {author} {\bibfnamefont {M.~P.}\ \bibnamefont
  {M\"uller}},\ and\ \bibinfo {author} {\bibfnamefont {J.}~\bibnamefont
  {Eisert}},\ }\bibfield  {title} {\bibinfo {title} {Absence of thermalization
  in nonintegrable systems},\ }\href
  {https://doi.org/10.1103/PhysRevLett.106.040401} {\bibfield  {journal}
  {\bibinfo  {journal} {Phys. Rev. Lett.}\ }\textbf {\bibinfo {volume} {106}},\
  \bibinfo {pages} {040401} (\bibinfo {year} {2011})}\BibitemShut {NoStop}%
\bibitem [{\citenamefont {Yukalov}(2011)}]{Yukalov2011}%
  \BibitemOpen
  \bibfield  {author} {\bibinfo {author} {\bibfnamefont {V.~I.}\ \bibnamefont
  {Yukalov}},\ }\bibfield  {title} {\bibinfo {title} {Equilibration and
  thermalization in finite quantum systems},\ }\href
  {https://doi.org/10.1002/lapl.201110002} {\bibfield  {journal} {\bibinfo
  {journal} {Laser Physics Letters}\ }\textbf {\bibinfo {volume} {8}},\
  \bibinfo {pages} {485} (\bibinfo {year} {2011})}\BibitemShut {NoStop}%
\bibitem [{\citenamefont {\ifmmode~\check{S}\else \v{S}\fi{}afr\'anek}\ \emph
  {et~al.}(2019)\citenamefont {\ifmmode~\check{S}\else \v{S}\fi{}afr\'anek},
  \citenamefont {Deutsch},\ and\ \citenamefont {Aguirre}}]{Safranek2019}%
  \BibitemOpen
  \bibfield  {author} {\bibinfo {author} {\bibfnamefont {D.}~\bibnamefont
  {\ifmmode~\check{S}\else \v{S}\fi{}afr\'anek}}, \bibinfo {author}
  {\bibfnamefont {J.~M.}\ \bibnamefont {Deutsch}},\ and\ \bibinfo {author}
  {\bibfnamefont {A.}~\bibnamefont {Aguirre}},\ }\bibfield  {title} {\bibinfo
  {title} {Quantum coarse-grained entropy and thermodynamics},\ }\href
  {https://doi.org/10.1103/PhysRevA.99.010101} {\bibfield  {journal} {\bibinfo
  {journal} {Phys. Rev. A}\ }\textbf {\bibinfo {volume} {99}},\ \bibinfo
  {pages} {010101} (\bibinfo {year} {2019})}\BibitemShut {NoStop}%
\bibitem [{\citenamefont {Buscemi}\ \emph {et~al.}(2023)\citenamefont
  {Buscemi}, \citenamefont {Schindler},\ and\ \citenamefont
  {Šafránek}}]{Buscemi2023}%
  \BibitemOpen
  \bibfield  {author} {\bibinfo {author} {\bibfnamefont {F.}~\bibnamefont
  {Buscemi}}, \bibinfo {author} {\bibfnamefont {J.}~\bibnamefont {Schindler}},\
  and\ \bibinfo {author} {\bibfnamefont {D.}~\bibnamefont {Šafránek}},\
  }\bibfield  {title} {\bibinfo {title} {Observational entropy, coarse-grained
  states, and the {Petz} recovery map: information-theoretic properties and
  bounds},\ }\href {https://doi.org/10.1088/1367-2630/accd11} {\bibfield
  {journal} {\bibinfo  {journal} {New Journal of Physics}\ }\textbf {\bibinfo
  {volume} {25}},\ \bibinfo {pages} {053002} (\bibinfo {year}
  {2023})}\BibitemShut {NoStop}%
\bibitem [{\citenamefont {Strasberg}\ and\ \citenamefont
  {Schindler}(2024)}]{Strasberg2024}%
  \BibitemOpen
  \bibfield  {author} {\bibinfo {author} {\bibfnamefont {P.}~\bibnamefont
  {Strasberg}}\ and\ \bibinfo {author} {\bibfnamefont {J.}~\bibnamefont
  {Schindler}},\ }\bibfield  {title} {\bibinfo {title} {{Comparative
  microscopic study of entropies and their production}},\ }\href
  {https://doi.org/10.21468/SciPostPhys.17.5.143} {\bibfield  {journal}
  {\bibinfo  {journal} {SciPost Phys.}\ }\textbf {\bibinfo {volume} {17}},\
  \bibinfo {pages} {143} (\bibinfo {year} {2024})}\BibitemShut {NoStop}%
\bibitem [{\citenamefont {Zhang}(2007)}]{Zhang2007}%
  \BibitemOpen
  \bibfield  {author} {\bibinfo {author} {\bibfnamefont {Z.}~\bibnamefont
  {Zhang}},\ }\bibfield  {title} {\bibinfo {title} {Estimating mutual
  information via kolmogorov distance},\ }\href
  {https://doi.org/10.1109/tit.2007.903122} {\bibfield  {journal} {\bibinfo
  {journal} {{IEEE} Transactions on Information Theory}\ }\textbf {\bibinfo
  {volume} {53}},\ \bibinfo {pages} {3280} (\bibinfo {year}
  {2007})}\BibitemShut {NoStop}%
\bibitem [{\citenamefont {Kim}\ and\ \citenamefont {Huse}(2013)}]{13KimHuse}%
  \BibitemOpen
  \bibfield  {author} {\bibinfo {author} {\bibfnamefont {H.}~\bibnamefont
  {Kim}}\ and\ \bibinfo {author} {\bibfnamefont {D.~A.}\ \bibnamefont {Huse}},\
  }\bibfield  {title} {\bibinfo {title} {Ballistic spreading of entanglement in
  a diffusive nonintegrable system},\ }\href
  {https://doi.org/10.1103/PhysRevLett.111.127205} {\bibfield  {journal}
  {\bibinfo  {journal} {Phys. Rev. Lett.}\ }\textbf {\bibinfo {volume} {111}},\
  \bibinfo {pages} {127205} (\bibinfo {year} {2013})}\BibitemShut {NoStop}%
\bibitem [{\citenamefont {Johansson}\ \emph {et~al.}(2012)\citenamefont
  {Johansson}, \citenamefont {Nation},\ and\ \citenamefont
  {Nori}}]{johansson2012qutip}%
  \BibitemOpen
  \bibfield  {author} {\bibinfo {author} {\bibfnamefont {J.~R.}\ \bibnamefont
  {Johansson}}, \bibinfo {author} {\bibfnamefont {P.~D.}\ \bibnamefont
  {Nation}},\ and\ \bibinfo {author} {\bibfnamefont {F.}~\bibnamefont {Nori}},\
  }\bibfield  {title} {\bibinfo {title} {Qutip: An open-source python framework
  for the dynamics of open quantum systems},\ }\href
  {https://doi.org/10.1016/j.cpc.2012.02.021} {\bibfield  {journal} {\bibinfo
  {journal} {Computer Physics Communications}\ }\textbf {\bibinfo {volume}
  {183}},\ \bibinfo {pages} {1760} (\bibinfo {year} {2012})}\BibitemShut
  {NoStop}%
\bibitem [{\citenamefont {Lebowitz}(1993)}]{lebowitz1993macroscopic}%
  \BibitemOpen
  \bibfield  {author} {\bibinfo {author} {\bibfnamefont {J.~L.}\ \bibnamefont
  {Lebowitz}},\ }\bibfield  {title} {\bibinfo {title} {Macroscopic laws,
  microscopic dynamics, time's arrow and {B}oltzmann's entropy},\ }\href
  {https://doi.org/https://doi.org/10.1016/0378-4371(93)90336-3} {\bibfield
  {journal} {\bibinfo  {journal} {Physica A: Statistical Mechanics and its
  Applications}\ }\textbf {\bibinfo {volume} {194}},\ \bibinfo {pages} {1}
  (\bibinfo {year} {1993})}\BibitemShut {NoStop}%
\bibitem [{\citenamefont {Bocchieri}\ and\ \citenamefont
  {Loinger}(1957)}]{Bocchieri1957}%
  \BibitemOpen
  \bibfield  {author} {\bibinfo {author} {\bibfnamefont {P.}~\bibnamefont
  {Bocchieri}}\ and\ \bibinfo {author} {\bibfnamefont {A.}~\bibnamefont
  {Loinger}},\ }\bibfield  {title} {\bibinfo {title} {Quantum recurrence
  theorem},\ }\href {https://doi.org/10.1103/PhysRev.107.337} {\bibfield
  {journal} {\bibinfo  {journal} {Phys. Rev.}\ }\textbf {\bibinfo {volume}
  {107}},\ \bibinfo {pages} {337} (\bibinfo {year} {1957})}\BibitemShut
  {NoStop}%
\bibitem [{\citenamefont {{Figueroa-Romero}}(2021)}]{Romero_2021}%
  \BibitemOpen
  \bibfield  {author} {\bibinfo {author} {\bibfnamefont {P.}~\bibnamefont
  {{Figueroa-Romero}}},\ }\bibfield  {title} {\bibinfo {title} {Equilibration
  and {{Typicality}} in {{Quantum Processes}}},\ }\href
  {https://doi.org/10.26180/13684795.v1} {\bibfield  {journal} {\bibinfo
  {journal} {arXiv:2102.02289 [quant-ph]}\ } (\bibinfo {year}
  {2021})}\BibitemShut {NoStop}%
\bibitem [{\citenamefont {Goldstein}\ \emph {et~al.}(2006)\citenamefont
  {Goldstein}, \citenamefont {Lebowitz}, \citenamefont {Tumulka},\ and\
  \citenamefont {Zangh\`{\i}}}]{Goldstein2006}%
  \BibitemOpen
  \bibfield  {author} {\bibinfo {author} {\bibfnamefont {S.}~\bibnamefont
  {Goldstein}}, \bibinfo {author} {\bibfnamefont {J.~L.}\ \bibnamefont
  {Lebowitz}}, \bibinfo {author} {\bibfnamefont {R.}~\bibnamefont {Tumulka}},\
  and\ \bibinfo {author} {\bibfnamefont {N.}~\bibnamefont {Zangh\`{\i}}},\
  }\bibfield  {title} {\bibinfo {title} {Canonical typicality},\ }\href
  {https://doi.org/10.1103/PhysRevLett.96.050403} {\bibfield  {journal}
  {\bibinfo  {journal} {Phys. Rev. Lett.}\ }\textbf {\bibinfo {volume} {96}},\
  \bibinfo {pages} {050403} (\bibinfo {year} {2006})}\BibitemShut {NoStop}%
\bibitem [{\citenamefont {Bartsch}\ and\ \citenamefont
  {Gemmer}(2009)}]{Bartsch2009}%
  \BibitemOpen
  \bibfield  {author} {\bibinfo {author} {\bibfnamefont {C.}~\bibnamefont
  {Bartsch}}\ and\ \bibinfo {author} {\bibfnamefont {J.}~\bibnamefont
  {Gemmer}},\ }\bibfield  {title} {\bibinfo {title} {Dynamical typicality of
  quantum expectation values},\ }\href
  {https://doi.org/10.1103/PhysRevLett.102.110403} {\bibfield  {journal}
  {\bibinfo  {journal} {Phys. Rev. Lett.}\ }\textbf {\bibinfo {volume} {102}},\
  \bibinfo {pages} {110403} (\bibinfo {year} {2009})}\BibitemShut {NoStop}%
\bibitem [{\citenamefont {Reimann}(2018)}]{reimann2018dynamical}%
  \BibitemOpen
  \bibfield  {author} {\bibinfo {author} {\bibfnamefont {P.}~\bibnamefont
  {Reimann}},\ }\bibfield  {title} {\bibinfo {title} {Dynamical typicality of
  isolated many-body quantum systems},\ }\href
  {https://doi.org/10.1103/PhysRevE.97.062129} {\bibfield  {journal} {\bibinfo
  {journal} {Phys. Rev. E}\ }\textbf {\bibinfo {volume} {97}},\ \bibinfo
  {pages} {062129} (\bibinfo {year} {2018})}\BibitemShut {NoStop}%
\bibitem [{\citenamefont {Rovelli}(2015)}]{rovelli2013relative}%
  \BibitemOpen
  \bibfield  {author} {\bibinfo {author} {\bibfnamefont {C.}~\bibnamefont
  {Rovelli}},\ }\bibinfo {title} {Relative information at the foundation of
  physics},\ in\ \href {https://doi.org/10.1007/978-3-319-12946-4_7} {\emph
  {\bibinfo {booktitle} {It From Bit or Bit From It? On Physics and
  Information}}},\ \bibinfo {editor} {edited by\ \bibinfo {editor}
  {\bibfnamefont {A.}~\bibnamefont {Aguirre}}, \bibinfo {editor} {\bibfnamefont
  {B.}~\bibnamefont {Foster}},\ and\ \bibinfo {editor} {\bibfnamefont
  {Z.}~\bibnamefont {Merali}}}\ (\bibinfo  {publisher} {Springer International
  Publishing},\ \bibinfo {address} {Cham},\ \bibinfo {year} {2015})\ pp.\
  \bibinfo {pages} {79--86}\BibitemShut {NoStop}%
\bibitem [{\citenamefont {Šafránek}\ \emph {et~al.}(2021)\citenamefont
  {Šafránek}, \citenamefont {Aguirre}, \citenamefont {Schindler},\ and\
  \citenamefont {Deutsch}}]{Safranek2021}%
  \BibitemOpen
  \bibfield  {author} {\bibinfo {author} {\bibfnamefont {D.}~\bibnamefont
  {Šafránek}}, \bibinfo {author} {\bibfnamefont {A.}~\bibnamefont {Aguirre}},
  \bibinfo {author} {\bibfnamefont {J.}~\bibnamefont {Schindler}},\ and\
  \bibinfo {author} {\bibfnamefont {J.~M.}\ \bibnamefont {Deutsch}},\
  }\bibfield  {title} {\bibinfo {title} {A brief introduction to observational
  entropy},\ }\href {https://doi.org/10.1007/s10701-021-00498-x} {\bibfield
  {journal} {\bibinfo  {journal} {Foundations of Physics}\ }\textbf {\bibinfo
  {volume} {51}},\ \bibinfo {pages} {101} (\bibinfo {year} {2021})}\BibitemShut
  {NoStop}%
\bibitem [{\citenamefont {Scarpa}\ \emph {et~al.}(2023)\citenamefont {Scarpa},
  \citenamefont {Alhajri}, \citenamefont {Vedral},\ and\ \citenamefont
  {Anza}}]{Scarpa2023}%
  \BibitemOpen
  \bibfield  {author} {\bibinfo {author} {\bibfnamefont {L.}~\bibnamefont
  {Scarpa}}, \bibinfo {author} {\bibfnamefont {A.}~\bibnamefont {Alhajri}},
  \bibinfo {author} {\bibfnamefont {V.}~\bibnamefont {Vedral}},\ and\ \bibinfo
  {author} {\bibfnamefont {F.}~\bibnamefont {Anza}},\ }\bibfield  {title}
  {\bibinfo {title} {Observable statistical mechanics},\ }\Eprint
  {https://arxiv.org/abs/2309.15173} {arXiv:2309.15173 [quant-ph]}  (\bibinfo
  {year} {2023})\BibitemShut {NoStop}%
\bibitem [{\citenamefont {Tsallis}(1988)}]{Tsallis1988}%
  \BibitemOpen
  \bibfield  {author} {\bibinfo {author} {\bibfnamefont {C.}~\bibnamefont
  {Tsallis}},\ }\bibfield  {title} {\bibinfo {title} {Possible generalization
  of {Boltzmann-Gibbs} statistics},\ }\href
  {https://doi.org/10.1007/BF01016429} {\bibfield  {journal} {\bibinfo
  {journal} {Journal of Statistical Physics}\ }\textbf {\bibinfo {volume}
  {52}},\ \bibinfo {pages} {479} (\bibinfo {year} {1988})}\BibitemShut
  {NoStop}%
\bibitem [{\citenamefont {Hu}\ and\ \citenamefont {Ye}(2006)}]{Hu2006}%
  \BibitemOpen
  \bibfield  {author} {\bibinfo {author} {\bibfnamefont {X.}~\bibnamefont
  {Hu}}\ and\ \bibinfo {author} {\bibfnamefont {Z.}~\bibnamefont {Ye}},\
  }\bibfield  {title} {\bibinfo {title} {{Generalized quantum entropy}},\
  }\href {https://doi.org/10.1063/1.2165794} {\bibfield  {journal} {\bibinfo
  {journal} {Journal of Mathematical Physics}\ }\textbf {\bibinfo {volume}
  {47}},\ \bibinfo {pages} {023502} (\bibinfo {year} {2006})}\BibitemShut
  {NoStop}%
\bibitem [{\citenamefont {Vershynina}(2019)}]{Vershynina2019}%
  \BibitemOpen
  \bibfield  {author} {\bibinfo {author} {\bibfnamefont {A.}~\bibnamefont
  {Vershynina}},\ }\bibfield  {title} {\bibinfo {title} {{Entanglement rates
  for Rényi, Tsallis, and other entropies}},\ }\href
  {https://doi.org/10.1063/1.5037802} {\bibfield  {journal} {\bibinfo
  {journal} {Journal of Mathematical Physics}\ }\textbf {\bibinfo {volume}
  {60}},\ \bibinfo {pages} {022201} (\bibinfo {year} {2019})}\BibitemShut
  {NoStop}%
\bibitem [{\citenamefont {Rastegin}(2013)}]{Rastegin2013}%
  \BibitemOpen
  \bibfield  {author} {\bibinfo {author} {\bibfnamefont {A.~E.}\ \bibnamefont
  {Rastegin}},\ }\bibfield  {title} {\bibinfo {title} {{Bounds of the Pinsker
  and Fannes Types on the Tsallis Relative Entropy}},\ }\href
  {https://doi.org/10.1007/s11040-013-9128-z} {\bibfield  {journal} {\bibinfo
  {journal} {Mathematical Physics, Analysis and Geometry}\ }\textbf {\bibinfo
  {volume} {16}},\ \bibinfo {pages} {213} (\bibinfo {year} {2013})}\BibitemShut
  {NoStop}%
\bibitem [{\citenamefont {Clausius}(1854)}]{Clausius1854}%
  \BibitemOpen
  \bibfield  {author} {\bibinfo {author} {\bibfnamefont {R.}~\bibnamefont
  {Clausius}},\ }\bibfield  {title} {\bibinfo {title} {{U}eber eine veränderte
  {F}orm des zweiten {H}auptsatzes der mechanischen {W}ärmetheorie},\ }\href
  {https://doi.org/https://doi.org/10.1002/andp.18541691202} {\bibfield
  {journal} {\bibinfo  {journal} {Annalen der Physik}\ }\textbf {\bibinfo
  {volume} {169}},\ \bibinfo {pages} {481} (\bibinfo {year}
  {1854})}\BibitemShut {NoStop}%
\bibitem [{\citenamefont {Planck}(1897)}]{Planck1897}%
  \BibitemOpen
  \bibfield  {author} {\bibinfo {author} {\bibfnamefont {M.}~\bibnamefont
  {Planck}},\ }\href@noop {} {\emph {\bibinfo {title} {Vorlesungen über
  Thermodynamik}}}\ (\bibinfo  {publisher} {Veit \& Comp.},\ \bibinfo {year}
  {1897})\BibitemShut {NoStop}%
\bibitem [{\citenamefont {Popescu}\ \emph {et~al.}(2006)\citenamefont
  {Popescu}, \citenamefont {Short},\ and\ \citenamefont
  {Winter}}]{Popescu2006}%
  \BibitemOpen
  \bibfield  {author} {\bibinfo {author} {\bibfnamefont {S.}~\bibnamefont
  {Popescu}}, \bibinfo {author} {\bibfnamefont {A.~J.}\ \bibnamefont {Short}},\
  and\ \bibinfo {author} {\bibfnamefont {A.}~\bibnamefont {Winter}},\
  }\bibfield  {title} {\bibinfo {title} {Entanglement and the foundations of
  statistical mechanics},\ }\href {https://doi.org/10.1038/nphys444} {\bibfield
   {journal} {\bibinfo  {journal} {Nature Physics}\ }\textbf {\bibinfo {volume}
  {2}},\ \bibinfo {pages} {754} (\bibinfo {year} {2006})}\BibitemShut {NoStop}%
\bibitem [{\citenamefont {Iyoda}\ \emph {et~al.}(2018)\citenamefont {Iyoda},
  \citenamefont {Katsura},\ and\ \citenamefont {Sagawa}}]{Iyoda2018}%
  \BibitemOpen
  \bibfield  {author} {\bibinfo {author} {\bibfnamefont {E.}~\bibnamefont
  {Iyoda}}, \bibinfo {author} {\bibfnamefont {H.}~\bibnamefont {Katsura}},\
  and\ \bibinfo {author} {\bibfnamefont {T.}~\bibnamefont {Sagawa}},\
  }\bibfield  {title} {\bibinfo {title} {Effective dimension, level statistics,
  and integrability of {Sachdev-Ye-Kitaev-like} models},\ }\href
  {https://doi.org/10.1103/PhysRevD.98.086020} {\bibfield  {journal} {\bibinfo
  {journal} {Phys. Rev. D}\ }\textbf {\bibinfo {volume} {98}},\ \bibinfo
  {pages} {086020} (\bibinfo {year} {2018})}\BibitemShut {NoStop}%
\bibitem [{\citenamefont {dos Prazeres}\ and\ \citenamefont
  {de~Oliveira}(2024)}]{Prazeres2023}%
  \BibitemOpen
  \bibfield  {author} {\bibinfo {author} {\bibfnamefont {L.~F.}\ \bibnamefont
  {dos Prazeres}}\ and\ \bibinfo {author} {\bibfnamefont {T.~R.}\ \bibnamefont
  {de~Oliveira}},\ }\bibfield  {title} {\bibinfo {title} {Smooth crossover
  between weak and strong thermalization using rigorous bounds on equilibration
  of isolated systems},\ }\href {https://doi.org/10.1016/j.physa.2024.130065}
  {\bibfield  {journal} {\bibinfo  {journal} {Physica A: Statistical Mechanics
  and its Applications}\ }\textbf {\bibinfo {volume} {653}},\ \bibinfo {pages}
  {130065} (\bibinfo {year} {2024})}\BibitemShut {NoStop}%
\bibitem [{\citenamefont {Klenke}(2020)}]{Klenke2020}%
  \BibitemOpen
  \bibfield  {author} {\bibinfo {author} {\bibfnamefont {A.}~\bibnamefont
  {Klenke}},\ }\href {https://doi.org/10.1007/978-3-030-56402-5} {\emph
  {\bibinfo {title} {{P}robability {T}heory}}}\ (\bibinfo  {publisher}
  {Springer International Publishing},\ \bibinfo {year} {2020})\BibitemShut
  {NoStop}%
\bibitem [{\citenamefont {Nielsen}\ and\ \citenamefont
  {Chuang}(2010)}]{Nielsen2010}%
  \BibitemOpen
  \bibfield  {author} {\bibinfo {author} {\bibfnamefont {M.~A.}\ \bibnamefont
  {Nielsen}}\ and\ \bibinfo {author} {\bibfnamefont {I.~L.}\ \bibnamefont
  {Chuang}},\ }\href {https://doi.org/10.1017/CBO9780511976667} {\emph
  {\bibinfo {title} {Quantum {C}omputation and {Q}uantum {I}nformation}}},\
  \bibinfo {edition} {10th}\ ed.\ (\bibinfo  {publisher} {Cambridge University
  Press},\ \bibinfo {year} {2010})\BibitemShut {NoStop}%
\bibitem [{\citenamefont {Fannes}(1973)}]{Fannes1973}%
  \BibitemOpen
  \bibfield  {author} {\bibinfo {author} {\bibfnamefont {M.}~\bibnamefont
  {Fannes}},\ }\bibfield  {title} {\bibinfo {title} {A continuity property of
  the entropy density for spin lattice systems},\ }\href
  {https://doi.org/10.1007/BF01646490} {\bibfield  {journal} {\bibinfo
  {journal} {Communications in Mathematical Physics}\ }\textbf {\bibinfo
  {volume} {31}},\ \bibinfo {pages} {291} (\bibinfo {year} {1973})}\BibitemShut
  {NoStop}%
\bibitem [{\citenamefont {Audenaert}(2007)}]{Audenaert2007}%
  \BibitemOpen
  \bibfield  {author} {\bibinfo {author} {\bibfnamefont {K.~M.~R.}\
  \bibnamefont {Audenaert}},\ }\bibfield  {title} {\bibinfo {title} {A sharp
  continuity estimate for the von {N}eumann entropy},\ }\href
  {https://doi.org/10.1088/1751-8113/40/28/S18} {\bibfield  {journal} {\bibinfo
   {journal} {Journal of Physics A: Mathematical and Theoretical}\ }\textbf
  {\bibinfo {volume} {40}},\ \bibinfo {pages} {8127} (\bibinfo {year}
  {2007})}\BibitemShut {NoStop}%
\bibitem [{\citenamefont {Coles}\ \emph {et~al.}(2019)\citenamefont {Coles},
  \citenamefont {Cerezo},\ and\ \citenamefont {Cincio}}]{Coles2019}%
  \BibitemOpen
  \bibfield  {author} {\bibinfo {author} {\bibfnamefont {P.~J.}\ \bibnamefont
  {Coles}}, \bibinfo {author} {\bibfnamefont {M.}~\bibnamefont {Cerezo}},\ and\
  \bibinfo {author} {\bibfnamefont {L.}~\bibnamefont {Cincio}},\ }\bibfield
  {title} {\bibinfo {title} {Strong bound between trace distance and
  {Hilbert-Schmidt} distance for low-rank states},\ }\href
  {https://doi.org/10.1103/PhysRevA.100.022103} {\bibfield  {journal} {\bibinfo
   {journal} {Phys. Rev. A}\ }\textbf {\bibinfo {volume} {100}},\ \bibinfo
  {pages} {022103} (\bibinfo {year} {2019})}\BibitemShut {NoStop}%
\bibitem [{\citenamefont {Schindler}\ and\ \citenamefont
  {Winter}(2023)}]{schindlerwinter2023}%
  \BibitemOpen
  \bibfield  {author} {\bibinfo {author} {\bibfnamefont {J.}~\bibnamefont
  {Schindler}}\ and\ \bibinfo {author} {\bibfnamefont {A.}~\bibnamefont
  {Winter}},\ }\bibfield  {title} {\bibinfo {title} {Continuity bounds on
  observational entropy and measured relative entropies},\ }\href
  {https://doi.org/10.1063/5.0147294} {\bibfield  {journal} {\bibinfo
  {journal} {J. Math. Phys.}\ }\textbf {\bibinfo {volume} {64}},\ \bibinfo
  {pages} {092201} (\bibinfo {year} {2023})}\BibitemShut {NoStop}%
\bibitem [{\citenamefont {Landau}\ and\ \citenamefont
  {Lifshitz}(1980)}]{Landau1980}%
  \BibitemOpen
  \bibfield  {author} {\bibinfo {author} {\bibfnamefont {L.~D.}\ \bibnamefont
  {Landau}}\ and\ \bibinfo {author} {\bibfnamefont {E.~M.}\ \bibnamefont
  {Lifshitz}},\ }\href {https://doi.org/10.1016/C2009-0-24487-4} {\emph
  {\bibinfo {title} {Statistical Physics}}},\ \bibinfo {edition} {3rd}\ ed.\
  (\bibinfo  {publisher} {Butterworth-Heinemann},\ \bibinfo {address}
  {Oxford},\ \bibinfo {year} {1980})\BibitemShut {NoStop}%
\end{thebibliography}%

\end{document}